\def\fontsettingup{2} 
\newtheorem{theorem}{Theorem}
\newtheorem*{claim*}{Claim}
\newtheorem{fact}[theorem]{Fact}
\newtheorem{lemma}[theorem]{Lemma}
\newtheorem{corollary}[theorem]{Corollary}
\theoremstyle{definition}
\newtheorem{definition}[theorem]{Definition}
\newtheorem{remark}[theorem]{Remark}
\newtheorem*{remark*}{Remark}
  \def\*#1{\mathbf{#1}} 
  \def\+#1{\mathcal{#1}} 
  \def\-#1{\mathrm{#1}} 
  \def\^#1{\mathbb{#1}} 
  \def\!#1{\mathfrak{#1}} 
  \def\*#1{\boldsymbol{#1}} 
  \def\+#1{\mathcal{#1}} 
  \def\-#1{\mathrm{#1}} 
  \def\^#1{\mathbb{#1}} 
  \def\!#1{\mathfrak{#1}} 
\def\oPr{\mathbf{Pr}}
\renewcommand{\Pr}[2][]{ \ifthenelse{\isempty{#1}}
  {\oPr\left[#2\right]}
  {\oPr_{#1}\left[#2\right]} } 
\def\oE{\mathbb{E}}
\newcommand{\E}[2][]{ \ifthenelse{\isempty{#1}}
  {\oE\left[#2\right]}
  {\oE_{#1}\left[#2\right]} }
\DeclareMathOperator*{\oVar}{\mathbf{Var}}
\newcommand{\Var}[2][]{ \ifthenelse{\isempty{#1}}
  {\oVar\left[#2\right]}
  {\oVar_{#1}\left[#2\right]} }
\def\oEnt{\mathbf{Ent}}
\newcommand{\Ent}[2][]{ \ifthenelse{\isempty{#1}}
  {\oEnt\left[#2\right]}
  {\oEnt_{#1}\left[#2\right]} }
\renewcommand{\epsilon}{\varepsilon}
\renewcommand{\emptyset}{\varnothing}
\newcommand{\paren}[1]{(#1)}
\newcommand{\Paren}[1]{\left(#1\right)}
\let\epsilon=\varepsilon
\title{A Generalized Binary Tree Mechanism for Differentially Private Approximation of All-Pair Distances}
\author{
Michael Dinitz\thanks{Department of Computer Science, Johns Hopkins University, USA. \textnormal{E-mail: \url{mdinitz@cs.jhu.edu}}. Supported in part by NSF awards CCF-1909111 and CCF-2228995.}
\and
Chenglin Fan\thanks{Department of Computer Science and Engineering, Seoul National University, Republic of Korea. \textnormal{E-mail: \url{fanchenglin@gmail.com}}.}
\and
Jingcheng Liu\thanks{State Key Laboratory for Novel Software Technology and New Cornerstone Science Laboratory, Nanjing University, China. \textnormal{E-mail: \url{liu@nju.edu.cn}, \url{zou.zongrui@smail.nju.edu.cn}}. JL and ZZ have been supported by National Science Foundation of China under Grant No. 62472212 and the New Cornerstone Science Foundation.}
\and 
Jalaj Upadhyay\thanks{Management Science \& Information Systems Department, Rutgers University, USA. \textnormal{E-mail: \url{jalaj.upadhyay@rutgers.edu}}. Research supported by the Rutgers Decanal Grant no. 302918 and an unrestricted gift from Google.}
\and 
Zongrui Zou\footnotemark[3]
}
\date{}
\begin{document}
\pagenumbering{arabic}
\maketitle

\begin{abstract}
We study the problem of approximating all-pair distances in a weighted undirected graph with differential privacy, introduced by Sealfon [Sea16]. Given a publicly known undirected graph, we treat the weights of edges as sensitive information, and two graphs are neighbors if their edge weights differ in one edge by at most one. We obtain efficient algorithms with significantly improved bounds on a broad class of graphs which we refer to as \textit{recursively separable}. In particular, for any \( n \)-vertex $K_h$-minor-free graph, our algorithm achieve an additive error of \( \widetilde{O}(h(nW)^{1/3} ) \), where \( W \) represents the maximum edge weight; For grid graphs, the same algorithmic scheme achieve additive error of \( \widetilde{O}(n^{1/4}\sqrt{W}) \).

Our approach can be seen as a generalization of the celebrated binary tree mechanism for range queries, as releasing range queries is equivalent to computing all-pair distances on a path graph. In essence, our approach is based on generalizing the binary tree mechanism to graphs that are \textit{recursively separable}.

\end{abstract}

\section{Introduction}
{
Graph is one of the central data-structure that is used to encode variety of datasets. As a result, analysis on graph finds applications in wide areas in not just computer science (networking, search engine optimization, bioinformatics, machine learning, etc), but also in biology, chemistry, transportation and logistics, supply-chain management, operation research, and even in everyday applications like finding the shortest route on a map using GPS navigation. Given these graphs encodes sensitive information, the analysis of sensitive \emph{graph} data while preserving privacy has attracted growing attention in recent years. One of the main privacy notions that has been proposed and used extensively in such graph analysis is {\em differential privacy}, with successful applications including cut approximation \cite{gupta2012iterative, eliavs2020differentially, DBLP:conf/nips/DalirrooyfardMN23,liu2024optimal, chandra2024differentially}, spectral approximation \cite{blocki2012johnson, DBLP:conf/nips/AroraU19, upadhyay2021differentially, liu2024almost}, correlation or hierarchical clustering (\cite{bun2021differentially, DBLP:conf/nips/Cohen-AddadFLMN22, DBLP:conf/icml/ImolaEMCM23, cohen2022scalable}) and numerical statistics release (\cite{DBLP:conf/tcc/KasiviswanathanNRS13, DBLP:conf/nips/UllmanS19, DBLP:conf/focs/BorgsCSZ18, ding2021differentially, imola2022differentially}), among others.

Differential privacy~\cite{dwork2006calibrating} is a rigorous mathematical framework designed to provide strong privacy guarantees for individuals in a dataset. Parameterized by privacy parameter $(\epsilon,\delta)$, differential privacy is a property on the algorithm instead of other proposed notions, like $k$-anonymity, that can be seen as a property of the dataset. Formally, an algorithm $\mathcal{A}$ is $(\epsilon,\delta)$-\emph{differentially private} if the probabilities of obtaining any set of possible outputs $S$ of $\mathcal{A}$ when run on two ``neighboring'' inputs $G$ and $G'$ are similar:
$\Pr{\mathcal{A}(G) \in S} \leq e^{\epsilon} \cdot \Pr{\mathcal{A}(G')\in S} + \delta$. When $\delta = 0$, we say that $\mathcal{A}$ preserves \emph{pure differential privacy}, otherwise $\mathcal{A}$ is \emph{approximately differentially private}. In this definition, the notion of neighboring plays a crucial role in defining what is considered ``private."
}

As the size and complexity of graph data continue to grow, the need for efficient algorithms that can compute shortest-path distances while ensuring privacy has become paramount. Traditional approaches, such as Dijkstra's and Floyd-Warshall algorithms, provide accurate results but clearly lack the privacy guarantees that are necessary for sensitive applications. Recently, started with Sealfon (\cite{sealfon2016shortest}), the problem of releasing \emph{All-Pair Shortest Distances} (APSD) on a weighted graph with differential privacy are receiving increasing attention~\cite{DBLP:conf/nips/Fan0L22, fan2022distances, DBLP:conf/soda/ChenG0MNNX23, ebrahimi2023differentially, DBLP:conf/wads/DengGUW23, bodwin2024discrepancy}, due to their practical applicability and the fact that the problem itself is inherently natural. In this line of works, given an $n$-vertex weighted graph $G = ([n],E,w)$, the goal is to output the values of the shortest distances between each pair of vertices, while preserving \emph{weight}-level differential privacy: i.e., two graphs are considered neighboring if they share the same edge set (i.e., topology) and differ in the weight of exactly one edge by at most one.

Sealfon (\cite{sealfon2016shortest}) applies the post-processing scheme and gives an $\widetilde{O}(n/\epsilon)$ purely additive error on answering all-pair shortest distances in $n$-vertex graphs for pure and approximate differential privacy with privacy budget $\epsilon$. This bound was improved by~\cite{DBLP:conf/nips/Fan0L22, DBLP:conf/soda/ChenG0MNNX23} using a simple sub-sampling technique, reducing it to only $\widetilde{O}(\sqrt{n}/\epsilon)$ for approximate differential privacy. For the lower bound side, the first result on the private APSD is provided by~\cite{DBLP:conf/soda/ChenG0MNNX23}, which establishes a lower bound of $\Omega(n^{1/6}/\epsilon)$. This bound was recently improved to $\Omega(n^{1/4}/\epsilon)$ by~\cite{bodwin2024discrepancy}.

Though $\widetilde{O}(\sqrt{n})$ error is best known for answering private APSD in general graphs, it is possible to significantly improve it for many special classes of graphs. For example, outputting all pairs shortest distances on an $n$-vertex \emph{path} graph is equivalent to answering all possible range queries on a vector of length $n-1$~\cite{sealfon2016shortest}. Under the weight-level privacy, this task can be addressed within only $\text{polylog}(n)$ additive error using the classic \emph{binary tree} mechanism~\cite{dwork2010differential, chan2011private}. This mechanism leverages the Bentley \& Saxe's data structure~\cite{bentley1978decomposable} to reduce the number of compositions required for privacy. More generally, \cite{sealfon2016shortest} shows that the $\text{polylog}(n)$ error can also be achieved when privately answering the shortest distances on an $n$-vertex tree. 
 
However, there is a lack of studies on private APSD in the context between trees and general graphs, a gap that encompasses many types of graphs commonly encountered in practical applications, such as \emph{planar graphs}. Perhaps, the most natural example of such application is finding the shortest distances in road maps or metro networks, where all the routes can usually be represented on a plane, and edge weights are determined by traffic (higher edge weights indicate more traffic, which navigation tools should tend to avoid). Therefore, we have the following natural question:

\begin{quote}
    \textbf{Question 1.} \emph{Is it possible to achieve better utility in privately answering all-pairs shortest distances for planar graphs (or other natural class of graphs) compared to general graphs}\footnote{On graphs with bounded tree-width $p$, an interesting result by \cite{ebrahimi2023differentially} presents a purely differentially private algorithm that outputs all-pair shortest distances with an additive error of $\widetilde{O}(p^2/\epsilon)$. This bound however, does not lead to an improvement on planar graphs, as the tree-width of a planar graph can be as large as $O(\sqrt{n})$.}\emph{?}
\end{quote}

In this paper, we show that for any $K_h$-minor-free graph, there exists an efficient and differentially private algorithm that achieves an additive error of $\widetilde{O}(h\cdot (nW)^{1/3})$ for answering APSD, where $W$ is the maximum edge weights. This result immediately implies an $\widetilde{O}((nW)^{1/3})$ error for private APSD on planar graphs, as every planar graph is $K_5$-minor-free~\cite{kuratowski1930probleme}. Additionally, for a $\sqrt{n}\times \sqrt{n}$ grid graph, within the same framework, we further reduce the error of private APSD to only $\widetilde{O}(\frac{n^{1/4}W^{1/2}}{{\epsilon}^{1/2}})$. These results improve multiple previous upper bounds that also depend on $W$. (See Section \ref{sec:results} for details.) {Such an upper bound $W$ on edge weights is also natural in real-world scenarios, such as packet routing, where it is defined by the bandwidth of the network.} 

From a technical point of view, our results can be considered as a generalization of the mechanism proposed in \cite{sealfon2016shortest} for computing APSD on an $n$-vertex tree. Their mechanism inherently extends the binary tree mechanism, which originally operates on a single path, to a tree structure, achieving an error bound of $\text{polylog}(n)$. Therefore, to further extend this approach to more general graph topologies beyond paths or trees, we pose the following fundamental question:

\begin{quote}
    \textbf{Question 2} \emph{ What underlying property allows the binary tree mechanism to reduce the error in answering APSD queries on paths or trees?}
\end{quote}
To answer this question, we propose the concept of \emph{recursive separability} (Definition \ref{def:recur_separability}). We show that the recursive separability of tree graphs is the inherent reason why the binary tree mechanism significantly reduces the error in answering APSD. 
Intuitively, a graph as \emph{recursively separable} if it has a \emph{balanced separator} such that when the separator is removed, the resulting subgraphs themselves are also recursively separable. 

By working with recursive separable graphs, we develop a private algorithm based on a divide-and-conquer strategy to compute all-pair shortest distances, with an additive error characterized by the quality of the separators of the input graph. As a consequence, our algorithm not only encompasses the older use cases of the binary tree mechanism~\cite{dwork2010differential, chan2011private, dwork2015pure, sealfon2016shortest}, 
but also extends to a much broader class of graphs that exhibit recursive separability, including planar graphs or $K_h$-minor-free graphs. Importantly, our results also open up the opportunity to leverage a rich body of prior work on graph separation theorems (e.g.~\cite{lipton1979separator, gilbert1984separator, alon1994planar, Grigni2004WellConnected, alon1990separator, kawarabayashi2010separator, fox2010separator, korhonen2024induced}) to the design of private APSD algorithms.

\subsection{Our Results}\label{sec:results}
In this paper, we consider the standard \emph{weight-level privacy}~\cite{sealfon2016shortest}. In this notion, two positive weighted $n$ vertices graphs $G = ([n],E,w)$, $G' = ([n],E,w')$ with the same edge topology are \textit{neighboring} if $\|w-w'\|_0 \leq 1$ and $\|w-w'\|_1\leq 1$, i.e., there is an edge which differ by at most $1$. Here, $w,w'\in \mathbb{R}_{\geq 0}^{|E|}$ encodes the edge weights, where $\mathbb R_{\geq 0}$ denote the set of positive real numbers. 

For a graph $G$, a \textit{separator} of $G$ is a subset of vertices that, when removed, splits the graph into two or more disconnected components. Formally, a separator is  a subset of vertices $S\subseteq V$ if there exists disjoint $A,B$ such that $A\cup B = V\backslash S$ and no edge joins a vertices in $A$ with a vertex in $B$. {To privately compute all-pairs shortest distances using a divide-and-conquer approach, we first introduce the concept of recursive separability.}

\begin{definition}\label{def:recur_separability}
Fix $p\in \mathbb{N}$ and $q,q'\in \mathbb{R}$ such that $\frac{1}{2}\leq q<q'<1$. An undirected graph $G = (V,E)$ is \textit{$(p,q,q')$-recursively separable} if and only if either $|V| = O(1)$ or:
\begin{itemize}
    \item $G$ has a separator $S\subseteq V$ of size $|S|\leq \min\{p, (q'-q)|V|\}$ such that each connected component of $G' = (V\backslash S, E(V\backslash S))$ has at most $q|V|$ vertices.
    \item Any subgraph of $G$ is $(p,q,q')$-recursively separable.
\end{itemize}
\end{definition}
\begin{remark}
{
    We note that the definition of recursive separability is {\em oblivious} to edge weights. Thus, we say a weighted graph is recursively separable if its underlying graph is recursively separable.}
\end{remark}
Our first result is a general algorithm that privately computes all-pairs shortest distances for any graph, with an error bound that depends on the quality of the graph's separators.
\begin{theorem}\label{thm:intro1}
    Fix privacy budgets $0<\epsilon,\delta<1$, and suppose $\frac{1}{2}\leq q<q'<1$ are constants. For any weighted $n$-vertex $(p,q,q')$-recursively separable graph $G$, there exists an $(\epsilon,\delta)$-differentially private algorithm such that with high probability, it outputs APSD on $G$ with worst case additive error at most $O\left(\frac{p\cdot \log^3(n/\delta)}{\epsilon}\right)$. 
\end{theorem}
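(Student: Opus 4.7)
The plan is to prove Theorem~\ref{thm:intro1} via a recursive divide-and-conquer algorithm that generalizes the classical binary tree mechanism to the recursive separator structure. On input subgraph $G'$ (initially $G' = G$), the algorithm first invokes the $(p,q,q')$-separability property to extract a balanced separator $S \subseteq V(G')$ with $|S| \leq p$ such that each component of $G' \setminus S$ has at most $q|V(G')|$ vertices. It then privately releases a noisy estimate $\tilde d(v,s)$ of the shortest distance $d_G(v,s)$ for every $v \in V(G')$ and every $s \in S$, and recurses on each connected component of $G' \setminus S$ (which is again $(p,q,q')$-recursively separable by definition). For any pair $(u,v)$, the output is the minimum over (i) the recursive estimate in the smallest subproblem containing both $u$ and $v$, and (ii) $\tilde d(u,s) + \tilde d(s,v)$ taken over all separator vertices $s$ encountered at any level along the decomposition path.

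For privacy, the balanced-separator property forces the recursion depth to be $L = O(\log_{1/q} n) = O(\log n)$, since $q < 1$ is a constant. Because subgraphs at the same recursion level are vertex-disjoint and each edge belongs to exactly one subgraph per level, parallel composition lets me bound the per-level privacy loss by the privacy cost within a single subgraph. I would then apply advanced composition across the $L$ levels: setting a per-level privacy budget of $\epsilon_0 = \tilde O\bigl(\epsilon/\sqrt{L\log(1/\delta)}\bigr)$ and $\delta_0 = \delta/(2L)$ yields the overall $(\epsilon,\delta)$-DP guarantee.

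For utility, for any pair $(u,v)$ I let $\ell^\star$ be the lowest recursion level at which $u$ and $v$ land in different components of $G^{(\ell^\star)} \setminus S^{(\ell^\star)}$. Any shortest $u$-to-$v$ path in $G$ either stays inside the level-$\ell^\star$ subgraph and must therefore cross through $S^{(\ell^\star)}$, or it exits through some higher-level separator; in both cases some separator vertex $s$ at a level $\leq \ell^\star$ lies on a shortest $u$-to-$v$ path in $G$, so $d_G(u,v) = \min_{\ell \leq \ell^\star,\, s \in S^{(\ell)}}\bigl(d_G(u,s) + d_G(s,v)\bigr)$. The estimation error is thus the noise of the best through-separator term, amplified by a $\sqrt{\log(pL)}$ factor from the minimum over $O(pL)$ candidate separator vertices and a further $\sqrt{\log n}$ factor from a union bound over all $\binom{n}{2}$ pairs; combining these with the per-level Gaussian noise scale yields the claimed $O(p \log^3(n/\delta)/\epsilon)$ high-probability bound.

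The main obstacle will be the sensitivity analysis at each level. The naive bound on the $\ell_2$-sensitivity of the released distance vector in a level-$\ell$ subgraph of size $n_\ell$ is $O(\sqrt{n_\ell \cdot p})$, which after advanced composition produces only the weak $\tilde O(\sqrt{np}/\epsilon)$ error --- no better than existing general-graph bounds. To achieve the tight $p \cdot \mathrm{polylog}(n)$ dependence, I would release the distances through a BTM-style sub-mechanism at each level, analogous to the dyadic interval decomposition used for range queries on paths (and its extension to trees in Sealfon's algorithm). The key technical step will be to argue that this dyadic decomposition of shortest distances lifts from the path/tree setting to induced subgraphs of an arbitrary recursively separable graph, so that the effective per-level $\ell_2$-sensitivity depends only on $p$ and polylogarithmic factors in $n$, rather than on $n_\ell$.
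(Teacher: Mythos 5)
Your decomposition scheme is indeed the paper's: recursively pull out balanced separators, build an $O(\log n)$-depth tree, inject Gaussian noise scaled by advanced composition over the levels, and glue pairs back together through separator vertices. But you correctly identify the sensitivity obstacle and then do not overcome it; the paper's resolution is materially different from the sketch in your final paragraph, and without it the argument does not close.

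The crux is \emph{what gets noised}. You propose releasing $\tilde d(v,s)$ for every $v$ in the current subgraph and every $s$ in its separator. As you observe, that query has $\ell_2$-sensitivity $\Theta(\sqrt{n_\ell\, p})$, which after composition gives only $\widetilde O(\sqrt{np}/\epsilon)$. You then gesture at a ``BTM-style sub-mechanism'' that would ``lift'' the dyadic decomposition from paths/trees to recursively separable graphs, but you never say what the dyadic objects are in a general recursively separable graph, what their sensitivity is, or how a query decomposes into few of them. That is precisely the missing proof, not a routine extension. The paper avoids the problem entirely: it never releases any vertex-to-separator distances. At each tree node $(G_b,S_b)$ it releases only (i) noisy distances between all pairs \emph{within} $S_b$, and (ii) noisy distances between $S_b$ and its \emph{parent's} separator $S_{b'}$. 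Each of these batches has at most $p^2$ coordinates with per-coordinate sensitivity $1$, so $\ell_2$-sensitivity at most $p$ — this is where the linear-in-$p$ dependence comes from, with no dyadic argument at all. General vertex-to-separator (and vertex-to-vertex) distances are then reconstructed purely by post-processing, via a recursion of the form $d_b(s,t)=\min\{d_{b\circ 0}(s,t),\,d_{b\circ 1}(s,t),\,\min_{x,y\in S_b}d_{b\circ 0}(s,x)+d_b(x,y)+d_{b\circ 1}(y,t)\}$.

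You will also hit a second problem that your utility paragraph glosses over: if you recurse naively, the per-pair reconstruction composes $\Theta(\log n)$ nested minimizations, each adding an independent error term, and worse, the error inside the branch multiplies across levels because each level's $d_{b\circ a}(s,x)$ call is itself recursive. The cross-level ($S_b\times S_{b'}$) shortcuts in the paper exist exactly to prune this: they let the recursion, at each level, snap the ``far'' endpoint to a parent-separator vertex for which a noisy shortcut already exists, so the accumulated error grows only \emph{additively} by one $\zeta$ term per level (see the paper's $\texttt{err}(b)=\zeta_1+(h-|b|)\zeta_2$ induction). Your proposal, which keeps only within-separator shortcuts and a global min over $\tilde d(u,s)+\tilde d(s,v)$, has no such pruning mechanism and no way to compute accurate $\tilde d(u,s)$ to begin with given the sensitivity budget. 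So the two concrete ideas you are missing are: restrict the private releases to separator-to-separator pairs (within one separator, and between a separator and its parent), and use the parent-child shortcuts as a pruning device in the reconstruction so the error telescopes linearly in the recursion depth.
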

\noindent Notably, Theorem \ref{thm:intro1} offers a wide range of intriguing implications, as the definition of recursive separability captures a large class of natural graphs. Some example with implication includes
\begin{enumerate}
    \item All $n$-vertex trees are $(1,\frac23, \frac23+o(1))$-recursively separable. Therefore, Theorem \ref{thm:intro1} immediately recovers the error bound for privately computing APSD on trees in \cite{sealfon2016shortest} (though with an extra logarithmic factor).
    \item Any graph with tree-width $p$ is $(p+1,\frac23, \frac23+o(1))$-recursively separable (which follows as an immediate corollary of \cite{robertson1986graph}). As a consequence, Theorem \ref{thm:intro1} improves the private APSD result for bounded tree-width graphs in \cite{ebrahimi2023differentially} from $\widetilde{O}(p^2/\epsilon)$ to $\widetilde{O}({p/\epsilon})$\footnote{{This was also suggested as an open problem by Sealfon in personal communication.}}.
\end{enumerate}

Leveraging the well-known separation theorem for planar graphs~\cite{lipton1979separator, gilbert1984separator}, we also conclude that all planar graphs are $(O(\sqrt{n}), \frac23, \frac23+o(1))$-recursively separable. Though applying Theorem \ref{thm:intro1} does not immediately yield any non-trivial improvement for planar graphs,  we demonstrate that with a slight modification to the recursive framework of the algorithm in Theorem \ref{thm:intro1}, we can achieve an improvement not only on planar graphs but also on $K_h$-minor-free graphs. {Recall that a graph is $K_h$-minor-free if the clique of size $h$, denoted by $K_h$, is not a minor of it.} 
If a graph is $K_{h+1}$-minor-free but contains a $K_h$ minor, then $h$ is also known as the \textit{Hadwiger number}~\cite{hadwiger1943klassifikation}. 
\begin{theorem}\label{thm:intro2}
    Fix privacy budgets $0<\epsilon,\delta<1$, and integers $W, h\geq 1$. For any weighted $K_h$-minor-free graph $G = ([n],E,w)$ with $\|w\|_\infty \leq W$, there exists a $(\epsilon,\delta)$-differentially private algorithm such that with high probability, it outputs APSD on $G$ with worst case additive error at most $\widetilde{O}_\delta\left(h\cdot \frac{(nW)^{1/3}}{\epsilon^{2/3}}\right)$. 
\end{theorem}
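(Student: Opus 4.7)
The plan is to improve on a direct application of Theorem~\ref{thm:intro1} by introducing a sub-sampling step that exploits the upper bound $W$ on edge weights. By the Alon--Seymour--Thomas separator theorem, every $n$-vertex $K_h$-minor-free graph (and its subgraphs, which remain $K_h$-minor-free) admits a balanced vertex separator of size $O(h\sqrt{n})$; hence $K_h$-minor-free graphs are $\bigl(O(h\sqrt{n}), 2/3, 2/3+o(1)\bigr)$-recursively separable. A direct application of Theorem~\ref{thm:intro1} already yields error $\widetilde{O}(h\sqrt{n}/\epsilon)$, and the goal is to trade this off with an alternative mechanism whose error depends on $W$ instead, then keep whichever estimate is better.

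The alternative mechanism samples a uniformly random hub set $T \subseteq V$ of size $k$ (to be optimized) and, for each hub $t \in T$, releases all distances from $t$ via the recursive mechanism of Theorem~\ref{thm:intro1} applied to the shortest-path tree $T_t$ rooted at $t$ (for which $p=1$). Privacy budget is distributed across the $k$ parallel releases via advanced composition, so each tree consumes $\widetilde{O}(\epsilon/\sqrt{k})$ of the total budget and each released distance carries additive noise $\widetilde{O}(\sqrt{k}/\epsilon)$. For a query $(u,v)$, return $\min_{t \in T}\bigl(\tilde d(t,u) + \tilde d(t,v)\bigr)$. The accuracy analysis splits on the hop-length $\ell$ of a shortest $u$-$v$ path. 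If $\ell \geq \Theta(n \log n / k)$, a union bound over all pairs ensures that with high probability the hub set contains a vertex $t^\star$ on some shortest $u$-$v$ path; since $t^\star$ lies on a shortest path, $d(t^\star, u) + d(t^\star, v) = d(u,v)$ exactly, so the error is two independent noisy releases of magnitude $\widetilde{O}(\sqrt{k}/\epsilon)$. If $\ell < \Theta(n \log n / k)$, then $d(u,v) \leq \ell W = \widetilde{O}(nW/k)$, and since the min estimator always upper-bounds the truth and is nonnegative, the error is at most $\widetilde{O}(nW/k)$.

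Balancing the two sources of error at $\sqrt{k}/\epsilon \asymp nW/k$ gives $k = \Theta((nW\epsilon)^{2/3})$ and an error of $\widetilde{O}((nW)^{1/3}/\epsilon^{2/3})$ from this sub-sampling procedure alone. Running this mechanism and a direct application of Theorem~\ref{thm:intro1} in parallel, each with privacy budget $(\epsilon/2,\delta/2)$, and returning the minimum of the two estimates, yields overall error $\widetilde{O}\bigl(\min\bigl\{h\sqrt{n}/\epsilon,\, (nW)^{1/3}/\epsilon^{2/3}\bigr\}\bigr)$. A direct case analysis separating $W \geq \sqrt{n}/\epsilon$ from $W < \sqrt{n}/\epsilon$ shows that this minimum is always at most $\widetilde{O}\bigl(h\,(nW)^{1/3}/\epsilon^{2/3}\bigr)$, matching the claim.

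The main obstacle I anticipate is a clean privacy accounting when the $k$ parallel hub-mechanisms are run simultaneously, because each shortest-path tree $T_t$ is itself a function of the private edge weights and may re-route when weights shift. The resolution is to observe that the released objects are vectors of distances $\{d(t,v)\}_{v \in V}$ of $\ell_\infty$-sensitivity~$1$, on which the tree-based mechanism of Theorem~\ref{thm:intro1} operates as a black box; this can be made rigorous either by fixing a canonical shortest-path tie-breaking rule or by viewing the mechanism as a function of the distance vector alone. A secondary subtlety is that the hitting argument must hold uniformly over all $\binom{n}{2}$ pairs, which forces $k$ to absorb a $\log n$ factor that is folded into the $\widetilde{O}_\delta$.
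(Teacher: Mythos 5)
Your proposal takes a genuinely different route from the paper, but it contains a gap that I do not believe is repairable as written, and the fact that your final bound does not carry the factor of $h$ is itself a warning sign that something has gone wrong, since your argument never uses $K_h$-minor-freeness.

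The central problem is the claim that, for each sampled hub $t$, you can release the single-source distances $\{d_G(t,v)\}_{v}$ with per-query error $\widetilde{O}(1/\epsilon_0)$ by applying the tree mechanism of Theorem~\ref{thm:intro1} to the shortest-path tree $T_t$ rooted at $t$. The tree mechanism achieves polylogarithmic error on a tree because the tree is a \emph{fixed, public} object: a unit change in one edge weight perturbs exactly one edge of the tree, and the recursive/binary-tree noise structure is built on that fixed combinatorial skeleton. A shortest-path tree $T_t$, by contrast, is a function of the private weight vector $w$; when $w$ changes on a single edge, the set of tree edges of $T_t$ can change, and the mechanism's noise structure would have to change with it, which breaks the privacy accounting. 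Your two proposed fixes do not close this hole. A canonical tie-breaking rule only controls which of several equal-length paths is chosen; it does nothing when a weight change strictly reroutes shortest paths. Viewing the mechanism as acting on the distance vector $\{d_G(t,v)\}_v$ alone is also not enough: that vector has $\ell_\infty$-sensitivity $1$ but $\ell_2$-sensitivity as large as $\sqrt{n}$, so a black-box private release incurs per-coordinate error $\Theta(\sqrt{n}/\epsilon_0)$, not $\widetilde{O}(1/\epsilon_0)$. The polylog error of the binary-tree/Sealfon mechanism is tied to releasing \emph{structured} (range-sum or tree-path) aggregates over a fixed index structure, and generic $\ell_\infty$-bounded vectors do not enjoy that. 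With $\sqrt{n}/\epsilon_0$ error per hub and advanced composition over $k$ hubs, the per-query error becomes $\widetilde{O}(\sqrt{nk}/\epsilon)$, and the optimization no longer produces $(nW)^{1/3}$.

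There is also a secondary flaw in the short-path case. You argue that when the hop count $\ell$ is below $\Theta(n\log n/k)$ the error is at most $\widetilde{O}(nW/k)$ ``since the min estimator always upper-bounds the truth and is nonnegative.'' The estimator upper-bounding the truth gives no upper bound on the error: if $u,v$ are adjacent in a corner of the graph where no hub was sampled, $\min_t\bigl(\widetilde{d}(t,u)+\widetilde{d}(t,v)\bigr)$ can be as large as the noisy diameter. One could rescue this specific step by clamping the output to $[0,\ell W]$ (hop counts are public, as topology is public), but you would need to say so, and in any case the first gap is the decisive one.

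For context, the paper proves Theorem~\ref{thm:intro2} by staying inside the recursive-separator framework of Theorem~\ref{thm:intro1} and changing only what shortcuts are added: at each node of the decomposition tree, instead of adding noisy shortcuts among \emph{all} $O(h\sqrt{n})^2$ pairs inside the separator, it first contracts the graph into $O(n/d)$ clusters of hop-diameter $d$, invokes the separator theorem on the contracted graph (which remains $K_h$-minor-free) to get a $d$-covering of the separator of size $O(\sqrt{h^3 n/d})$, and adds shortcuts only within and between those covering sets. This keeps the tree structure public and the composition count small; the price is an additive $O(dW)$ covering error per level, and optimizing $d$ against the noise term $\widetilde{O}(\sqrt{h^3 n/d}/\epsilon)$ gives the stated $\widetilde{O}\bigl(h(nW)^{1/3}/\epsilon^{2/3}\bigr)$. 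Minor-freeness is used precisely to control the size of the covering set of the separator after contraction, which is why the $h$ factor survives.
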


By Kuratowski's Theorem \cite{kuratowski1930probleme}, planar graphs are $K_5$-minor-free. Then, we immediately have the following corollary:

\begin{corollary}
Fix privacy budgets $0<\epsilon,\delta<1$, and integer $W\geq 1$. For any planar graph $G = ([n],E,w)$ with $\|w\|_\infty \leq W$, there exists a $(\epsilon,\delta)$-differentially private algorithm such that with high probability, it outputs APSD on $G$ with worst case additive error at most $\widetilde{O}_\delta\left(\frac{(nW)^{1/3}}{\epsilon^{2/3}}\right)$.
\end{corollary}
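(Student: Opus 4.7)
The plan is to derive this corollary as an immediate specialization of Theorem~\ref{thm:intro2}. The only fact that needs invoking is the classical characterization that every planar graph excludes $K_5$ as a minor (Wagner's/Kuratowski's theorem, as cited in the excerpt). Once this is established, we set $h = 5$ in Theorem~\ref{thm:intro2} and absorb the resulting constant into the $\widetilde{O}_\delta(\cdot)$ notation.

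Concretely, I would proceed as follows. First, note that Theorem~\ref{thm:intro2} applies to any $K_h$-minor-free weighted graph with bounded edge weights, and its hypotheses on privacy parameters $0 < \epsilon, \delta < 1$ and on $W \geq 1$ match exactly those in the corollary. Second, I would invoke Kuratowski's theorem \cite{kuratowski1930probleme} (as used in the preceding paragraph of the excerpt) to conclude that any planar graph $G = ([n], E, w)$ is $K_5$-minor-free. Third, I would instantiate Theorem~\ref{thm:intro2} on $G$ with $h = 5$; the theorem then supplies an $(\epsilon, \delta)$-differentially private algorithm that, with high probability, outputs all-pair shortest distances on $G$ with worst-case additive error
\[
\widetilde{O}_\delta\!\left(5 \cdot \frac{(nW)^{1/3}}{\epsilon^{2/3}}\right) = \widetilde{O}_\delta\!\left(\frac{(nW)^{1/3}}{\epsilon^{2/3}}\right),
\]
since the constant $5$ is absorbed into the $\widetilde{O}_\delta(\cdot)$ notation.

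There is essentially no technical obstacle here: the corollary is a direct substitution into the general $K_h$-minor-free bound. The only subtlety worth a sentence is noting that the privacy guarantee, runtime efficiency, and high-probability utility guarantee are all preserved by this substitution, since they are properties of the algorithm of Theorem~\ref{thm:intro2} that hold uniformly in the choice of $h$.
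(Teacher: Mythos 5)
Your proposal is correct and takes precisely the same route as the paper: the paper simply notes (via Kuratowski's theorem) that planar graphs are $K_5$-minor-free and applies Theorem~\ref{thm:intro2} with $h=5$. Nothing more is needed, and your added remark that the substitution preserves the privacy/utility/efficiency guarantees uniformly in $h$ is harmless and accurate.
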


Building on the same algorithmic framework, we show further improvement for a specific subclass of planar graphs: grid graphs. 

\begin{theorem}\label{thm:intro3}
Fix privacy budgets $0<\epsilon,\delta<1$, and integers $a, W\geq 1$. For any $a\times n/a$ grid graph $G = ([{a}\times {n/a}],E,w)$ with $\|w\|_\infty \leq W$, there exists a $(\epsilon,\delta)$-differentially private algorithm such that with high probability, it outputs APSD on $G$ with worst case additive error at most $\widetilde{O}_\delta\left(\frac{(nW^2)^{1/4}}{\epsilon^{1/2}}\right)$. 
\end{theorem}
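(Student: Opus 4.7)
The plan is to execute the same recursive framework as in Theorem \ref{thm:intro2}, but instantiated with the grid's natural straight-line separators and, crucially, with a tighter diameter bound for the base case that exploits the grid's lattice geometry. The only qualitative change from Theorem \ref{thm:intro2} is this base-case bound; the recursive phase is identical.

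I would first specify the decomposition: for an $a\times b$ subgrid with $a\leq b$, use a single column of $a$ vertices as the separator, splitting the piece into two congruent $a\times b/2$ subgrids. Alternating cut directions keeps every subgrid rectangular, and for any piece of $m$ vertices the separator has size at most $\lceil\sqrt{2m}\rceil$. This yields a binary decomposition tree of depth $O(\log n)$ over the grid, and every subgrid is itself a grid. Following Theorem \ref{thm:intro2}, the recursion is halted at a level where leaf pieces have size $s$, a parameter to be optimized, and on each internal node we release the same noisy separator-to-vertex distance information as in the proof of Theorem \ref{thm:intro2}, so that for any pair $u,v$ whose least-common-ancestor piece lies above the stopping level, $d(u,v)$ can be reconstructed as a telescoping sum of $O(\log(n/s))$ noisy distances with total additive error $\widetilde O_\delta(\sqrt{n/s}/\epsilon)$ with high probability.

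The key new ingredient is the base case: for $u,v$ in a common leaf piece, simply return a fixed value (for instance zero). Because the piece has $s$ vertices and is a subgrid, its \emph{unweighted} diameter is at most $O(\sqrt{s})$ edges, so the true shortest distance is at most $O(\sqrt{s}\cdot W)$. This is where the weight bound $W$ enters, and it is tighter than the $O(s\cdot W)$ bound that Theorem \ref{thm:intro2} must use for an arbitrary $K_h$-minor-free piece. Privacy follows identically to Theorem \ref{thm:intro2}, since the released information is the same and the base-case output is data-independent.

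Finally, balance the two error sources by equating the deterministic base-case error $\sqrt{s}\cdot W$ with the stochastic recursive error $\sqrt{n/s}/\epsilon$: taking $s=\sqrt{n}/(W\epsilon)$ yields total error $\widetilde O_\delta((nW^2)^{1/4}/\sqrt{\epsilon})$, matching the claim. The main obstacle I anticipate is verifying that the recursive-phase analysis from Theorem \ref{thm:intro2} indeed carries over verbatim when the stopping rule depends on piece size rather than graph class, and in particular that the $\widetilde O_\delta(\sqrt{n/s}/\epsilon)$ bound on noise aggregation genuinely holds for grid subgrids across the geometrically decreasing separator sizes; this reduces to checking that truncating the recursion at depth $\log(n/s)$ in the mechanism of Theorem \ref{thm:intro2} affects only the base case and not the composition accounting for the released separator distances.
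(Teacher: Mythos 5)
Your proposal is not the paper's approach, and it contains a genuine gap in the noise accounting for the recursive phase. The paper does not truncate the recursion at size-$s$ leaves and accept a diameter error there; instead, it keeps the full recursion but replaces each separator $S_b$ by a $k$-covering set $S_b^k$ before adding noisy shortcuts (Algorithm~\ref{alg:shortcut_planar}). For a grid, each separator is a single row or column of length $O(\sqrt{n})$; a $k$-covering of it has size $O(\sqrt{n}/k)$, so the $\ell_2$ sensitivity of the released shortcut vector drops from $\sqrt{n}$ to $\sqrt{n}/k$, at the cost of a $2kW$ additive approximation error per query (Lemma~\ref{lem:hop_error}). Balancing $\widetilde O(\sqrt{n}/(k\epsilon))$ against $kW$ with $k = n^{1/4}/\sqrt{W\epsilon}$ gives the stated $\widetilde O_\delta((nW^2)^{1/4}/\sqrt{\epsilon})$ bound (Lemma~\ref{lem:error_cover} and the proof of Theorem~\ref{thm:kh_free_and_planar}).

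The gap in your argument is the step claiming that halting the recursion at size-$s$ leaves yields a recursive-phase error of $\widetilde O_\delta(\sqrt{n/s}/\epsilon)$. Truncating the tree does not reduce the noise scale. The root separator still has $\Theta(\sqrt{n})$ vertices, and releasing all pairwise noisy distances among them has $\ell_2$ sensitivity $\Theta(\sqrt{n})$, so the Gaussian noise on each such shortcut already has standard deviation $\widetilde\Theta(\sqrt{n}/\epsilon)$, independent of $s$. Reducing the depth from $O(\log n)$ to $O(\log(n/s))$ only changes the composition count by a constant factor in the $\log$. Consequently, your recursive phase has error $\widetilde O(\sqrt{n}/\epsilon)$, not $\widetilde O(\sqrt{n/s}/\epsilon)$, and your balance equation $\sqrt{s}W = \sqrt{n/s}/\epsilon$ has no valid derivation. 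The quantity $\sqrt{n/s}$ can only appear if you shrink the number of points in each separator on which noise is released --- that is, by some form of covering or contraction within separators, which is precisely the missing ingredient. If you take $k=\sqrt{s}$ as the covering radius of each (connected, path-like) grid separator, your base-case intuition (unweighted diameter $O(\sqrt{s})$ of an $s$-vertex subgrid) reappears as the $2kW$ covering-approximation error, your arithmetic becomes correct, and the argument essentially coincides with the paper's. Your base-case ``return zero'' idea is an alternative handling of leaves, but it is neither necessary nor sufficient on its own.
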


In Table \ref{tab:results}, we provide a brief overview of the existing results on differentially private all-pairs shortest distances approximation when assuming an upper bound $W$ on edge weights (in this table, $\alpha = \sqrt{2}-1 \approx 0.4143$). This assumption also appears in \cite{sealfon2016shortest} and \cite{DBLP:conf/soda/ChenG0MNNX23}. 

\begin{table*}[t]
    \centering
    \caption{Current error bounds for APSD under $(\epsilon,\delta)$-differential privacy with bounded edge weights.}\label{tab:results}
    \begin{tabular}{|c|c|c|c|}
        
        \hline
         &  \makecell[c]{Sealfon~\cite{sealfon2016shortest}} &  \makecell[c]{Chen et al.~\cite{DBLP:conf/soda/ChenG0MNNX23}} & \makecell[c]{Ours}\\
  \hline
            \textbf{planar graph} &  $\widetilde{O}_\delta\left(\sqrt{\frac{nW}{\epsilon}}\right)$ & $\widetilde{O}_\delta\left(\frac{W^\alpha n^{\alpha+o(1)}}{\epsilon^{1-\alpha}} + \frac{n^{\alpha+o(1)}}{\epsilon}\right)$ &  $\widetilde{O}_\delta\left({\frac{(nW)^{1/3}}{\epsilon^{2/3}}}\right)$\\
  \hline
              \textbf{grid graph} &  $\widetilde{O}_\delta\left(\frac{n^{1/3}}{\epsilon} + W\cdot n^{1/3}\right)$ &   (same as above) &  $\widetilde{O}_\delta\left(\frac{n^{1/4}\sqrt{W}}{\epsilon^{1/2}}\right)$\\
  \hline
    \end{tabular}
  \end{table*}

\subsection{Technical Overview}
Here, we introduce the technical ingredients that underpin our results on the private APSD approximation and discuss the key concepts behind our analysis of privacy and utility guarantees.

\subsubsection{Recursively computing APSD by separators}\label{sec:overview_recursive}

One of the most basic approaches to compute all-pair shortest distances on a graph is to add noise to the weight of each edge in $E$. The resulting distances are then private due to the post-processing property of differential privacy. Further, the error on each path is roughly proportional to the hops of the path, which leads to an $\widetilde{O}(n/\epsilon)$ error~\cite{sealfon2016shortest}. Using basic probability theory, it can be shown that with high probability, uniformly sampling $O(\sqrt{n}\log n)$ vertices from the vertex set of a connected graph can form a $k$-covering set {(see Definition \ref{def:covering_set})} of the original graph, where $k = O(\sqrt{n})$. By privately computing all-pair shortest distances within the $O(\sqrt{n})$ vertices in the covering set, the error in private APSD can be further reduced to $\widetilde{O}(\sqrt{n}/\epsilon)$ (\cite{DBLP:conf/nips/Fan0L22,DBLP:conf/soda/ChenG0MNNX23}) since in each path, the covering set is guaranteed to be encountered after passing through the $O(\sqrt{n})$ vertices.

However, the above methods do not make use of any combinatorial properties or topological structure of the graph. To take advantage of this, we observe that some specific classes of graphs, including trees or planar graphs, exhibit good separability, allowing us to use the divide-and-conquer method to recursively compute the shortest path. Specifically, for a graph $G= (V,E,w)$, and $x,y\in V$, let $d_G(x,y)$ be the distance between $x,y$ in $G$. Then, if $G$ has a separator $S\subseteq V$ that separates the graph into two subgraphs $G_1$ and $G_2$, we observe that for any pair $x,y\in V$ that are not in $S$, their distance can be written as $$d_G(x,y) = \mathop{\min}_{a,b\in S} \left\{d_{G_1}(x,a) + d_{G}(a,b) + d_{G_2}(b,y)\right\}$$ if $x\in G_1$ and $y\in G_2$. Then, the error accumulated in each recursion corresponds to the error in privately computing the distances in $G$\footnote{We have to privately compute $d_G(a,b)$ for any $a,b\in S$ instead of $d_S(a,b)$ because that the shortest path between $x,y$ may repeatedly enter and exit the separator $S$.} between each vertex \emph{inside} $S$. Intuitively, the recursion step can be illustrated by a binary tree of $O(\log |V|)$ levels if the input graph can be recursively divided by a series of well-balanced separators. However, this does \emph{not} directly give the $\widetilde{O}(p/\epsilon)$ error if the input graph is $(p,O(1),O(1))$-recursively separable, since the error accumulates across all leaf vertices in the corresponding recursion tree. To address this issue, we employ a \emph{pruning} trick that directly establishes relationships between adjacent levels of separators, preventing errors from accumulating within the same level. In particular, we add shortcuts between adjacent levels of separators to form a series of complete bipartite graphs to control the error.

In summary, the key distinction between our technique and those employed in other works that improves private APSD (\cite{DBLP:conf/nips/Fan0L22, DBLP:conf/soda/ChenG0MNNX23}) lies in our utilization of the graph's topology to construct \emph{highly structured} shortcuts. In contrast, the shortcuts constructed in previous works are oblivious to any combinatorial properties of the graph.

\subsubsection{Approximating distances within a separator using covering sets}

In Section \ref{sec:overview_recursive}, we briefly introduced how to construct an algorithmic framework such that the error in estimating private APSD primarily depends on the size of the separators for recursively separable graphs. Recall that in our recursive framework, we need to construct two types of shortcuts: those within the separators and those between adjacent separators. Therefore, by examining the separability of certain classes of graphs, the overall error for private APSD can be further reduced by optimizing the error incurred when estimating the distances of vertices inside or between separators. In this section, we focus solely on illustrating the idea for reducing the error in estimating the APSD (i.e., the lengths of the shortcuts) inside a separator. The idea for reducing the error in estimating the lengths of shortcuts between two adjacent separators is similar.

Specifically, suppose that the edge weights are upper-bounded by some constant $W\geq 1$. Then, we observe that finding a \emph{$k$-covering set} (Definition \ref{def:covering_set}) of the separator $S$ and computing the all-pair distances for all vertices inside the $k$-covering set provides an approximation of all-pair distances within $S$, with an extra additive error at most $kW$. The advantage is that by computing APSD only for the $k$-covering set, the number of compositions required to preserve privacy is significantly reduced when $k \ll |S|$. This approach is similar to the one in \cite{sealfon2016shortest} for improving the error in estimating APSD for grid graphs. However, without our recursive framework, they use this trick by finding a $k$-covering set for the whole graph instead of only for separators, which limits its ability to achieve further improvements. In our recursive framework, we observe that a grid graph is separable with a series of \emph{connected} separators of size at most $O(\sqrt{n})$. Consequently, one can show that each separator has a $(n^{1/4})$-covering set of size at most $O(n^{1/4})$, which finally leads to an $\widetilde{O}(n^{1/4})$ error (for constant $W$) for grid graphs by the advanced composition theorem.

Achieving improvements on general planar graphs is similar but somewhat more tricky, as the separators are not necessarily connected. To make progress, for a connected planar graph $G$ on $n$ vertices, we first partition the vertex set of $G$ into about $O(n/d)$ disjoint subsets, such that each subset has diameter at most $d$ (in terms of the subgraph defined by this subset of vertices). Then, we contract each subset into a super-node, merge all multi-edges, and obtain a smaller planar graph $\widetilde{G}$ with $O(n/d)$ super-nodes. By the separation theorem for planar graphs~\cite{lipton1979separator}, $\widetilde{G}$ has a separator $\widetilde{S}$ of size at most $O(\sqrt{n/d})$, and thus privately computing the all-pair shortest distances inside $\widetilde{S}$ incurs an error of at most $\widetilde{O}(\sqrt{n/d\epsilon^2})$. This implies an approximation of the APSD within the original separator $S$ with an error of 
$$2dW + \widetilde{O}(\sqrt{n/d\epsilon^2}),$$
since the diameter of each supper-node is at most $d$. By choosing an appropriate value for $d$, we obtain an $\widetilde{O}((nW)^{1/3})$ error for estimating APSD inside separators of a planar graph, and thus the same error with just extra logarithmic factors for estimating the APSD in the entire graph. We note that the same approach can be identically applied to $K_h$-minor-free graphs (see also Lemma~\ref{lem:k-cover_in_separator}). Note that the dependence on edge weights $W$ arises because we need to estimate the all-pair distances inside a subgraph $S$, yet the shortest path between two nodes in $S$ may traverse vertices outside of $S$. Therefore, we need this assumption to ensure that the shortest path within $S$ remains relatively close in length to the global (true) shortest path.

\section{Notations and Preliminaries}

Throughout this paper, we work on weighted graphs $G = (V,E,w)$ where $w\in \mathbb{R}_+^E$ encodes edge weights. Let $a,b$ be two binary strings, we use $a \circ b$ to denote the binary string obtained concatenating $a$ and $b$. When edge weights are not of concern, for a graph $G=(V,E)$ and $S\subseteq V$, we use $G_S = (S,E(S))$ to denote the subgraph induced by $S$. For any $h\geq 1$, we use $K_h$ to denote a clique of size $h$. A graph $H$ is a minor of a graph $G$ if a copy of $H$ can be obtained from $G$ via repeated edge deletion or edge contraction, and we say a graph $G$ is \emph{$K_h$-minor-free} if $G$ does not have $K_h$ as its minor. It is well-known that all planar graphs are $K_5$-minor-free:

\begin{lemma}[The Kuratowski's reduction theorem~\cite{kuratowski1930probleme, wagner1937uber}] A graph $G$ is planar if and only if the complete graph $K_5$ and the complete bipartite graph $K_{3,3}$ are not minors of $G$.
    
\end{lemma}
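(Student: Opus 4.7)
The plan is to prove the two directions separately, treating the forward direction as essentially bookkeeping and the reverse direction as the substantive combinatorial argument.

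For the forward direction (planar implies no $K_5$ or $K_{3,3}$ minor), I would first verify that $K_5$ and $K_{3,3}$ are themselves non-planar. Both follow from Euler's formula $V - E + F = 2$ for a planar embedding, combined with the edge-face bounds $E \le 3V - 6$ (for simple graphs of girth $\ge 3$, ruling out $K_5$ with $V=5$, $E=10$) and $E \le 2V - 4$ (for triangle-free simple graphs, ruling out $K_{3,3}$ with $V=6$, $E=9$). Then I would show that planarity is closed under taking minors: deleting a vertex or edge from a planar embedding obviously yields a planar embedding, and contracting an edge $uv$ can be realized in the plane by merging $u$ and $v$ along the drawn edge, which does not introduce crossings. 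Combining these two facts gives that any planar $G$ cannot contain $K_5$ or $K_{3,3}$ as a minor.

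For the reverse direction I would follow the classical Wagner-style reduction. The plan is induction on $|V(G)|$, using the fact that it suffices to prove the statement for $3$-connected graphs: if $G$ is disconnected or has a cut vertex, apply the inductive hypothesis to each block and glue the planar embeddings at the cut vertex; if $G$ has a separating pair $\{u,v\}$, decompose $G$ into the $\{u,v\}$-components augmented with a virtual $uv$-edge, apply induction to each (none can contain $K_5$ or $K_{3,3}$ as a minor, else so would $G$), and paste the embeddings along a common face containing the virtual edge. This reduces the problem to showing: every $3$-connected graph with no $K_5$ or $K_{3,3}$ minor is planar.

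For the $3$-connected case I would invoke Tutte's contractibility lemma: every $3$-connected graph $G$ on at least $5$ vertices has an edge $e = uv$ whose contraction $G/e$ is still $3$-connected. Apply the inductive hypothesis to $G/e$: either $G/e$ contains a $K_5$ or $K_{3,3}$ minor (which lifts to a $K_5$ or $K_{3,3}$ minor of $G$, contradicting our assumption), or $G/e$ is planar. In the latter case, fix a planar embedding of $G/e$; the contracted vertex $w$ lies in some face, and its cyclic neighbor order around $w$ splits into the neighbors inherited from $u$ and from $v$. A careful case analysis on how the neighbors of $u$ interleave with those of $v$ around $w$ shows that either the two groups can be separated along the cycle bounding the face containing $w$ (allowing us to "uncontract" $e$ by splitting $w$ and routing $e$ through the face without crossings, giving a planar embedding of $G$), or the interleaving pattern itself exhibits a $K_5$ or $K_{3,3}$ minor in $G$. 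Finally, the base cases $|V(G)| \le 4$ are trivial since $K_4$ is planar.

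The hard part will be the interleaving case analysis in the $3$-connected step: one must enumerate the possible cyclic arrangements of $N(u) \setminus \{v\}$ and $N(v) \setminus \{u\}$ around the face containing the contracted vertex and verify, in each bad pattern, that branch sets for $K_5$ or $K_{3,3}$ can be extracted using the $3$-connectivity of $G$ (which guarantees enough internally disjoint paths between prescribed vertex groups). Everything else---the easy direction, the block/separating-pair reductions, and Tutte's contractibility lemma---is standard and can be cited, so the genuine work concentrates in this final planar-embedding-lifting argument.
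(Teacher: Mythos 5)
The paper does not prove this lemma; it is invoked as a classical citation (Kuratowski 1930, Wagner 1937), so there is no internal proof to compare your argument against. That said, your sketch is essentially the standard Thomassen-style proof of Wagner's theorem (the minor form of Kuratowski's theorem), and it is correct in outline. The easy direction via Euler's inequalities $E \le 3V - 6$ and $E \le 2V - 4$ plus closure of planarity under deletions and contractions is fine. The reductions to the $2$-connected and then $3$-connected case, Tutte's contraction lemma, and the lift of a planar embedding of $G/e$ back to $G$ by splitting the contracted vertex $w$ are exactly the right moves. Two places where a fully rigorous write-up would need a bit more care, which your sketch correctly flags but does not carry out: (i) to justify that the neighbors of $w$ lie on a cycle bounding a face you need to observe that $(G/e) - w$ is $2$-connected (which follows from $3$-connectivity of $G/e$ and $|V(G/e)| \ge 4$); and (ii) the interleaving case analysis must distinguish the case of three or more common neighbors of $u$ and $v$ on that cycle (yielding $K_5$) from the case of two pairs of interleaved $u$-only and $v$-only neighbors (yielding $K_{3,3}$), using $3$-connectivity to build pairwise internally disjoint branch paths. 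None of this is a gap in approach, only in the level of detail; the proposal is a sound plan for proving a statement the paper treats as a black box.
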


Here, we provide the necessary background on differential privacy to facilitate understanding of this paper. A key feature of differential privacy algorithms is that they preserve privacy under post-processing. That is to say, without any auxiliary information about the dataset, any analyst cannot compute a function that makes the output less private. 
\begin{lemma}
    [Post processing~\cite{dwork2014algorithmic}] Let $\mathcal{A}:\mathcal{X}\rightarrow \mathcal{R}$ be a $(\epsilon,\delta)$-differentially private algorithm. Let $f:\mathcal{R}\rightarrow \mathcal{R}'$ be any function, then $f\circ \mathcal{A}$ is also $(\epsilon,\delta)$-differentially private.
\end{lemma}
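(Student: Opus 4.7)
The plan is to verify the inequality $\Pr[f(\mathcal{A}(X)) \in T] \leq e^\epsilon \Pr[f(\mathcal{A}(X')) \in T] + \delta$ for every pair of neighboring inputs $X, X' \in \mathcal{X}$ and every (measurable) event $T \subseteq \mathcal{R}'$, since this is exactly what the definition of $(\epsilon,\delta)$-differential privacy for $f \circ \mathcal{A}$ requires. I will split the argument into the deterministic case and the randomized case; the randomized case reduces to the deterministic one by conditioning on the coins of $f$.

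First I would handle the case where $f$ is deterministic. Given any event $T \subseteq \mathcal{R}'$, define its preimage $S := f^{-1}(T) = \{r \in \mathcal{R} : f(r) \in T\} \subseteq \mathcal{R}$. Since $f$ is deterministic, the event $\{f(\mathcal{A}(X)) \in T\}$ coincides exactly with $\{\mathcal{A}(X) \in S\}$, and similarly for $X'$. Therefore
\[
\Pr[f(\mathcal{A}(X)) \in T] = \Pr[\mathcal{A}(X) \in S] \leq e^\epsilon \Pr[\mathcal{A}(X') \in S] + \delta = e^\epsilon \Pr[f(\mathcal{A}(X')) \in T] + \delta,
\]
where the inequality is the hypothesized $(\epsilon,\delta)$-DP guarantee of $\mathcal{A}$ applied to the event $S$. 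This settles the deterministic case.

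Next I would handle the randomized case by viewing $f$ as a mixture of deterministic functions: sample coins $\omega$ from some distribution $\mathcal{D}$ independent of $\mathcal{A}$, and let $f_\omega$ denote the resulting deterministic map. For each fixed $\omega$, the deterministic case gives $\Pr[f_\omega(\mathcal{A}(X)) \in T] \leq e^\epsilon \Pr[f_\omega(\mathcal{A}(X')) \in T] + \delta$. Taking expectation over $\omega \sim \mathcal{D}$ and using linearity of expectation on both sides preserves the inequality, yielding the desired bound for the randomized post-processing.

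Honestly, there is no real obstacle here: the whole proof is just unfolding definitions, with the one conceptually nontrivial move being the reduction to preimages $S = f^{-1}(T)$, which crucially uses that the event $T$ on the post-processed output pulls back to a well-defined event on the intermediate output $\mathcal{R}$ that is independent of the input to $\mathcal{A}$. The randomized step relies only on the independence of $f$'s coins from those of $\mathcal{A}$, which is built into the standard definition of post-processing.
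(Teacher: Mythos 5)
Your proof is correct and is essentially the standard argument from the cited reference \cite{dwork2014algorithmic}: the paper itself does not prove this lemma but merely cites it, and your two-step reduction (preimage argument for deterministic $f$, followed by averaging over the coins of a randomized $f$) is exactly the textbook proof.
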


Sometimes we need to repeatedly use differentially private mechanisms on the same dataset, and obtain a series of outputs.
\begin{lemma}[Basic composition~\cite{dwork2006calibrating}]\label{lem:composition}
     let $D$ be a dataset in $\mathcal{X}$ and $\mathcal{A}_1, \mathcal{A}_2,\cdots, \mathcal{A}_k$ be $k$ algorithms where $\mathcal{A}_i$ (for $i\in [k]$) preserves $(\epsilon_i,\delta_i)$ differential privacy, then the composed algorithm $\mathcal{A}(D) = (\mathcal{A}_1(D), \cdots, \mathcal{A}_2(D))$ preserves $(\sum_{i\in [k]}{\epsilon_i}, \sum_{i\in [k]}{\delta_i})$-differential privacy.
\end{lemma}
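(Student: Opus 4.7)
The plan is to prove this standard basic composition theorem by induction on $k$, reducing to the two-fold composition case: once I establish that composing a $(\epsilon_1,\delta_1)$-DP mechanism with a $(\epsilon_2,\delta_2)$-DP mechanism yields an $(\epsilon_1+\epsilon_2,\delta_1+\delta_2)$-DP mechanism, the inductive step groups the first $k-1$ mechanisms into a single $(\sum_{i<k}\epsilon_i,\sum_{i<k}\delta_i)$-DP mechanism (by the inductive hypothesis) and composes it with $\mathcal{A}_k$. The base case $k=1$ is just the definition of $(\epsilon_1,\delta_1)$-differential privacy.

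For the two-fold case, fix neighboring datasets $D,D'$ and a measurable event $S \subseteq \mathcal{R}_1 \times \mathcal{R}_2$. Because the internal randomness of $\mathcal{A}_1$ and $\mathcal{A}_2$ is independent, $\Pr{\mathcal{A}(D) \in S}$ factors as a nested integral over the outputs of $\mathcal{A}_1(D)$ and $\mathcal{A}_2(D)$. The natural strategy is to apply the $(\epsilon_2,\delta_2)$-DP guarantee of $\mathcal{A}_2$ pointwise to each section $S_{o_1} := \{o_2 : (o_1,o_2) \in S\}$, producing an intermediate hybrid probability $\Pr{(\mathcal{A}_1(D),\mathcal{A}_2(D')) \in S}$ scaled by $e^{\epsilon_2}$ plus an additive $\delta_2$. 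I would then view this hybrid probability as the expectation of a post-processing function of $\mathcal{A}_1(D)$ and apply the $(\epsilon_1,\delta_1)$-DP guarantee of $\mathcal{A}_1$ to push the first argument from $D$ to $D'$, arriving at a bound of the shape $e^{\epsilon_1+\epsilon_2}\Pr{\mathcal{A}(D') \in S}$ plus additive slack.

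The main obstacle, and the step I expect to need the most care, is obtaining the advertised clean slack of $\delta_1+\delta_2$ rather than the $e^{\epsilon_2}\delta_1+\delta_2$ that a direct sequential application of the two DP bounds produces. To close this gap I would invoke the equivalent \emph{bad-event} characterization of approximate differential privacy: a mechanism $\mathcal{A}_i$ is $(\epsilon_i,\delta_i)$-DP if and only if, for every pair of neighbors, there is a measurable bad event $E_i$ in its output space of probability at most $\delta_i$ (under both $\mathcal{A}_i(D)$ and $\mathcal{A}_i(D')$) outside of which the pointwise privacy loss is bounded by $\epsilon_i$. Taking the joint bad event $E := (E_1 \times \mathcal{R}_2) \cup (\mathcal{R}_1 \times E_2)$ in the product output space gives $\Pr{\mathcal{A}(D) \in E} \leq \delta_1+\delta_2$ by the union bound, and outside $E$ the pointwise privacy losses of $\mathcal{A}_1$ and $\mathcal{A}_2$ add (by independence of their randomness), yielding a total loss of at most $\epsilon_1+\epsilon_2$. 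Splitting $\Pr{\mathcal{A}(D) \in S}$ into its intersection with $E$ (bounded by $\delta_1+\delta_2$) and its intersection with $E^c$ (bounded by $e^{\epsilon_1+\epsilon_2}\Pr{\mathcal{A}(D') \in S}$), and then iterating the two-fold bound $k-1$ times through the induction, yields the claimed $(\sum_i \epsilon_i,\sum_i \delta_i)$-DP guarantee for the full $k$-fold composition.
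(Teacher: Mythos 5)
The paper states Lemma~\ref{lem:composition} as a cited background fact from the literature (\cite{dwork2006calibrating}) and does not include a proof, so there is no in-paper argument to compare your proposal against; what follows is a review of your proof on its own terms.

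Your outline correctly reduces the $k$-fold case to the $2$-fold case by induction, and you correctly diagnose the subtlety: applying the two $(\epsilon_i,\delta_i)$-DP bounds sequentially yields $e^{\epsilon_2}\delta_1+\delta_2$, not $\delta_1+\delta_2$. However, the key lemma you then invoke to repair this is false as stated. You claim a biconditional: $\mathcal{A}_i$ is $(\epsilon_i,\delta_i)$-DP \emph{if and only if} for every neighboring pair there is a bad event $E_i$ in the \emph{output space} with probability at most $\delta_i$ under both $\mathcal{A}_i(D)$ and $\mathcal{A}_i(D')$, outside of which the pointwise privacy loss is bounded by $\epsilon_i$. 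The ``if'' direction is fine, but the ``only if'' direction fails without a loss in parameters. A concrete counterexample: let $\epsilon=\ln 2$, and consider a two-point output space $\{a,b\}$ with $P(a)=\tfrac34$, $P(b)=\tfrac14$, $Q(a)=\tfrac14$, $Q(b)=\tfrac34$. One checks directly that $P$ and $Q$ are $(\ln 2,\tfrac14)$-indistinguishable (the worst event in each direction is a singleton, contributing $\tfrac34-2\cdot\tfrac14=\tfrac14$). Yet every output point violates the pointwise bound $e^{-\epsilon}\le p(o)/q(o)\le e^{\epsilon}$: indeed $p(a)/q(a)=3>2$ and $q(b)/p(b)=3>2$. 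So the bad event would have to be the whole space and has probability $1$, far exceeding $\delta=\tfrac14$. Your union-bound step over $E_1\times\mathcal{R}_2\cup\mathcal{R}_1\times E_2$ therefore has nothing to stand on.

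The correct way to make your plan go through is to replace the bad-event biconditional with a (true) one-sided \emph{replacement-distribution} lemma: if $P(S)\le e^{\epsilon}Q(S)+\delta$ for every measurable $S$, then there exists a distribution $P'$ with total-variation distance $\Delta(P,P')\le\delta$ and $P'(S)\le e^{\epsilon}Q(S)$ for all $S$. (The construction trims the excess mass $\int (p-e^\epsilon q)^+\,do\le\delta$ from where $p>e^\epsilon q$ and redistributes it where there is slack; the available slack $\int(e^\epsilon q-p)^+\,do=(e^\epsilon-1)+\int(p-e^\epsilon q)^+\,do$ always suffices.) Note that $P'$ is not of the form ``$P$ conditioned on a good event''--mass is both removed and re-added--which is exactly why the naive bad-event picture breaks. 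With this lemma, you take $P'_1,P'_2$ with $\Delta(P_i,P'_i)\le\delta_i$ and $P'_i(S)\le e^{\epsilon_i}Q_i(S)$, use subadditivity of TV under product to get $\Delta(P_1\times P_2,\,P'_1\times P'_2)\le\delta_1+\delta_2$, verify by Fubini and a layer-cake argument that $(P'_1\times P'_2)(S)\le e^{\epsilon_1+\epsilon_2}(Q_1\times Q_2)(S)$ for all product-measurable $S$, and conclude $P(S)\le e^{\epsilon_1+\epsilon_2}Q(S)+\delta_1+\delta_2$. An equivalent route is the coupling-based simulation lemma (the bad event then lives on the \emph{joint} probability space of a coupling of $P$ and $Q$, not on the output space), but your claim as written, with $E_i$ living in the output space and the equivalence stated as an ``iff,'' is the gap.
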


\begin{lemma}[Advanced composition~\cite{dwork2010boosting}]\label{lem:adv_composition}
    For parameters $\epsilon>0$ and $\delta,\delta'\in [0,1]$, the composition of $k$ $(\epsilon,\delta)$-differentially private algorithms is a $(\epsilon', k\delta+\delta')$ differentially private algorithm, where 
    $$\epsilon' = \sqrt{2k\log(1/\delta')} \cdot \epsilon + k\epsilon (e^\epsilon - 1).$$
\end{lemma}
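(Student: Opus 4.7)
The plan is to prove advanced composition via the privacy-loss random variable together with a martingale concentration (Azuma--Hoeffding) argument, and then convert the concentration bound back into the $(\epsilon',k\delta+\delta')$-DP definition.

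First, I would reduce from $(\epsilon,\delta)$-DP to pure $(\epsilon,0)$-DP. A standard coupling lemma shows that, on any fixed pair of neighboring inputs $D\sim D'$, each $(\epsilon,\delta)$-DP mechanism $\mathcal{A}_i$ can be replaced by a mechanism whose output distributions are $(\epsilon,0)$-indistinguishable except on ``failure'' events of probability at most $\delta$ on each side. Union-bounding the $k$ failure events over the $k$ rounds contributes at most $k\delta$ to the final $\delta$-budget, so it suffices to prove $(\epsilon',\delta')$-DP for the adaptive composition of $k$ mechanisms that are $(\epsilon,0)$-DP.

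Next, I set up the privacy loss. Let $o_{<i}=(o_1,\ldots,o_{i-1})$ denote the transcript of the first $i-1$ outputs, and define
$$Y_i \;=\; \log \frac{\Pr{\mathcal{A}_i(D;o_{<i})=o_i}}{\Pr{\mathcal{A}_i(D';o_{<i})=o_i}}, \qquad o_i \sim \mathcal{A}_i(D;o_{<i}).$$
Pure $(\epsilon,0)$-DP forces $|Y_i|\le \epsilon$ almost surely, and a short algebraic calculation---essentially a KL-divergence bound for two $e^\epsilon$-close distributions---yields $\E{Y_i \mid o_{<i}} \le \epsilon(e^\epsilon-1)$ for every fixed history. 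Writing $Y_i = (Y_i-\E{Y_i\mid o_{<i}})+\E{Y_i\mid o_{<i}}$ decomposes $\sum_{i=1}^k Y_i$ into a bounded martingale difference sequence (with increments of magnitude at most $2\epsilon$) plus a deterministic bias of at most $k\epsilon(e^\epsilon-1)$. Applying Azuma--Hoeffding gives, for any $t>0$,
$$\Pr{\sum_{i=1}^k Y_i \;\ge\; k\epsilon(e^\epsilon-1) + t}\;\le\; \exp\!\left(-\frac{t^2}{2k\epsilon^2}\right),$$
and choosing $t=\epsilon\sqrt{2k\log(1/\delta')}$ makes the right-hand side at most $\delta'$. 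Thus, except on an event of probability at most $\delta'$ over the transcript under $D$, the cumulative privacy loss is bounded by $\epsilon'=\epsilon\sqrt{2k\log(1/\delta')}+k\epsilon(e^\epsilon-1)$. Converting a tail bound on the privacy-loss back to $(\epsilon',\delta')$-indistinguishability is a one-line argument: for any output event $S$, split according to whether the loss exceeds $\epsilon'$, giving $\Pr{\mathcal{A}(D)\in S}\le e^{\epsilon'}\Pr{\mathcal{A}(D')\in S}+\delta'$. Combining with the $k\delta$ from the reduction step yields the claimed $(\epsilon',k\delta+\delta')$-DP guarantee.

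The main obstacle will be handling the adaptivity cleanly: the $Y_i$ are not independent, so the Azuma--Hoeffding step must be carried out with respect to the filtration generated by the transcript $o_{<i}$ under the law induced by running the composition on $D$. One must check that the bounds $|Y_i|\le\epsilon$ and $\E{Y_i\mid o_{<i}}\le\epsilon(e^\epsilon-1)$ hold uniformly over every realization of $o_{<i}$, and the coupling used to reduce $(\epsilon,\delta)$-DP to $(\epsilon,0)$-DP must be joint across all $k$ rounds so that the union bound contributing $k\delta$ is valid on the same probability space. Once these bookkeeping steps are in place, the remainder is standard concentration plus the definitional conversion from privacy-loss tail bound to $(\epsilon',\delta')$-DP.
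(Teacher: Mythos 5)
This lemma is stated in the paper as a citation to Dwork--Rothblum--Vadhan \cite{dwork2010boosting} and is not proved there, so there is no in-paper proof to compare against. Your proposal is a correct sketch of the standard proof of advanced composition: the reduction from approximate to pure DP paying $k\delta$, the privacy-loss martingale with per-step magnitude bound $|Y_i|\le\epsilon$ and conditional-mean (KL-divergence) bound $\E{Y_i\mid o_{<i}}\le\epsilon(e^\epsilon-1)$, Azuma--Hoeffding on the centered sequence, and the one-line conversion from a privacy-loss tail bound back to an $(\epsilon',\delta')$ guarantee. One small imprecision worth flagging: you justify the $\exp(-t^2/(2k\epsilon^2))$ tail by saying the centered increments have \emph{magnitude} at most $2\epsilon$; the symmetric-magnitude form of Azuma would then give only $\exp(-t^2/(8k\epsilon^2))$. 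To get the stated constant you should instead invoke the range (interval) form of Azuma--Hoeffding: each $Y_i$ lies in an interval of length $2\epsilon$, centering does not change the range, and with $\sum_i(b_i-a_i)^2=4k\epsilon^2$ the bound $\exp(-2t^2/\sum_i(b_i-a_i)^2)=\exp(-t^2/(2k\epsilon^2))$ follows. With that adjustment the argument is complete and matches the original proof in the cited reference.
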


The Laplace mechanism is one of the most basic mechanisms to preserve differential privacy for numeric queries, which calibrates a random noise from the Laplace distribution (or double exponential distribution) according to the $\ell_1$ sensitivity of the function. 

\begin{lemma}
    (Laplace mechanism)  Suppose $f:\mathcal{X}\rightarrow \mathbb{R}^k$ is a query function with $\ell_1$ sensitivity $\mathsf{sens}_1(f)\leq \Delta_1$. Then the mechanism
    $$\mathcal{M}(D) = f(D) + (Z_1,\cdots,Z_k)^\top$$
    is $(\epsilon,0)$-differentially private, where $Z_1,\cdots, Z_k$ are i.i.d random variables drawn from $\texttt{Lap}(\Delta_1/\epsilon)$.
\end{lemma}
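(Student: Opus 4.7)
The plan is to verify the standard guarantee that adding i.i.d.\ Laplace noise, calibrated to the $\ell_1$-sensitivity, yields $(\epsilon,0)$-differential privacy. Since $\delta = 0$, I need to bound the pointwise ratio of output densities on any pair of neighboring inputs $D,D'\in \mathcal{X}$ by $e^\epsilon$; there is no failure event to absorb into $\delta$.

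First, I would write down the joint density of the output. Because the noise coordinates $Z_1,\ldots,Z_k$ are independent and each is drawn from $\texttt{Lap}(\Delta_1/\epsilon)$, whose density is $\tfrac{\epsilon}{2\Delta_1}\exp(-\epsilon|z|/\Delta_1)$, the density of $\mathcal{M}(D)$ evaluated at any $y\in\mathbb{R}^k$ factorizes as
\[
p_D(y) \;=\; \prod_{i=1}^k \frac{\epsilon}{2\Delta_1} \exp\!\left(-\frac{\epsilon\,|y_i - f(D)_i|}{\Delta_1}\right).
\]
The analogous expression holds for $p_{D'}(y)$ with $f(D')$ in place of $f(D)$. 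Forming the ratio, the leading constants cancel, and the exponents combine into a single sum $\sum_i \bigl(|y_i - f(D')_i| - |y_i - f(D)_i|\bigr)$ multiplied by $\epsilon/\Delta_1$.

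Second, I would apply the triangle inequality coordinatewise: $|y_i - f(D')_i| - |y_i - f(D)_i| \le |f(D)_i - f(D')_i|$. Summing over $i$ turns the bound into $\|f(D) - f(D')\|_1$, which by the sensitivity hypothesis is at most $\Delta_1$. Therefore
\[
\frac{p_D(y)}{p_{D'}(y)} \;\le\; \exp\!\left(\frac{\epsilon\,\|f(D) - f(D')\|_1}{\Delta_1}\right) \;\le\; e^\epsilon.
\]
Integrating the pointwise density inequality over any measurable output set $S$ yields $\Pr{\mathcal{M}(D)\in S} \le e^\epsilon\,\Pr{\mathcal{M}(D')\in S}$, which is exactly $(\epsilon,0)$-differential privacy.

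There is no real obstacle here: the only subtlety is verifying that the density-ratio bound transfers to a probability-ratio bound on arbitrary measurable sets, which follows from a direct integration (and by symmetry the same bound holds swapping $D$ and $D'$). The argument uses only independence of coordinates, the explicit form of the Laplace density, and the triangle inequality, so it works unchanged for any dimension $k$ and any query $f$ of $\ell_1$-sensitivity at most $\Delta_1$.
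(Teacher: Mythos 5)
Your proof is correct and is the standard argument for the Laplace mechanism; the paper states this lemma as a known result without giving a proof (it is the canonical privacy guarantee from the differential-privacy literature), and your derivation via the density ratio, coordinatewise triangle inequality, and $\ell_1$-sensitivity bound is exactly the usual one.
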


Adding Gaussian noise based on the $\ell_2$ sensitivity guarantees approximate differential privacy.

\begin{lemma}
    (Gaussian mechanism) Suppose $f:\mathcal{X}\rightarrow \mathbb{R}^k$ is a query function with $\ell_2$ sensitivity $\mathsf{sens}_1(f)\leq \Delta_2$. Then the mechanism
    $$\mathcal{M}(D) = f(D) + (Z_1,\cdots,Z_k)^\top$$
    is $(\epsilon,\delta)$-differentially private, where $Z_1,\cdots, Z_k$ are i.i.d random variables drawn from $\mathcal{N}\left(0, \frac{(\Delta_2)^2\cdot 2\ln(1.25/\delta)}{\epsilon}\right)$.
\end{lemma}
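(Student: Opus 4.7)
The plan is to follow the classical route for analyzing the Gaussian mechanism: reduce to a one-dimensional computation of the privacy loss random variable and then control its upper tail using a sharp Gaussian tail bound. Fix neighboring inputs $D, D'$ and set $v = f(D) - f(D')$, so that $\|v\|_2 \le \Delta_2$. The output under $\mathcal{M}(D)$ has density $p_D(y) \propto \exp(-\|y-f(D)\|_2^2 / 2\sigma^2)$, and analogously for $D'$, where $\sigma$ is the noise scale. By the standard equivalence between $(\epsilon,\delta)$-differential privacy and tail bounds on the privacy loss (the lemma that $\Pr_{y\sim \mathcal{M}(D)}[\ln(p_D(y)/p_{D'}(y)) > \epsilon] \le \delta$ implies $(\epsilon,\delta)$-DP), it suffices to bound the privacy loss random variable $L(y) := \ln(p_D(y)/p_{D'}(y))$ when $y \sim \mathcal{M}(D)$.

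The first real step is a dimension reduction: the isotropic Gaussian noise is rotationally invariant, so I would rotate coordinates so that $v$ is aligned with the first axis. The components of $y$ orthogonal to $v$ then contribute identically to numerator and denominator and cancel inside $L(y)$. A direct expansion of the squared norms gives
$$L(y) \;=\; \frac{\|y - f(D')\|_2^2 - \|y - f(D)\|_2^2}{2\sigma^2} \;=\; \frac{2\langle \eta, v\rangle + \|v\|_2^2}{2\sigma^2},$$
where $\eta = y - f(D) \sim \mathcal{N}(0, \sigma^2 I_k)$. Under $y \sim \mathcal{M}(D)$, the inner product $\langle \eta, v\rangle$ is distributed as $\mathcal{N}(0, \sigma^2 \|v\|_2^2)$, so $L(y)$ is itself a one-dimensional Gaussian with mean $\|v\|_2^2/(2\sigma^2)$ and standard deviation $\|v\|_2/\sigma$.

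The second step is the tail bound: $L(y) > \epsilon$ is equivalent to $\langle \eta, v\rangle > \sigma^2\epsilon - \|v\|_2^2/2$, which translates to a standard-normal tail at level $t = \sigma\epsilon/\|v\|_2 - \|v\|_2/(2\sigma)$. Applying the classical Mills-ratio bound $\Pr[\mathcal{N}(0,1) > t] \le \frac{1}{t\sqrt{2\pi}} e^{-t^2/2}$ and requiring that this be at most $\delta$ gives an inequality in $\sigma$; substituting $\|v\|_2 \le \Delta_2$ and simplifying yields that $\sigma \ge \Delta_2 \sqrt{2\ln(1.25/\delta)}/\epsilon$ suffices (with the constant $1.25$ arising from absorbing the Mills-ratio prefactor and rounding up cleanly). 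Finally, invoking the privacy-loss-to-DP equivalence gives $(\epsilon,\delta)$-DP.

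The delicate part is the third step: the constant $1.25$ is not obvious and requires the sharper Mills-ratio form of the Gaussian tail bound plus a careful simplification of the implicit inequality in $\sigma$. The reduction to one dimension and the derivation of the Gaussian form of $L$ are essentially bookkeeping, but getting a clean closed-form threshold on $\sigma$ (rather than an unwieldy transcendental condition) is the main technical care required. I would also note that the paper's statement has $\epsilon$ rather than $\epsilon^2$ in the denominator of the variance, which appears to be a typo: the standard result gives variance $\Delta_2^2 \cdot 2\ln(1.25/\delta)/\epsilon^2$.
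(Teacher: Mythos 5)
The paper does not prove this lemma at all: it is stated in the preliminaries as a standard background fact (the Gaussian mechanism from Dwork and Roth), so there is no in-paper proof to compare against. Your sketch is the correct classical argument: reduce to the one-dimensional privacy-loss random variable $L(y) = \bigl(2\langle \eta, v\rangle + \|v\|_2^2\bigr)/(2\sigma^2)$, observe it is Gaussian with mean $\|v\|_2^2/(2\sigma^2)$ and standard deviation $\|v\|_2/\sigma$, bound its upper tail via the Mills-ratio form of the Gaussian tail, and solve the resulting inequality for $\sigma$. One small caveat worth making explicit in a full write-up is that this derivation (and the clean constant $1.25$) relies on the restriction $\epsilon < 1$, which the paper does impose globally on its privacy budgets but does not repeat in this particular lemma statement; without that restriction the inequality in $\sigma$ does not simplify to the stated threshold. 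You are also right that the variance in the paper's statement should read $(\Delta_2)^2 \cdot 2\ln(1.25/\delta)/\epsilon^2$ rather than $/\epsilon$, and the statement additionally contains a second typographical slip: the sensitivity hypothesis is written $\mathsf{sens}_1(f) \le \Delta_2$ but should be the $\ell_2$ sensitivity $\mathsf{sens}_2(f) \le \Delta_2$, consistent with the preceding sentence.
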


\section{Private APSD Approximation For Recursively Separable Graph}\label{sec:recursive}

In this section, we study answering all-pair shortest distances on \emph{recursively separable} graphs with differential privacy. Let $G$ be a graph with non-negative edge weights, and let $x,y$ be two vertices in $G$. we use $d_G(x,y)$ to denote the (shortest) distance between $x$ and $y$ in $G$. Specifically, we establish the following result:
\begin{theorem} [Restatement of Theorem \ref{thm:intro1}]\label{thm:main}
   Fix any $0<\epsilon,\delta<1$, and $n,p\in \mathbb{N}$. For any graph $G = ([n],E,w)$ that is $(p,q,q')$-recursively separable (\Cref{def:recur_separability}) for some constants $\frac{1}{2} \leq q \leq q' <1$, there is an $(\epsilon,\delta)$-differentially private algorithm such that, with high probability, it outputs estimations of shortest distances $\{\widehat{d}_G(x,y)\}_{x,y\in [n]}$ satisfying 
    $$|\widehat{d}_G(s,t) - d_G(s,t)| \leq O\left( \frac{p \cdot \log^2 n\log(n/\delta)}{\epsilon} \right).$$
\end{theorem}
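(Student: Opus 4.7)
The plan is a divide-and-conquer algorithm driven by the decomposition guaranteed by $(p,q,q')$-recursive separability. I would first build a decomposition tree $T$ of $G$: at the root, apply the separability to find a separator $S_{\text{root}} \subseteq V$ of size at most $p$ whose removal leaves connected components each of size at most $q|V|$; recurse on each such component. Because $q$ is a constant less than $1$, the depth of $T$ is $L = O(\log n)$. Write $G_t$ for the subgraph at node $t$ and $S_t$ for its separator. The identity driving the whole scheme is that for any $x, y \in V(G_t)$ whose shortest path crosses $S_t$,
\[
d_{G_t}(x, y) \;=\; \min_{a, b \in S_t} \bigl( d_{G_{t_x}}(x, a) + d_{G_t}(a, b) + d_{G_{t_y}}(y, b) \bigr),
\]
where $t_x, t_y$ are the children of $t$ containing $x$ and $y$. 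Unrolling this identity level by level expresses $d_G(x, y)$ as a chain of at most $L$ terms, each of which is a distance internal to some $G_t$ between two separator vertices.

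Next, I would privately release noisy estimates of these ``inter-separator'' distances at every node $t \in T$. Specifically, for each $t$ release $\widehat{d}_{G_t}(a, b)$ for $a, b$ ranging over $S_t$ together with the ancestor-separator vertices bordering $G_t$ --- the ``shortcut frontier'' of size $O(p)$ described in the technical overview. Since a single edge change in $G_t$ shifts each such distance by at most $1$, the $\ell_2$-sensitivity of the $O(p^2)$-dimensional output at $t$ is $O(p)$, so the Gaussian mechanism with per-level budget $(\epsilon/L, \delta/L)$ adds noise of order $\tilde O(p\log(n/\delta)/\epsilon)$ per entry. Privacy follows from composition across levels: each edge $e \in E$ is contained in $G_t$ only for the $L$ nodes $t$ on the unique root-to-leaf path in $T$ that reaches the leaf containing $e$, so summing the per-level costs yields overall $(\epsilon,\delta)$-DP.

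Finally, $\widehat{d}_G(x, y)$ is obtained by post-processing: evaluate the unrolled identity using the released noisy shortcuts. Since the chain has length $L = O(\log n)$ and each summand carries additive noise $\tilde O(p\log(n/\delta)/\epsilon)$, summing the $L$ contributions along the chain and taking a union bound over all $n^2$ pairs $(x, y)$ yields worst-case additive error $O(p \log^2 n \log(n/\delta)/\epsilon)$, matching the theorem. The structural reason the chain has length $L$ rather than scaling with the total number of tree nodes is exactly the ``complete bipartite shortcut'' construction between adjacent separator levels: it ensures that every time the reconstructed path enters a subgraph $G_t$, it does so via a pair of separator vertices already represented in the released data.

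The main obstacle I anticipate is a careful choice of the ``frontier'' so that two competing demands are simultaneously met: (i) the released distance set at each $t$ is rich enough that every shortest path entering or leaving $G_t$ uses a vertex pair already released at level $t$, enabling the length-$L$ reconstruction; yet (ii) the number of vertices participating at each node remains $O(p)$, preserving the $O(p)$ bound on $\ell_2$-sensitivity per level. Once this structural claim is established inductively on the depth of $T$, the privacy analysis is a direct application of composition and the utility analysis a routine Gaussian tail estimate combined with the $n^2$-way union bound.
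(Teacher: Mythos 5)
Your proposal takes the same route as the paper: build an $O(\log n)$-depth decomposition tree via the separability guarantee, release Gaussian-noised distances for pairs within each separator and for cross-level (parent-separator $\times$ child-separator) pairs, argue privacy from the fact that each edge lives in only $O(\log n)$ tree nodes so advanced composition applies, and reconstruct each $d_G(s,t)$ by post-processing along a length-$O(\log n)$ chain of noisy shortcuts. However, the step you flag as the main obstacle is exactly the crux of the argument, and it is not established in your writeup: the claim that the reconstruction recursion is a \emph{single} chain of length $O(\log n)$ rather than a binary tree whose $\Theta(n)$ leaves would each contribute additive error.

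Two structural points are needed to make that claim go through, and both are missing. First, your recursion identity $d_{G_t}(x,y)=\min_{a,b\in S_t}\left(d_{G_{t_x}}(x,a)+d_{G_t}(a,b)+d_{G_{t_y}}(y,b)\right)$ only makes sense if $a\in V(G_{t_x})$ and $b\in V(G_{t_y})$; the paper arranges this by copying $S_t$ into \emph{both} children (so $G_{t_x}=(V'_{t_x}\cup S_t, \dots)$) rather than recursing on bare components as your text suggests. Second, and more crucially, the recursion must be pruned so as not to branch: when descending to compute $d_{G_{b\circ u}}(s,a)$ for $a$ in the parent separator $S_b$, you need the fact (Lemma~\ref{lem:cross-level-shortcut}) that $d_{b\circ u}(s,a)=\min_{z\in S_{b\circ u}}d_{b\circ u}(s,z)+d_{b\circ u}(z,a)$, and that the second summand is precisely a released cross-level shortcut. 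This collapses what would otherwise be a second full recursive call into one noisy lookup; the paper then runs the induction with the potential $\texttt{err}(b)=\zeta_1+(h-|b|)\zeta_2$. Without this pruning lemma the naive unrolling spawns two recursive calls at every level and the utility bound degrades to something polynomial in $n$. A smaller note: you budget $(\epsilon/L,\delta/L)$ per level via basic composition, which costs an extra $\sqrt{\log n}$ factor relative to the advanced composition the paper uses; this is harmless at the level of $\widetilde{O}$ but does not literally match the stated constant of $\log^2 n$.
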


\paragraph{Proof outline of Theorem~\ref{thm:main}.} We provide our algorithms for privately computing APSD in recursively separable graphs in Section~\ref{sec:algorithm}. In Section~\ref{sec:privacy}, we present the privacy guarantee (Theorem~\ref{thm:privacy}) and its analysis of our algorithms. The utility guarantee (Theorem~\ref{thm:utility}) is given in Section~\ref{sec:utility}, where it is proven by induction, with a detailed analysis deferred to Appendix~\ref{app:utility}. Combining Theorem~\ref{thm:privacy} and Theorem~\ref{thm:utility} concludes the proof of Theorem~\ref{thm:main}.

\subsection{The Algorithm}\label{sec:algorithm}

Our algorithm is built upon a sequence of decompositions applied to the input graph, with the resulting process being traceable through a binary tree.

\paragraph{The Construction of the Binary Tree.} Given an unweighted and undirected graph $G=([n],E)$ that is $(p,q,q')$-recursively separable, we consider a deterministic and recursive procedure that finally separates the graphs into $O(n)$ pieces of subgraphs of constant size (we note that the vertex set in these pieces may have intersection). {We will see later that the bounded number of pieces is ensured by the property of recursive separability.} More specifically, in the first epoch, since $G$ is $(p,q,q')$-recursively separable, then there exists an $S\subseteq [n]$ such that removing vertices in $S$ resulting two subgraphs {$G_0' = (V_0',E(V_0')), G_1' =(V_1',E(V_1'))$} that are not connected, and {$\max\{|V_0'|, |V_1'|\}\leq qn.$} We let 
\begin{align}
    \label{eq:firstrecursivestep}
    G_0 = (V_0' \cup S, E(V_0'\cup S) \backslash E(S))\quad \text{and} \quad G_1 = (V_1' \cup S, E(V_1'\cup S) \backslash E(S)).
\end{align} 

That is, we union $S$ into $G_1'$ (or $G_2'$) together with all edges incident between $S$ and $G_1'$ (or $G_2'$). From the recursive separability of $G$, we also have that $$\max\{|V_0|, |V_1|\}\leq qn + |S| \leq q'n,$$
where $V_0$ (or $V_1$) is the set of vertices of $G_0$ (or $G_1$).

In the recursive subroutine, for graph $G_b$ where $b$ is a binary string of length that depends on the current depth of recurrence, there must exist a separator $S_b$ that separates $G_b$ into $G_{b\circ 1}'$ and $G_{b\circ 0}'$. Similarly as in \cref{eq:firstrecursivestep}, we let $$G_{b\circ 1} = (V_{b\circ 1}' \cup S_b, E(V_{b\circ 1}' \cup S_b) \backslash E(S_b)) \quad \text{ and } \quad G_{b\circ 0} =(V_{b\circ 0}' \cup S_b, E(V_{b\circ 0}' \cup S_b) \backslash E(S_b)).$$ 

The recursive procedure terminates when $G_b$ has constant size $c$. Clearly, the depth of the recursion is $\log_{1/q'} \frac{n}{c} = {\log(n/c) \over \log(1/q')} = O(\log n)$ for any constant $0<q'<1$. Further, these graphs construct a binary tree $\mathcal{T}$ with $O(\log n)$ levels, where the root of $\mathcal{T}$ is $G$, and the nodes in the $i$-th level is just the all $G_b$'s with $|b| = i$, and each $G_b$ in the leaves of $\mathcal{T}$ has size at most $c$. In each non-leaf node of $\mathcal{T}$, we associate it with a label $(G_b, S_b)$ to denote the subgraph it represents and the separator used to further split $G_b$. We note that for the root, $b = \emptyset$.

\paragraph{The Algorithm.}  With the construction of the binary tree described above, we are ready to present our algorithm. For keeping the presentation modular, we present it in form of three algorithms: Algorithm \ref{alg:shortcut} defines how we add shortcuts, Algorithm \ref{alg:recursive-apsp} describes the recursive procedure that is used by Algorithm \ref{alg:apsp} to compute all-pair-shortest distances. 

In Algorithm \ref{alg:shortcut}, we first add highly structured shortcuts based on the decomposition described above to generate a private synthetic graph. {In particular, there are two types of shortcuts: (1) between all pairs of vertices within each separator that is determined by the decomposition procedure, forming a complete graph, and (2) between vertices in every pair of adjacent separators, forming a complete bipartite graph.} 

Then, as a post-processing stage, in Algorithm \ref{alg:apsp} we compute the all-pair shortest path distances privately for a weighted graph by calling the recursive procedure for each pair (Algorithm \ref{alg:recursive-apsp}). Here, for any $s,t\in V$, we use 
$d_b(s,t)$ to denote the value of the \textit{local} shortest distance between $s$ and $t$ in the subgraph $G_b$. If either $s$ or $t$ does not appear in $V_b$, then we set $d_b(s,t) = \infty$. In our algorithms, for any $x,y\in [n]$ and $b\in \{0,1\}^*$, we define $\texttt{IsShortcut}(x,y,b) = \textsf{\em True}$ if a noisy edge (i.e., a shortcut) is added between $x,y$ within the separator of $G_b$, or between the separator of $G_b$ and that of its precursor graph.

\begin{algorithm2e}[!ht]
\SetKwInput{Input}{Input}
\SetKwInOut{Output}{Output}
\Input{Graph $G=({V},{E},w)$, private parameter $\epsilon,\delta$.}
\DontPrintSemicolon
1. Recursively construct a binary tree $\mathcal{T}$ as described in this section.\;

2. Set $h = \log_{1/q'}(n/c)$,  $\epsilon' = \epsilon/\sqrt{4h\log(1/\delta')}$, and $\delta' = \delta/(4h)$.\;

3. Let $\sigma = p\sqrt{2\log(1.25/\delta')}/\epsilon'$.\;

4. \For{non-leaf node $(G_b, S_b) \in V(\mathcal{T})$} 
{
\For{$x,y\in S_b$ such that $x\neq y$}{
\texttt{IsShortcut}$(x,y,b) = \textsf{\em True}$.\; 
Let $\widehat{d}_b(x,y) = d_b(x,y) + \mathcal{N}(0,\sigma^2)$.\;
}
\If{$b\neq \emptyset$}{
Let $b'$ be the binary string that removes the last bit in $b$.\;
\For{$(x,y) \in S_{b'}\times S_{b}$}{
\If{$x,y\notin S_{b'} \cap S_b$}{
\texttt{IsShortcut}$(x,y,b') = \textsf{\em True}$.\;
 
Let $\widehat{d}_b(x,y) = d_b(x,y) + \mathcal{N}(0,\sigma^2)$.\;
}
}
}
}
5. \For{leaf node $(G_b,\text{-}) \in V(\mathcal{T})$}{
\For{$x,y\in V_b$ such that $x\neq y$}{
\texttt{IsShortcut}$(x,y,b) = \textsf{\em True}$.\; 
 
Let $\widehat{d}_b(x,y) = d_b(x,y) + \mathcal{N}\left(0,\frac{2c^2\log(1.25/\delta')}{(\epsilon')^2}\right)$.\;
}
}
\Output{The binary tree $\mathcal{T}$, and $\widehat{d}_b(u,v)$ for all $u,v\in V$ and $b\in \{0,1\}^*$ such that $\texttt{IsShortcut}(x,y,b) = \textsf{\em True}$.}
\caption{Constructing private shortcuts.}\label{alg:shortcut}
\end{algorithm2e}

\begin{algorithm2e}[!ht]
\SetKwInput{Input}{Input}
\SetKwInOut{Output}{Output}
\Input{A binary tree constructed out of some graph $G=(V,E,w)$, a graph $G_b$, a pair of vertices $s,t\in V_b$, integer $k\in \mathbb{N}$.}

1. \If{$\texttt{IsShortcut}(s,t,b) = \textsf{\em True}$}
{
Halt and directly output $\widehat{d}_b(s,t)$ computed by Algorithm \ref{alg:shortcut}.
}
2. \If{$k=0$}{

\If{both $s,t\in V_{b\circ a}$ for $a\in \{0,1\}$}{
\For{$z\in S_b$}{
$\widehat{d}_{b\circ a}(s,z)\leftarrow$ Recursive-APSD$(\mathcal{T},G_{b\circ a}, (s,z), k+1)$.\;
$\widehat{d}_{b\circ a}(t,z)\leftarrow$ Recursive-APSD$(\mathcal{T},G_{b\circ a}, (t,z), k+1)$.\;
 }
$\widehat{d}_{b\circ a}(s,t) \leftarrow $ Recursive-APSD$(\mathcal{T},G_{b\circ a}, (s,t), 0)$.\;
   $\widehat{d}_b(s,t) \leftarrow \mathop{\min}\{\widehat{d}_{b\circ a}(s,t), \min_{x,y\in S_b} \widehat{d}_{b\circ a}(s,x) + \widehat{d}_b(x,y) + \widehat{d}_{b\circ a}(y,t)\}$\;
}
\If{$s\in V_{b\circ a}$ and $t\in V_{b\circ \bar{a}}$,
}
{
\For{$z\in S_b$}{
$\widehat{d}_{b\circ a}(s,z)\leftarrow$ Recursive-APSD$(\mathcal{T},G_{b\circ a}, (s,z), k+1)$.\;
$\widehat{d}_{b\circ \bar{a}}(t,z)\leftarrow$ Recursive-APSD$(\mathcal{T},G_{b\circ \bar{a}}, (t,z), k+1)$.\;
 }
$\widehat{d}_b(s,t) \leftarrow  \min_{x,y\in S_b} \widehat{d}_{b\circ a}(s,x) + \widehat{d}_b(x,y) + \widehat{d}_{b\circ \bar{a}}(y,t)$\;
}

}
3. \If{$k>0$}{
Let $b'$ be the binary string that removes the last bit in $b$.\;
\If{$s,t \notin S_{b'}$}{
Halt and output \texttt{FAIL}.
}
WLOG let $t\in S_{b'}$ (if not we just switch $s$ and $t$).\;
\For{$z\in S_b$}{
$\widehat{d}_{b\circ a}(s,z)\leftarrow$ Recursive-APSD$(\mathcal{T},G_{b\circ a}, (s,z), k+1)$.\;
 }
\If{both $s,t\in V_{b\circ a}$ for $a\in \{0,1\}$}{
$\widehat{d}_{b\circ a}(s,t) \leftarrow $ Recursive-APSD$(\mathcal{T},G_{b\circ a}, (s,t), 0)$.\;

   Let $\widehat{d}_b(s,t) \leftarrow \mathop{\min}\{\widehat{d}_{b\circ a}(s,t), \min_{x\in S_b} \widehat{d}_{b\circ a}(s,x) + \widehat{d}_b(x,t)$\}.\;
}
\If{$s\in V_{b\circ a}$ and $t\in V_{b\circ \bar{a}}$,
}
{
$\widehat{d}_b(s,t) \leftarrow  \min_{x\in S_b} \widehat{d}_{b\circ a}(s,x) + \widehat{d}_b(x,t)$.\;
}

}

\Output{estimated local distance $\widehat{d}_b(s,t)$.}

\caption{Recursive-APSD$(\mathcal{T}, G_b, (s,t), k)$}
\label{alg:recursive-apsp}
\end{algorithm2e}

\begin{algorithm2e}[!ht]
\SetKwInput{Input}{Input}
\SetKwInOut{Output}{Output}
\Input{Graph $G=({V},{E},w)$, private parameter $\epsilon,\delta$.}
\DontPrintSemicolon
1. Run Algorithm \ref{alg:shortcut} on $G$ and $\epsilon,\delta$, obtain the labeled binary tree $\mathcal{T}$.\;
2. \For{$s,t\in V$ such that $s\neq t$}{
$\widehat{d}(x,y) \leftarrow$ Recursive-APSD$(\mathcal{T}, G, (s,t), 0)$. \hfill \tcp{Algorithm \ref{alg:recursive-apsp}}
}
\Output{estimated distances $\widehat{d}(x,y)$ on $G$.}

\caption{Differentially private all-pair-shortest-path approximation}
\label{alg:apsp}
\end{algorithm2e}

\subsection{Privacy Analysis}\label{sec:privacy}

Here, we analyze the privacy guarantee of our algorithm. We observe that constructing the binary tree $\mathcal{T}$ does not compromise privacy, as it only requires the topology, not the edge weights, as input. Consequently, the recursive procedure (Algorithm \ref{alg:recursive-apsp}) and Algorithm \ref{alg:apsp} involving the tree $\mathcal{T}$ are simply post-processing steps of Algorithm \ref{alg:shortcut}. Therefore, we focus our privacy analysis solely on Algorithm \ref{alg:shortcut}. We use the following lemma to bound the sensitivity of the binary tree that traces the decomposition: 

\begin{lemma}\label{lem:tree_sensitive}
Let $\mathcal{T}$ be the labeled tree constructed from $G$. For any vertex $u,v$ with an edge $e = \{u,v\}$ between them, $e$ appears in at most $h = \log_{1/q'}(n/c)$ nodes in $\mathcal{T}$.
\end{lemma}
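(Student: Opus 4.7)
The plan is to prove the stronger claim that at each level of the binary tree $\mathcal{T}$, at most one node $G_b$ contains the edge $e = \{u,v\}$; combined with $\mathcal{T}$ having depth $h = \log_{1/q'}(n/c)$, this immediately yields the bound.

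First, I would establish the key single-step observation: whenever $G_b$ is split into children $G_{b\circ 0}$ and $G_{b \circ 1}$ using separator $S_b$, the edge $e$ propagates to at most one of the two children. This reduces to a short case analysis on the location of the endpoints $u,v$. If both $u, v \in S_b$, then $e \in E(S_b)$ is explicitly removed in the construction of both children (recall $G_{b\circ a} = (V'_{b \circ a}\cup S_b, E(V'_{b \circ a}\cup S_b) \setminus E(S_b))$), so $e$ appears in neither. If $u \in V'_{b \circ 0}$ and $v \in V'_{b \circ 1}$, then this case is ruled out by the definition of a separator, since $S_b$ must separate $V'_{b\circ 0}$ from $V'_{b \circ 1}$ with no edge crossing. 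In every remaining configuration, only one of $V_{b\circ 0}, V_{b\circ 1}$ contains both endpoints of $e$, so only that child inherits the edge.

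Next, I would lift this single-step property to a same-level statement. Observe that if $e \in E(G_b)$, then the construction implies $e \in E(G_{b'})$ for the parent $b'$ of $b$, hence by induction $e$ appears in every ancestor of any node containing it. Now suppose two distinct nodes $G_{b_1}, G_{b_2}$ at the same level of $\mathcal{T}$ both contain $e$. Their lowest common ancestor $G_{b^*}$ has $G_{b_1}$ and $G_{b_2}$ lying in its two different subtrees, so by the ancestor-propagation property $e$ would appear in both children $G_{b^* \circ 0}, G_{b^* \circ 1}$, contradicting the single-step observation. Thus at most one node per level contains $e$, and since $\mathcal{T}$ has at most $h$ levels of splitting, $e$ appears in at most $h$ nodes overall.

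The only (modest) obstacle is making the single-step case analysis airtight; in particular, one must invoke both the explicit removal of $E(S_b)$ in the construction of each child (to handle the case where both endpoints lie in the separator) and the defining property of a separator (to exclude edges crossing between $V'_{b \circ 0}$ and $V'_{b \circ 1}$). Once these two facts are cleanly stated, the induction over levels is immediate.
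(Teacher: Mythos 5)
Your proposal is correct and follows essentially the same argument as the paper: the nodes of $\mathcal{T}$ containing $e$ form a single root-to-node path (the paper phrases this as ``$e$ only exists in any $G_{b'}$ such that $b'$ is a prefix of $b$''), whose length is bounded by the tree depth $h$. Your version is somewhat more explicit --- in particular about why $e$ cannot descend into both children, which the paper leaves implicit in the remark that $e$ is removed iff $u,v\in S_b$ --- but the underlying idea and structure are the same.
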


\begin{proof}
Suppose $e$ is in the edge set of some internal node $G_b$. Recall that in the division of $G_b$, the edge $e$ will be removed if and only if $u,v\in S_b$. Then $e$ only exists in any $G_{b'}$ such that $b'$ is the prefix of $b$. Since $|b| \leq h$, the lemma holds.

On the other hand, if $e$ is in the edge set of some leaf node $G_b$. Then the recursive construction terminates on $G_b$, and $e$ only exists in any $G_{b'}$ such that $b'$ is the prefix of $b$, again the number of such nodes is at most $h$. This completes the proof of \Cref{lem:tree_sensitive}.
\end{proof}

We are now ready to prove the following theorem on the privacy guarantee, utilizing the advanced composition lemma (Lemma \ref{lem:adv_composition}).

\begin{theorem}\label{thm:privacy}
Algorithm \ref{alg:apsp} preserves $(\epsilon, \delta)$-differential privacy. 
\end{theorem}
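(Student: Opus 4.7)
The plan is to reduce the privacy analysis to Algorithm~\ref{alg:shortcut} and then apply advanced composition across the nodes of the binary tree $\mathcal{T}$. First I would observe that the construction of $\mathcal{T}$ in step~1 of Algorithm~\ref{alg:shortcut}, together with all the separators $S_b$, vertex sets $V_b$, and flags $\texttt{IsShortcut}(\cdot,\cdot,b)$, depends only on the \emph{topology} of $G = ([n], E, w)$, never on the edge weights. Hence on any two neighboring inputs $G, G'$ the tree $\mathcal{T}$ and all bookkeeping are identical, and Algorithm~\ref{alg:recursive-apsp} together with Algorithm~\ref{alg:apsp} access the sensitive weights only through the noisy quantities $\widehat d_b(\cdot,\cdot)$ produced by Algorithm~\ref{alg:shortcut}. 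By the post-processing lemma it therefore suffices to prove $(\epsilon,\delta)$-differential privacy for Algorithm~\ref{alg:shortcut}.

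Next I would analyze the noise added at each node $(G_b, S_b) \in V(\mathcal{T})$ as an independent Gaussian mechanism and bound its $\ell_2$ sensitivity. Fix neighboring $G, G'$ differing by at most $1$ in the weight of a single edge $e = \{u,v\}$. If $e \notin E(G_b)$, then every exact distance $d_b(x,y)$ is identical on $G$ and $G'$, so the corresponding release contributes nothing to the privacy loss. When $e \in E(G_b)$, the $1$-Lipschitzness of shortest-path distance in edge weights gives $|d_b(x,y) - d'_b(x,y)| \leq 1$ for every pair. For a non-leaf node the coordinates released concern pairs within $S_b$ and pairs across $S_{b'} \times S_b$, a total of at most $|S_b|^2 + |S_{b'}||S_b| \leq 2p^2$ coordinates, so the $\ell_2$ sensitivity is $O(p)$; for a leaf node the $\leq c^2$ coordinates give $\ell_2$ sensitivity $\leq c$. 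The chosen noise scales $\sigma = p\sqrt{2\log(1.25/\delta')}/\epsilon'$ and $c\sqrt{2\log(1.25/\delta')}/\epsilon'$ are then precisely (up to an absorbed constant that one could fold into $\epsilon'$) what the Gaussian mechanism requires to make each individual per-node release $(\epsilon',\delta')$-differentially private.

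Finally I would invoke Lemma~\ref{lem:tree_sensitive}: any single edge $e$ appears in at most $h = \log_{1/q'}(n/c)$ nodes of $\mathcal{T}$, so on neighboring inputs at most $h$ of the per-node Gaussian releases can differ in distribution, while all others contribute zero privacy loss. The total release is therefore the composition of at most $h$ mechanisms, each $(\epsilon',\delta')$-DP. Plugging the parameter settings $\epsilon' = \epsilon/\sqrt{4h\log(1/\delta')}$ and $\delta' = \delta/(4h)$ from step~2 of Algorithm~\ref{alg:shortcut} into the advanced composition theorem (Lemma~\ref{lem:adv_composition}), with the auxiliary parameter chosen so that the $h\delta' + \delta'''$ term is at most $\delta$, and bounding $e^{\epsilon'}-1 \leq 2\epsilon'$ for small $\epsilon'$, yields overall $(\epsilon,\delta)$-differential privacy for Algorithm~\ref{alg:shortcut}, and hence for Algorithm~\ref{alg:apsp} by post-processing.

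There is no deep obstacle, only careful bookkeeping; the main conceptual point to state cleanly is the topology/weight separation, so that only the Gaussian noise in step~4 and step~5 of Algorithm~\ref{alg:shortcut} must be tracked. The only place where care is required is the per-node sensitivity bound: one must verify that jointly releasing the within-separator shortcuts and the cross-separator shortcuts at the same node still has $\ell_2$ sensitivity $O(p)$, which follows because both groups have size $O(p^2)$ and each coordinate is $1$-sensitive in the affected edge.
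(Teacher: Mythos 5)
Your proposal is correct and follows essentially the same route as the paper's proof: reduce to Algorithm~\ref{alg:shortcut} by post-processing (the tree $\mathcal{T}$ depends only on topology), bound the per-node $\ell_2$ sensitivity of the released distance vectors to invoke the Gaussian mechanism, use Lemma~\ref{lem:tree_sensitive} to cap the number of affected nodes at $h$, and close with advanced composition. The only cosmetic difference is that you bundle the within-separator and cross-separator releases at a node into a single mechanism with $\ell_2$ sensitivity $O(p)$, whereas the paper treats them as two separate $(\epsilon',\delta')$-DP mechanisms each with sensitivity $\le p$; this changes only constants, and you correctly flag that the $\sqrt{2}$ can be absorbed into $\epsilon'$.
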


\begin{proof}
    Suppose for a pair of neighboring graphs $G$ and $G'$, they have a difference in the edge weight by $1$ of $e = \{u,v\}$. Let $G_{b}$ be the subgraph of $G$ that contains $e$ in its edge set. 
    \begin{enumerate}
        \item If $G_b$ is an internal node of $\mathcal{T}$. let $d_{S_b} = \{d_b(x,y):x\neq y \land (x,y)\in S_b\}$ be the vector of distances between vertices in the separator of $G_b$. Since 
    $|d_b(x,y) - d_b'(x,y)|\leq 1$ for any distinct $x,y\in S_b$, then 
    $\|d_{S_b}\|_2 \leq p$ as $|S_b\times S_b| \leq p^2$. Then, by the Gaussian mechanism, outputting $\widehat{d}_b(x,y)$ for all distinct $x,y\in S_b$ preserves $(\epsilon',\delta')$-differential privacy. Similarly, outputting $\widehat{d}_b(x,y)$ for all $(x,y)\in S_b\times S_{b'}$ where $x,y\notin S_{b'}\cap S_b$ also preserves $(\epsilon',\delta')$-differential privacy, as $|S_b\times S_{b'}| \leq p^2$.
    \item If $G_b$ is a leaf node of $\mathcal{T}$. Since $V_b \leq c$, then again be the Gaussian mechanism, outputting the all-pair-shortest distance $\widehat{d}_b(\cdot, \cdot )$ in $G_b$ preserves $(\epsilon',\delta')$-differential privacy.
    \end{enumerate}

    Combining the above argument, Lemma \ref{lem:tree_sensitive} and the advanced composition, we have that Algorithm \ref{alg:shortcut} preserves $(\epsilon,\delta)$-DP, as well as Algorithm \ref{alg:apsp}. This completes the proof of \Cref{thm:privacy}.
\end{proof}

\subsection{Utility Analysis}\label{sec:utility}
Here, we present the utility guarantee of Algorithm \ref{alg:apsp}, which privately computes the all-pair shortest distances for the input graph by repeatedly invokes Algorithm \ref{alg:recursive-apsp} for each pair. We defer the proof in Theorem~\ref{thm:utility} in Appendix~\ref{app:utility}.

\begin{theorem}\label{thm:utility}
    Fix any $0<c\leq n$ and any $0<\gamma<1$. Let $G = (V,E,w)$ be a $(p,q,q')$-recursively separable graph for some $p\in \mathbb{N}$ and $\frac{1}{2}\leq q\leq q' <1$. Then with probability at least $1-\gamma$, we have that for any $s,t\in V$, 
    $$|\widehat{d}(s,t) - d(s,t)| \leq O\left(\frac{(hp + c)\cdot \log\left({h}/{\delta}\right)\cdot \sqrt{h^2+h\log (\max\{p,c\}) + h\log(1/\gamma)} }{\epsilon} \right),$$
    where $h = \log_{1/q'}(n/c)$. That is, for any constant $c$ and $q'$, we have with high probability, 
     $$|\widehat{d}(s,t) - d(s,t)| \leq O\left( \frac{p  \log^2 n\cdot \log\left(\frac{\log n}{\delta}\right)}{\epsilon} \right).$$
\end{theorem}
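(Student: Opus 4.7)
The plan is to analyze Algorithm~\ref{alg:recursive-apsp} by unrolling its nested minimizations into a single min over a family of ``shortcut paths,'' and then bounding the final error via a direct Gaussian concentration argument on those paths. A naive level-by-level induction on the $k$-indexed recursive calls is insufficient: the $k=0$ branch spawns two parallel vertex-to-separator recursions, and a term-by-term ``max of errors'' bound would compound to a factor of $2^h$. The path-based view sidesteps this because the algorithm's output is a single minimum, so bounding the error along just the path that achieves the true minimum is enough.

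\emph{Key steps.} First I would prove the separator identity
\[
    d_b(s,t) = \min\Bigl\{d_{b\circ a}(s,t),\ \min_{x,y\in S_b}\bigl(d_{b\circ a}(s,x) + d_b(x,y) + d_{b\circ a}(y,t)\bigr)\Bigr\},
\]
valid for $s,t \in V_{b\circ a}$, by splitting any $s$-$t$ shortest path in $G_b$ at its first and last intersections with $S_b$; a one-sided analogue with a single crossing $x \in S_b$ handles the $k \geq 1$ case, in which $t \in S_{b'}$ already. Second, applying this identity inductively through the binary tree $\mathcal{T}$, I would show that the algorithm's output and the true distance admit a common min-over-paths representation:
\[
    \widehat{d}(s,t) = \min_{P \in \Pi(s,t)} \sum_{e \in P} \widehat{w}(e), \qquad d(s,t) = \min_{P \in \Pi(s,t)} \sum_{e \in P} w(e),
\]
where each $P \in \Pi(s,t)$ is a sequence of at most $L = O(h)$ shortcut hops through separator vertices at adjacent levels, terminating in a single leaf-level shortcut, and $\widehat{w}, w$ are the noisy and true shortcut weights installed by Algorithm~\ref{alg:shortcut}. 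Third, I would condition on the good event that every Gaussian noise drawn by Algorithm~\ref{alg:shortcut} has magnitude at most $\eta := O(\sigma\sqrt{\log(N/\gamma)})$ for a separator shortcut, and at most $\eta_{\mathrm{leaf}} := O(\sigma_{\mathrm{leaf}}\sqrt{\log(N/\gamma)})$ for a leaf shortcut, where $N = \mathrm{poly}(n,p,c)$ is the total number of noise variables; by Gaussian tail bounds and a union bound, this event has probability at least $1-\gamma$. On this event, for every $P \in \Pi(s,t)$ the cost perturbation is at most $O(h\eta + \eta_{\mathrm{leaf}})$, since $P$ consists of $O(h)$ separator hops plus one leaf hop, and this bound is preserved by the outer $\min$, yielding $|\widehat{d}(s,t) - d(s,t)| \leq O(h\eta + \eta_{\mathrm{leaf}})$. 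Substituting $\sigma = \Theta(p\sqrt{\log(1/\delta')}/\epsilon')$, $\sigma_{\mathrm{leaf}} = \Theta(c\sqrt{\log(1/\delta')}/\epsilon')$, $\epsilon' = \epsilon/\sqrt{4h\log(1/\delta')}$, $\delta' = \delta/(4h)$, and $\log N = O(\log n + \log\max\{p,c\})$ (noting $h\log n = \Theta(h^2)$ since $h = \Theta(\log n)$) then produces the claimed bound.

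\emph{Main obstacle.} The delicate part is the second step: establishing that the implicit family $\Pi(s,t)$ enumerated by the nested minimizations of Algorithm~\ref{alg:recursive-apsp} coincides with the family along which $d(s,t)$ decomposes through the recursion tree. The $k=0$ versus $k \geq 1$ distinction must be tracked carefully --- in the $k=0$ case both endpoints are free and the separator is crossed twice (entry $x$, exit $y$), whereas in the $k \geq 1$ case one endpoint already lives in the parent separator so only a single crossing $x$ is needed --- and the ``stay in the current subgraph'' option at each level (the $\widehat{d}_{b\circ a}(s,t)$ term) must be threaded into $\Pi$ correctly so that every path bottoms out at a leaf shortcut. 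Once this structural identification is in place, the Gaussian concentration argument of the third step is routine.
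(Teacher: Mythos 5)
Your proposal takes a genuinely different route from the paper's. The paper proceeds by an explicit two‐track induction on the recursion depth: it defines $\mathtt{err}(b)=\zeta_1+(h-\lvert b\rvert)\zeta_2$ and, crucially, maintains two separate invariants depending on whether the call on $(G_b,(s,t))$ is made with $k=0$ (error bound $2\,\mathtt{err}(b)$) or $k\geq1$ (error bound $\mathtt{err}(b)$). The $k\geq1$ track only loses one $\zeta_2$ per level, and the factor of two is paid exactly once at each $k=0$ node, which is how the paper obtains the additive $O(h)$ accumulation. You instead unroll the recursion into a minimum over a family $\Pi(s,t)$ of shortcut paths and appeal to concentration plus a union bound. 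Both arguments share the same ingredients (the separator identity of Lemma~\ref{lem:resursive_equation}, the cross-level pruning of Lemma~\ref{lem:cross-level-shortcut}, and the Gaussian tail bound of Lemma~\ref{lem:error_in_shortcut}), so they differ in how the $O(h)$ dependence is organized, not in what powers it.

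The gap is exactly where you flag it, and I would not describe it as ``routine once the structural identification is in place.'' What you must actually prove is that every element of $\Pi(s,t)$ contains $O(h)$ shortcut terms. This is not obvious from the code of Algorithm~\ref{alg:recursive-apsp}: a $k=0$ call whose endpoints lie on the same side spawns a nested $k=0$ call, and a $k\geq1$ call whose endpoints lie on the same side also spawns a nested $k=0$ call. If one naively tallies the number of shortcut hops via the recurrence $T_0(d)\leq\max\{T_0(d-1),\,1+2T_1(d-1)\}$, $T_1(d)\leq\max\{T_0(d-1),\,1+T_1(d-1)\}$, the $T_0\to T_1\to T_0$ interleaving grows exponentially in $d$, not linearly. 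The $O(h)$ bound only holds after one argues (as the paper does through its $k=0$ versus $k\geq1$ invariants, together with Lemma~\ref{lem:cross-level-shortcut}) that the relevant ``true-minimum'' path-tree never realizes this worst case: the cross-level shortcut collapses the entire tail of the segment, so the $k\geq1$ track contributes at most one new shortcut per level without re-branching. Your proof has to establish this; it cannot be absorbed into the phrase ``one-sided analogue.''

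Your stated motivation is also slightly off-target. You write that a naive level-by-level induction ``would compound to a factor of $2^h$'' because the $k=0$ branch spawns two parallel recursions, and that the path view sidesteps this. But the danger of $2^h$ compounding is not caused by taking a max over two parallel branches (the min of noisy quantities is controlled by a max of perturbations, which does not double); it is caused by the $\widehat{B}$-term being a \emph{sum} of two estimates, which makes the errors add, and by the nested $k=0$ calls described above. The path view does not sidestep this: it merely re-expresses the same question as ``how many edges does a single element of $\Pi(s,t)$ contain?'' So the claimed advantage over the induction is illusory; both routes must resolve the same combinatorial fact. Once that fact is in hand, your concentration step and parameter substitution do match the paper's Lemma~\ref{lem:error_in_shortcut} and give the stated bound.
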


\paragraph{The corresponding result for pure-DP.}  Our framework can be easily adapted to pure differential privacy by (1) replacing Gaussian noise with Laplace noise in the shortcuts, and (2) employing basic composition instead of advanced composition. Using the Laplace mechanism and the tail bound for Laplace noise, one can directly derive the following theorem from our framework. The proof is therefore omitted.
\begin{theorem}
Fix a $p\in \mathbb{N}$. Let $G = (V,E,w)$ be a $(q,p,p')$-recursively separable graph for some constant $\frac{1}{2}\leq q\leq q' <1$. Then, there is a $(\epsilon,0)$-differentially private algorithm on estimating APSD such that with probability at least $1-\gamma$,
 $$|\widehat{d}(s,t) - d(s,t)| \leq O\left( \frac{p^2 \cdot \log^{2} n(\log n + \log (1/\gamma))}{\epsilon} \right).$$
\end{theorem}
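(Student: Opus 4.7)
The plan is to reuse the entire algorithmic and analytic framework of Theorem~\ref{thm:main}, making only the two replacements suggested in the text: Gaussian noise becomes Laplace noise, and advanced composition becomes basic composition. Concretely, I modify Algorithm~\ref{alg:shortcut} so that every occurrence of $\mathcal{N}(0,\sigma^2)$ is replaced by an independent draw from $\mathtt{Lap}(b)$, while Algorithm~\ref{alg:recursive-apsp} and Algorithm~\ref{alg:apsp} are left untouched (they are post-processing and only perform $\min$-plus operations, which are noise-distribution-agnostic).

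\paragraph{Privacy.} For each internal node $(G_b, S_b)$ of $\mathcal{T}$, the vector of released distances has dimension at most $\binom{|S_b|}{2}\le p^2$, and every coordinate has $\ell_1$-sensitivity at most $1$; hence the $\ell_1$-sensitivity of the whole vector is at most $p^2$. The same bound holds for the between-separator release on $S_b\times S_{b'}$. By the Laplace mechanism, each such release is $(\epsilon',0)$-DP if we use scale $b = p^2/\epsilon'$. Lemma~\ref{lem:tree_sensitive} guarantees that any single edge affects at most $O(h)$ such releases (one within-separator plus one between-adjacent-separators per node in $\mathcal{T}$ containing the edge), so by basic composition (Lemma~\ref{lem:composition}) it suffices to set $\epsilon' = \Theta(\epsilon/h)$. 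This yields a final Laplace scale $b = O(p^2 h/\epsilon)$ and overall $(\epsilon,0)$-DP. The leaf-node releases, whose dimension is $O(c^2) = O(1)$, fit into the same accounting with no asymptotic change.

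\paragraph{Utility.} The recursive structure of Algorithm~\ref{alg:recursive-apsp} guarantees that, whenever $s\neq t$, the estimate $\widehat d(s,t)$ is at most the true distance plus a sum of $O(h)$ individual shortcut noises (one per level of $\mathcal{T}$ along the path traced by the recursion), and is at least the true distance minus a sum of the same form (by a symmetric argument using the shortest path decomposed along the separators, which is exactly the content of the induction in Theorem~\ref{thm:utility}). It therefore suffices to control the maximum magnitude of all the Laplace noises. By the tail bound $\Pr[|\mathtt{Lap}(b)|>t]=e^{-t/b}$ and a union bound over the $\mathrm{poly}(n)$ shortcuts introduced by Algorithm~\ref{alg:shortcut}, every shortcut noise is bounded in absolute value by $O(b\log(n/\gamma))$ with probability at least $1-\gamma$. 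Summing along the $O(h)$ levels gives
\[
\bigl|\widehat d(s,t)-d(s,t)\bigr| \;\le\; O\!\left(h\cdot b\cdot \log(n/\gamma)\right) \;=\; O\!\left(\frac{p^2\, h^2\,(\log n+\log(1/\gamma))}{\epsilon}\right),
\]
and substituting $h = \log_{1/q'}(n/c) = O(\log n)$ for constant $q'$ matches the stated bound.

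\paragraph{Main obstacle.} There is essentially no new technical obstacle, which is why the authors omit the proof. The only delicate point is making sure that the inductive bookkeeping of Theorem~\ref{thm:utility} (deferred to the appendix) transfers cleanly when we replace a Gaussian sub-Gaussian tail by a Laplace sub-exponential tail; this simply costs a $\log(n/\gamma)$ factor per shortcut via the union bound rather than the $\sqrt{\log}$ factor used for Gaussians, which is precisely what upgrades the $\log^{3/2}$-type savings in the approximate-DP bound into the pair $(\log^3 n, \log^2 n\cdot\log(1/\gamma))$ appearing here. Once this is verified, privacy and utility combine to give the claimed statement.
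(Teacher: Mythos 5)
Your proposal is correct and follows exactly the substitution the paper itself prescribes (Laplace for Gaussian, basic composition for advanced), and your accounting — $\ell_1$-sensitivity $p^2$ per release, $O(h)$ affected releases giving scale $O(p^2h/\epsilon)$, then $O(h)$ accumulated shortcut noises each bounded by $O(b\log(n/\gamma))$ via the Laplace tail and a union bound over the $\mathrm{poly}(n)$ shortcuts, producing $O(p^2\log^2 n(\log n+\log(1/\gamma))/\epsilon)$ — matches what the authors intend by ``omitting'' the proof. The only detail you should be explicit about is that the inductive error recursion of Theorem~\ref{thm:utility} is distribution-agnostic (it only uses a high-probability bound $\zeta_2$ on each individual shortcut noise), so swapping the per-shortcut tail bound from $\sigma\sqrt{\log(m/\gamma)}$ to $b\log(m/\gamma)$ carries over without re-proving the induction.
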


\section{Improved Results for \texorpdfstring{$K_h$}{Lg}-Minor-Free Graphs}\label{sec:planar}

In this section, we utilize the algorithmic framework proposed in Section \ref{sec:recursive} to present our improved results on privately answering APSD for some special classes of graphs. In particular, assuming bounded weights, where the maximum weight of edges is denoted as \( W \), and by some specific separation theorems for special graphs, we can use the general framework (along with a sub-sampling trick inside the separator) to achieve a purely additive error of $\sim (nW)^{1/3}$ when privately approximating APSD in $K_h$-minor-free graphs for any constant $h$. Additionally, as a special subclass of $K_5$-minor free graphs, we are able to leverage some unique structural properties of grid graphs to further reduce the additive error to approximately $\sim n^{1/4}\sqrt{W}$.

\begin{theorem}\label{thm:kh_free_and_planar}
    Fix any \( 0 < \epsilon, \delta < 1 \) and \( n \in \mathbb{N} \). There exists an \((\epsilon, \delta)\)-differentially private algorithm for estimating all-pairs shortest path distances in \( G \) such that, with high probability, the worst-case error is bounded by 
\[
O\left( \frac{n^{1/4}\sqrt{W} \cdot \log^2 n }{\epsilon^{1/2}} \log\left(\frac{\log n}{\delta}\right) \right)
\]
for any grid graph \( G = ([a\times b], E, w) \) where $ab = n$. Further, the error is bounded by 
\[
O\left(h\cdot  \frac{(nW)^{1/3}\cdot \log^2 n }{\epsilon^{2/3}} \log\left(\frac{\log n}{\delta}\right) \right)
\]
with high probability, for any $K_h$-minor-free graph \( G = ([n], E, w) \).
\end{theorem}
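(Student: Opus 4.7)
The plan is to modify the recursive algorithm of Section~\ref{sec:recursive} by replacing the dense all-pairs shortcuts inside each separator $S_b$ with shortcuts over a small $k$-covering set $C_b \subseteq S_b$. Concretely, in Algorithm~\ref{alg:shortcut} the loops over $S_b \times S_b$ and $S_{b'} \times S_b$ are replaced by loops over $C_b \times C_b$ and $C_{b'} \times C_b$, respectively. Because $C_b$ is a $k$-cover of $S_b$ inside the ambient graph and $\|w\|_\infty \leq W$, any pair of vertices in $S_b$ can be routed through $C_b$ at additive cost $O(kW)$, so this substitution introduces only a \emph{covering} error of $O(kW)$ per level. At the same time the $\ell_2$-sensitivity of the released shortcut vector drops from $|S_b|^2$ to $|C_b|^2$, and advanced composition across the $h = O(\log n)$ tree levels (Lemma~\ref{lem:tree_sensitive}) yields Gaussian noise of scale $\widetilde O(|C_b|/\epsilon)$ per entry.

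Privacy is inherited almost verbatim from Theorem~\ref{thm:privacy}, since the modification only shrinks the per-level sensitivity. For utility I mirror the inductive argument of Theorem~\ref{thm:utility}: by induction on the binary tree $\mathcal{T}$, the per-level error budget is $O(kW) + \widetilde O(|C_b|/\epsilon)$, and summing over the $O(\log n)$ levels (with geometrically shrinking $|V_b|$) is dominated by the root contribution up to polylogarithmic factors. The problem therefore reduces to choosing $k$ and bounding $|C_b|$ optimally for each of the two graph classes, and a geometric-series argument then transports each per-level bound to the global bound claimed.

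For a $K_h$-minor-free graph I first invoke the Alon--Seymour--Thomas separator theorem to take $|S_b| = O(h\sqrt{|V_b|})$, and then use Lemma~\ref{lem:k-cover_in_separator} to build a $d$-covering set of size $|C_b| = \widetilde O(h\sqrt{|V_b|/d})$; the covering lemma partitions the ambient graph into clusters of diameter at most $d$, contracts each cluster to a super-node, and reapplies the $K_h$-minor-free separator theorem on the contracted graph of size $O(|V_b|/d)$. Balancing $O(dW)$ against $\widetilde O(h\sqrt{|V_b|/d}/\epsilon)$ fixes $d \asymp (h^2 |V_b|/(W\epsilon)^2)^{1/3}$ and gives per-level error $\widetilde O(h(|V_b| W)^{1/3}/\epsilon^{2/3})$, telescoping to the claimed $\widetilde O(h(nW)^{1/3}/\epsilon^{2/3})$ bound. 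For the $a \times (n/a)$ grid, every separator produced by the standard row/column partition is itself a connected path of length $O(\sqrt{|V_b|})$, so $C_b$ can be taken as every $k$-th vertex along this path, giving $|C_b| = O(\sqrt{|V_b|}/k)$. Balancing $O(kW)$ against $\widetilde O(\sqrt{|V_b|}/(k\epsilon))$ fixes $k \asymp (|V_b|/(W\epsilon)^2)^{1/4}$ and yields per-level error $\widetilde O(|V_b|^{1/4}\sqrt{W/\epsilon})$, giving the grid bound.

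The main obstacle I anticipate is the proof of Lemma~\ref{lem:k-cover_in_separator}: one must verify that the diameter-$d$ cluster contraction preserves $K_h$-minor-freeness (so the Alon--Seymour--Thomas bound still applies on the contracted graph), that the resulting set really is a $d$-cover of $S_b$ when shortest paths are measured in the ambient graph $G$, and, most importantly, that the noisy shortcut distances used within $C_b$ still have per-edge $\ell_1$-sensitivity one, so that the Gaussian mechanism applies with the privacy parameters fixed in Section~\ref{sec:algorithm}. A secondary subtlety is that the $O(kW)$ covering error could, a priori, compound across the $O(\log n)$ shortcuts traversed in the recursive distance lookup; replacing (rather than unioning) the cover at each level, together with the geometric shrinkage of $|V_b|$, keeps the accumulated covering error at the same order as the root-level term and is handled inside the induction on $\mathcal{T}$.
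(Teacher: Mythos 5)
Your proposal matches the paper's proof in essentially every structural respect: replace the all-pairs shortcuts inside each separator $S_b$ with shortcuts over a $k$-covering set (the paper's Algorithm~\ref{alg:shortcut_planar}), observe that privacy follows verbatim because the per-level $\ell_2$-sensitivity only shrinks, incur an extra $O(kW)$ covering error per level via Lemma~\ref{lem:hop_error}, and then use Lemma~\ref{lem:k-cover_in_separator} (clustering into diameter-$d$ pieces, contracting, and reapplying the $K_h$-minor-free separator theorem) to balance the covering error against the noise — exactly the proof of Lemma~\ref{lem:error_cover} and its instantiation in the paper. One small numerical slip: you write the Alon--Seymour--Thomas separator as $O(h\sqrt{|V_b|})$ and the cover size as $\widetilde{O}(h\sqrt{|V_b|/d})$, which would yield error $\widetilde{O}(h^{2/3}(nW)^{1/3}/\epsilon^{2/3})$; the correct bound (Lemma~\ref{lem:minor_free_separator}) is $O(h^{3/2}\sqrt{|V_b|})$, and carrying the extra $h^{1/2}$ through the balancing is precisely what produces the linear-in-$h$ factor claimed in the theorem. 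Correcting that constant leaves the argument, and the final bound, unchanged.
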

Clearly Theorem \ref{thm:main} does not directly yield the results stated above. Therefore, some modifications to the recursive framework that is introduced earlier are necessary to derive Theorem \ref{thm:kh_free_and_planar}. Specifically, instead of constructing the all-pair shortcut inside or between separators as in Algorithm \ref{alg:shortcut}, we find a $k$-covering set of each separator and only add shortcuts inside or between such covering sets to reduce the number of compositions needed to preserve privacy. 

\begin{definition} [$k$-covering]\label{def:covering_set} Given a graph $G = (V,E)$, a subset $Z \subseteq V$ is a $k$-covering of $V$ if for every vertex $a \in V$, there is a vertex $b \in Z$ such that the hop distance between $a$ and $b$ is at most $k$.
\end{definition}

To prove Theorem \ref{thm:kh_free_and_planar}, we begin by presenting the decomposition procedure for $K_h$-minor-free graphs, which relies on the well-known lemma characterizing the separability of such graphs:

\begin{lemma}[Alon et al. \cite{alon1990separator}]\label{lem:minor_free_separator} Let $h\in \mathbb{N}_+$ be an integer, and $G$ be a $K_h$-minor-free graph on $n$ vertices. Then, there exists a separator $S$ of $G$ of order at most $O(h^{3/2}n^{1/2})$ such that no connected components in $G(V\backslash S, E(V\backslash S))$ has more than $\frac{2}{3} n$ vertices.
\end{lemma}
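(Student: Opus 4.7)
The plan is to follow the classical BFS-based strategy for separator theorems in minor-closed graph classes. First, I would reduce to the case where $G$ is connected by applying the argument to each connected component (small components trivially satisfy the balance condition when grouped appropriately). Fix an arbitrary root $v_0 \in V$, run breadth-first search from $v_0$, and let $L_i$ denote the set of vertices at BFS distance exactly $i$ from $v_0$, writing $d_i = |L_i|$ so that $\sum_i d_i = n$.

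Next, I would identify a \emph{median level} $m$, namely the smallest index such that $\sum_{i \leq m} d_i \geq n/3$. By this choice, the vertices in layers strictly below $m$ number at most $n/3$, while the vertices in layers strictly above $m$ number at most $2n/3$; hence removing $L_m$ alone already yields a balanced separation in which every component of $G[V \setminus L_m]$ has at most $2n/3$ vertices. The remaining task is therefore to show that one can always find a set of consecutive layers near $m$ whose union has size $O(h^{3/2}\sqrt n)$, and then prune it to the desired separator $S$.

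The crucial step is an exclusion argument by contradiction. Suppose that no band $L_{m-r} \cup \cdots \cup L_{m+r}$ of $\Theta(\sqrt h)$ consecutive layers centered near $m$ has total size at most $c h^{3/2}\sqrt n$. Then $\Omega(\sqrt h)$ consecutive layers around $m$ must each have size at least roughly $h\sqrt n$. Using the BFS parent pointers, I would route $\Omega(h)$ vertex-disjoint ``columns'' ascending through these thick layers; then, bounding the number of intra-layer edges by the Kostochka/Thomason density bound $|E(G)| = O(hn\sqrt{\log h})$ for $K_h$-minor-free graphs and applying a pigeonhole argument over intra-layer adjacencies among the columns, one extracts $h$ columns that are pairwise adjacent within some layer. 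Contracting each such column to a single super-vertex exhibits a $K_h$ minor of $G$, contradicting the hypothesis. Hence a narrow band of the claimed size exists, and intersecting it with a balanced side of the BFS yields the required separator $S$.

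The main obstacle is the quantitative minor-construction step: squeezing out the sharp $h^{3/2}$ factor (as opposed to the weaker $h^2$ bound one would get from a naive Menger-type routing) requires a delicate two-parameter trade-off between the number of columns, the thickness of each layer, and the number of layers the columns must traverse. The tight exponent emerges only by matching the Kostochka edge-density budget against the combinatorial cost of forming a clique minor through a stacked column structure; in particular, branch vertices must be avoided except where strictly needed, so that the depth the columns must climb stays proportional to $\sqrt h$ rather than to $h$.
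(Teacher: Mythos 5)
The paper does not actually prove this lemma; it cites it to Alon, Seymour, and Thomas \cite{alon1990separator}. Their proof is not a BFS-layering argument at all: it iteratively grows a family of small, connected, pairwise-adjacent branch sets, using the assumed absence of a small balanced separator to route each new branch set to the existing ones, and derives a $K_h$-minor once the family reaches size $h$. So your sketch is a genuinely different strategy, and I will assess it on its own.

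Your exclusion step has a concrete flaw. The claim that ``if no $\Theta(\sqrt h)$-wide band of levels near the median has total size $O(h^{3/2}\sqrt n)$, then $G$ contains a $K_h$-minor'' is false. Let $G$ be a tree rooted at $v_0$ in which every BFS level beyond $L_0$ has roughly $2h\sqrt n$ vertices. Such a tree has no $K_3$-minor, yet once $n$ is large compared with $h^3$, every band of $\sqrt h$ consecutive levels near the median has size about $2h^{3/2}\sqrt n$, so your smallness hypothesis fails; and there are no intra-level edges at all, so the pigeonhole step you describe extracts nothing — your argument produces neither a separator nor a contradiction on this instance. Two further problems compound this. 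First, the Kostochka/Thomason bound is an \emph{upper} bound on $|E(G)|$ for $K_h$-minor-free graphs, so invoking it certifies sparsity and cannot supply the intra-layer adjacencies you need to make columns pairwise adjacent; it works against you, not for you. Second, routing $\Omega(h)$ \emph{vertex-disjoint} columns via BFS parent pointers is not automatic, since parent paths starting at distinct vertices of an upper level can merge on the way down. A correct proof must leverage the non-existence of a small balanced separator anywhere in the graph, not merely the non-smallness of layer bands centered at one fixed BFS level.
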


\noindent \textbf{The construction of the binary tree for $K_h$-minor-free graphs} is similar to that for general graphs, and that a $K_h$-minor-free graph \( G \) with \( n \) vertices are guaranteed to always have a (not necessarily connected) separator \( S \) of size at most \( O(\sqrt{h^3n}) \), such that removing \( S \) results in two disjoint subgraphs \( G'_0 \) and \( G'_1 \). Both subgraphs are clearly also $K_h$-minor-free, and are of size at most \( 2n/3 \). This process can be repeated recursively, splitting each subgraph until every part has constant size. For any grid graph on $n$ vertices of shape $a\times b$, we can split the graph into two sub-grid graphs of (almost) equal size using a separator of size at most \( O(\sqrt{n}) \), and that the separator is \emph{connected}. Hence we have the lemma below.

\begin{lemma}
    Fix an integer $h\geq 1$, any $K_h$-minor-free graph $G = ([n], E)$  is $(c\sqrt{h^3n}, 2/3, 2/3+ch^{3/2}/\sqrt{n})$-recursively separable for some constant $c$. In particular, planar graphs (as well as grid graphs $G_{\text{grid}} = ([a\times b], E)$ with $ab = n$) are $(c'\sqrt{n}, 2/3, 2/3+o(1))$-recursively separable for some constant $c'$.
\end{lemma}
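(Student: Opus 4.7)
The plan is to verify \Cref{def:recur_separability} for $K_h$-minor-free graphs by a direct application of the separator theorem of Alon--Seymour--Thomas (\Cref{lem:minor_free_separator}), and for grids by an explicit construction. Let $c_0$ denote the constant hidden in the $O(\cdot)$ of \Cref{lem:minor_free_separator}. For an arbitrary $m$-vertex subgraph $H$ of $G$, I would first note that $H$ is itself $K_h$-minor-free (the class is minor-closed under subgraphs), and then invoke \Cref{lem:minor_free_separator} on $H$ to obtain a separator $S_H$ with $|S_H| \leq c_0 h^{3/2}\sqrt{m}$ whose removal leaves every component of size at most $\tfrac{2}{3}m = qm$. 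Setting $c$ to be any constant $\geq c_0$, this gives $|S_H| \leq c h^{3/2}\sqrt{m} \leq c\sqrt{h^3 n} = p$, verifying the $p$-part of the bound, and one checks that at the top level $|V| = n$ the two entries of the minimum coincide: $(q'-q)\cdot n = (c h^{3/2}/\sqrt{n})\cdot n = c h^{3/2}\sqrt{n} = p$. The component bound $qm$ is then precisely what \Cref{def:recur_separability} requires.

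For the planar specialization, Kuratowski's theorem (stated in the excerpt) says every planar graph is $K_5$-minor-free, so the preceding argument with $h = 5$ immediately yields the claim after absorbing the factor $5^{3/2}$ into the new constant $c'$. For grid graphs $[a] \times [b]$ with $ab = n$, I would replace \Cref{lem:minor_free_separator} by an explicit construction that is sharper and connected: take the median column if $a \geq b$ and the median row otherwise, which contains $\min\{a, b\} \leq \sqrt{n}$ vertices, is connected, and partitions the grid into two sub-grids of roughly equal size, each of which is itself a grid. Recursing with the same rule on each sub-grid yields connected separators of size at most $c'\sqrt{n}$ throughout, establishing the $(c'\sqrt{n}, 2/3, 2/3+o(1))$-recursive separability for grids.

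The one point that will require care is the interaction between the two entries of $\min\{p, (q'-q)|V|\}$ at deeper levels of the recursion: once $|V| \ll n$, the linear bound $(q'-q)|V| = c h^{3/2}|V|/\sqrt{n}$ can drop below the $\sqrt{|V|}$-scale guarantee of \Cref{lem:minor_free_separator}. The role of this bound (cf.\ \Cref{eq:firstrecursivestep}) is only to certify that reinserting the separator into each of the two pieces keeps each resulting subgraph of size at most $q'|V|$, so that the recursion depth remains $O(\log n)$. Choosing $c$ sufficiently large to absorb the Alon--Seymour--Thomas constant uniformly across the $O(\log n)$ levels, and terminating the recursion at the $|V| = O(1)$ base case, is enough to maintain the required invariant at every level of the decomposition.
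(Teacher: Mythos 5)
Your approach mirrors the paper's (which only sketches the result in a short paragraph rather than giving a formal proof): apply the Alon--Seymour--Thomas separator theorem (\Cref{lem:minor_free_separator}) to every subgraph, noting that $K_h$-minor-free graphs are closed under subgraphs; derive the planar case via Kuratowski with $h = 5$; and for grids use an explicit median row or column as a connected separator. All of that matches the paper's intent.

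You were also right to flag the interaction between the two entries of $\min\{p,(q'-q)|V|\}$ once $|V|$ drops well below $n$ --- this is a genuine subtlety that the paper glosses over --- but your proposed fix does not close the gap. Writing $c_0$ for the Alon--Seymour--Thomas constant, on a subgraph with $m$ vertices you must certify $c_0 h^{3/2}\sqrt{m} \le (q'-q)\,m = c\, h^{3/2} m/\sqrt{n}$, i.e.\ $m \ge (c_0/c)^2\, n$. Scaling $c$ only shifts this threshold; it remains a constant fraction of $n$, so for any fixed constant $c$ the $(q'-q)|V|$ entry of the minimum is violated after $O(\log(c/c_0)) = O(1)$ levels of the decomposition, long before the $|V| = O(1)$ base case is reached. (Concretely, a $\Theta(\sqrt{m})\times\Theta(\sqrt{m})$ sub-grid of size $m = \Theta(n/c^3)$ already has no balanced separator of size $o(\sqrt{m})$, yet $(q'-q)m = o(\sqrt{m})$ for such $m$.) The ``$O(\log n)$ levels'' framing does not rescue this: the constant you would need grows with $\sqrt{n}$.

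The clean reading is to take $q'$ a fixed constant above $q$, say $q'=3/4$, so that $(q'-q)m = m/12$. Then the Alon--Seymour--Thomas bound $c_0 h^{3/2}\sqrt{m}$ is at most $m/12$ whenever $m \ge (12\,c_0 h^{3/2})^2 = O_h(1)$, and below that threshold the $|V| = O(1)$ base case of \Cref{def:recur_separability} applies. The specific parameters $(c\sqrt{h^3 n},\, 2/3,\, 2/3 + c h^{3/2}/\sqrt{n})$ are calibrated so the two entries of the $\min$ coincide at the root of the recursion, but that form does not propagate to subgraphs; a constant $q'$ is what \Cref{def:recur_separability} (and the downstream use in \Cref{thm:main}, which assumes $q,q'$ are constants) actually requires.
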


As we introduced earlier, with this decomposition, the only modification required is in Algorithm \ref{alg:shortcut}. In Algorithm~\ref{alg:shortcut}, we construct private shortcuts by connecting all pairs of vertices in separator. We now modify Algorithm~\ref{alg:shortcut} into Algorithm~\ref{alg:shortcut_planar} as follows: for each separator, we {begin} by finding a \( k \)-covering for it. Then, we construct private shortcuts by connecting all pairs of vertices \emph{within} the $k$-covering sets. We provide the complete pseudocode for Algorithm \ref{alg:shortcut_planar} in Appendix \ref{app:alg4}, and here we only highlight the differences. The modified step 4 of Algorithm~\ref{alg:shortcut} is as follows.

\medskip
\fbox{
\parbox{0.9\textwidth}{
\noindent\textbf{Step 4 of Algorithm~\ref{alg:shortcut_planar} (modified step 4 of Algorithm~\ref{alg:shortcut})}:

\For{non-leaf node $(G_b, S_b) \in V(\mathcal{T})$}{

    Find a $k$-covering set of $S_b$, and let it be $S^k_b$.\;
    
    \For{$x,y \in S^k_b$ such that $x \neq y$}{
        \texttt{IsShortcut}$(x,y,b) = \textsf{\em True}$.\;
        Let $\widehat{d}_b(x,y) = d_b(x,y) + \mathcal{N}(0,\sigma^2)$.\;
    }
    \If{$b \neq \emptyset$}{
        Let $b'$ be the binary string that removes the last bit in $b$.\;
        
        Find a $k$-covering set of $S_{b'}$, and let it be $S^k_{b'}$.\;
        
        \For{$(x,y) \in S^k_{b'} \times S^k_b$}{
            \If{$x,y \notin S^k_{b'} \cap S^k_b$}{
                \texttt{IsShortcut}$(x,y,b') = \textsf{\em True}$.\;
                Let $\widehat{d}_b(x,y) = d_b(x,y) + \mathcal{N}(0,\sigma^2)$.\;
            }
        }
    }
}
}
}

\medskip
We also modify Step 3 of Algorithm~\ref{alg:shortcut} to revise the number of compositions needed for privacy from $O(p^2)$ to $O(f^2(p,k))$, as follows:

\medskip
\noindent\textbf{Step 3 of Algorithm~\ref{alg:shortcut_planar}}:
Let $\sigma = f(p,k)\sqrt{2\log(1.25/\delta')}/\epsilon'$, where $f(p,k)$ is the upper bound of the size of the $k$-covering set for the separator of size $p$. 

\subsection{Privacy Analysis}

Based on the proof of Theorem~\ref{thm:privacy}, we give the following privacy guarantee on Algorithm \ref{alg:shortcut_planar}.   

\begin{theorem}\label{thm:privacy_planar}
Algorithm~\ref{alg:shortcut_planar} preserves $(\epsilon, \delta)$-differential privacy.
\end{theorem}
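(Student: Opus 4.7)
The plan is to mirror the proof of Theorem \ref{thm:privacy} almost verbatim, tracking the single change: the distance vectors whose sensitivity drives the Gaussian noise calibration are now indexed by pairs drawn from the $k$-covering sets $S^k_b \subseteq S_b$ rather than by all pairs in $S_b$. Crucially, both the construction of the binary tree $\mathcal{T}$ (which only inspects the graph topology, not the weights) and the selection of each $k$-covering $S^k_b$ (which depends only on hop distances, again a topological quantity) are deterministic functions of the unweighted graph. Hence they leak no information about the edge weights and can be treated as public. All privacy loss therefore accrues from the noisy distance releases in steps 4 and 5 of Algorithm \ref{alg:shortcut_planar}.

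Next I would bound the per-node sensitivity. Fix neighboring graphs $G, G'$ that differ in the weight of a single edge $e = \{u,v\}$ by at most $1$, and consider an internal node $(G_b, S_b)$ whose edge set contains $e$. For every pair $(x,y) \in S^k_b \times S^k_b$ we have $|d_b(x,y) - d'_b(x,y)| \leq 1$, so the vector of these shortcut distances has $\ell_2$ sensitivity at most $\sqrt{|S^k_b|^2} \leq f(p,k)$. The same bound holds for the cross-separator shortcut vector indexed by $S^k_{b'} \times S^k_b$, since its dimension is at most $f(p,k)^2$. The Gaussian mechanism with the prescribed $\sigma = f(p,k)\sqrt{2\log(1.25/\delta')}/\epsilon'$ therefore guarantees that each such release at node $b$ is $(\epsilon',\delta')$-DP. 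Leaf nodes are handled exactly as in Theorem \ref{thm:privacy}, with the constant size $c$ controlling sensitivity as before.

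Finally, I would compose over the tree. By Lemma \ref{lem:tree_sensitive}, the edge $e$ appears in at most $h = \log_{1/q'}(n/c)$ nodes of $\mathcal{T}$, so only that many $(\epsilon',\delta')$-DP mechanisms can be affected by the modification of $e$. Applying advanced composition (Lemma \ref{lem:adv_composition}) with the chosen parameters $\epsilon' = \epsilon/\sqrt{4h\log(1/\delta')}$ and $\delta' = \delta/(4h)$ then yields overall $(\epsilon,\delta)$-DP for Algorithm \ref{alg:shortcut_planar}. The subsequent recursive distance reconstruction (Algorithms \ref{alg:recursive-apsp}–\ref{alg:apsp}) is pure post-processing of these noisy shortcuts and so preserves the guarantee.

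The proof is essentially a drop-in replacement of $p$ by $f(p,k)$ in the sensitivity bound of Theorem \ref{thm:privacy}, so there is no real obstacle. The only subtlety worth flagging explicitly is verifying that computing a $k$-covering set is topology-only: since the definition of $k$-covering uses hop distance rather than weighted distance, the same set $S^k_b$ is chosen on $G$ and on $G'$, so this step introduces no additional sensitivity and the preceding sensitivity calculation is valid.
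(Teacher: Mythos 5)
Your proof is correct and follows essentially the same route as the paper's: the paper explicitly states that the proof of Theorem~\ref{thm:privacy_planar} is the same as that of Theorem~\ref{thm:privacy}, with the Gaussian noise re-calibrated to the size $f(p,k)$ of the $k$-covering set rather than the separator size $p$. Your additional observation that the $k$-covering selection is topology-only (hop distance, not weighted distance) and therefore identical on neighboring graphs is a worthwhile explicit justification that the paper leaves implicit.
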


The proof of Theorem~\ref{thm:privacy_planar} is almost identical to that of Theorem~\ref{thm:privacy}, except that we re-calibrate the variance of Gaussian noise according to the size of the $k$-covering set of a separator $S$, instead of the size of the whole separator.

\subsection{Utility Analysis}

To demonstrate the improvement in finding $k$-covering sets within separators, we first present several useful facts and lemmas related to $k$-covering. The following lemmas state that each connected $n$-vertex graph has a $k$-covering set of size at most $O(n/k)$.

\begin{lemma}[Meir and Moon~\cite{pjm1102868240}]\label{lem:cite} 
Any connected undirected graph with $n$ vertices has a $k$-covering with size at most $1+\lfloor n/(k+1) \rfloor$.
\end{lemma}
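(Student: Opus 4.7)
The plan is to reduce from general connected graphs to trees and then induct on the number of vertices. Given $G$, I would fix any spanning tree $T \subseteq G$. Since every path in $T$ is a path in $G$, the hop distance in $T$ upper-bounds the hop distance in $G$; consequently, any $k$-covering of $T$ is automatically a $k$-covering of $G$. Thus it suffices to establish the bound for trees, and the problem becomes a clean statement about combinatorics on trees.

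I would then argue by induction on $n = |V(T)|$. Root $T$ at an arbitrary vertex and let $v$ be a deepest leaf at depth $d$. If $d \leq k$, then the root alone is a $k$-covering and the inequality $1 \leq 1 + \lfloor n/(k+1) \rfloor$ holds trivially. Otherwise $d \geq k+1$, and I let $u$ denote the ancestor of $v$ lying exactly $k$ hops above $v$; write $T_u$ for the subtree rooted at $u$. Because $v$ is a globally deepest vertex, every descendant of $u$ has depth at most $d$ and hence is at hop distance at most $k$ from $u$ in $T$. Therefore $u$ single-handedly covers $T_u$, and moreover the $v$-to-$u$ path contributes at least $k+1$ distinct vertices to $T_u$.

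Removing the vertices of $T_u$ from $T$ leaves a smaller tree $T'$ (deleting a rooted subtree from a tree produces a tree) with $|V(T')| \leq n - (k+1)$. By the inductive hypothesis, $T'$ admits a $k$-covering $Z'$ of size at most $1 + \lfloor (n-k-1)/(k+1) \rfloor = \lfloor n/(k+1) \rfloor$. Setting $Z = Z' \cup \{u\}$ then yields a $k$-covering of $T$---every vertex of $T_u$ is within $k$ hops of $u$ and every vertex of $T'$ is within $k$ hops of some vertex of $Z'$---of size at most $1 + \lfloor n/(k+1) \rfloor$, closing the induction.

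The only delicate point, and the main obstacle if there is one, is ensuring the deleted subtree contains at least $k+1$ vertices so that the recursion matches the floor identity $\lfloor n/(k+1) \rfloor = \lfloor (n-k-1)/(k+1) \rfloor + 1$. This is precisely the reason $u$ is chosen as the $k$-th ancestor of a globally deepest leaf rather than, say, $v$ itself: choosing $u$ guarantees both that the path from $v$ to $u$ contributes exactly $k+1$ vertices to $T_u$ and that $u$ covers all of $T_u$, giving the matching accounting.
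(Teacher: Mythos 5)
Your proof is correct. The paper cites this lemma directly from Meir and Moon and does not supply its own proof, so there is no in-paper argument to compare against; you have supplied a self-contained proof where the paper offered only a reference. The steps all hold up: passing to a spanning tree is sound because tree hop distances dominate graph hop distances, so any $k$-covering of $T$ is one of $G$; since $d\geq k+1$ forces $u$ to lie strictly below the root, deleting $T_u$ leaves a nonempty tree; since $v$ is globally deepest, every descendant of $u$ sits at depth at most $d$ and is therefore within $k$ hops of $u$ in $T_u$; and because $|V(T_u)|\geq k+1$, the recursion satisfies $1+\big\lfloor (n-k-1)/(k+1)\big\rfloor = \lfloor n/(k+1)\rfloor$, which closes the induction. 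The argument you give — peel off the subtree rooted at the $k$-th ancestor of a deepest leaf — is the standard greedy/inductive route to this bound and is in the same spirit as the original Meir--Moon treatment, so this is a faithful reconstruction of a proof the paper chose to omit.
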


\begin{lemma}
Any graph with $n$ vertices composed of $x$ connected components has a $k$-covering with size at most $x + \lfloor n/(k+1) \rfloor$.
\end{lemma}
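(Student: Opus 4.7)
The plan is to reduce this statement directly to the previous lemma (Meir and Moon, Lemma~\ref{lem:cite}) by handling each connected component separately and then taking the union. Notice that a $k$-covering of the entire graph can be obtained by gluing together $k$-coverings of each connected component: within a single component, the hop distance is well-defined, and across components there is nothing to cover since vertices in different components do not need to be hop-close (the definition only requires every vertex to have a close representative, which can be chosen within its own component).

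The main steps are as follows. First, I would decompose $V$ into its $x$ connected components $C_1, \ldots, C_x$, where $|C_i| = n_i$ and $\sum_{i=1}^x n_i = n$. Next, I would apply Lemma~\ref{lem:cite} to each $C_i$ (viewed as a connected graph on $n_i$ vertices) to obtain a $k$-covering $Z_i \subseteq C_i$ of size at most $1 + \lfloor n_i/(k+1) \rfloor$. Then I would set $Z = \bigcup_{i=1}^x Z_i$ and verify that $Z$ is a $k$-covering of the whole graph: any vertex $v \in V$ lies in some $C_i$, and therefore has a vertex of $Z_i \subseteq Z$ within hop distance $k$.

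Finally, it remains to bound $|Z|$. Since the $Z_i$'s are disjoint (they live in disjoint components), we have
\[
|Z| \;=\; \sum_{i=1}^x |Z_i| \;\leq\; \sum_{i=1}^x \left(1 + \left\lfloor \frac{n_i}{k+1} \right\rfloor\right) \;=\; x + \sum_{i=1}^x \left\lfloor \frac{n_i}{k+1} \right\rfloor.
\]
The remaining step is the elementary arithmetic fact that $\sum_i \lfloor a_i \rfloor \leq \lfloor \sum_i a_i \rfloor$ (applied with $a_i = n_i/(k+1)$), which gives $\sum_{i=1}^x \lfloor n_i/(k+1) \rfloor \leq \lfloor n/(k+1) \rfloor$, completing the bound.

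There is essentially no obstacle here: the only subtlety is confirming that the definition of $k$-covering is applied component-wise correctly (so that we do not need to cover cross-component distances, which are undefined or infinite anyway), and invoking the standard floor inequality. The entire argument is a one-line reduction to Lemma~\ref{lem:cite}.
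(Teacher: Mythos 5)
Your proof is correct and takes essentially the same approach as the paper: decompose into connected components, apply the Meir--Moon bound to each, take the union, and sum. You are merely more explicit about the step $\sum_i \lfloor n_i/(k+1)\rfloor \le \lfloor n/(k+1)\rfloor$, which the paper's terse "summing these sizes yields the desired result" leaves implicit.
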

\begin{proof}

Let $x$ be the number of components with sizes $n_1, n_2, \dots, n_x$, respectively. Each component has a $k$-covering set of size $1+\lfloor n_i / (k+1) \rfloor)$ for $i = 1, 2, \dots, x$. Summing these sizes yields the desired result.
\end{proof}

We use the above lemmas to prove the following structural result, which establishes a covering set for the separator of $K_h$-minor-free graphs. We defer the proof of Lemma~\ref{lem:k-cover_in_separator} in Appendix~\ref{app:proof_of_lem25}.

\begin{lemma}
[Covering lemma for minor free graphs]
\label{lem:k-cover_in_separator}
    Fix an $h\geq 1$. Let $G = ([n],E)$ be a connected $K_h$-minor-free graph. Then for any $1\leq d \leq n$, there exists a subset of vertices $S\subseteq [n]$ such that: (1) $S$ is a separator of $G$ and (2) there is a $d$-covering of $S$ with size at most $O(\sqrt{h^3n/d})$.
\end{lemma}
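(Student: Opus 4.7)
The plan is to deduce this from the $K_h$-minor-free separator theorem (Lemma~\ref{lem:minor_free_separator}) applied to a contracted, low-diameter quotient of $G$. I will (i) partition $V=[n]$ into $m = O(n/d)$ connected clusters $C_1,\ldots,C_m$, each of diameter at most $d$ in $G$; (ii) contract each $C_i$ to a single super-node to obtain a minor $\widetilde{G}$ on $m$ vertices; (iii) apply Lemma~\ref{lem:minor_free_separator} to $\widetilde{G}$ to obtain a separator $\widetilde{S}$ of size $O(h^{3/2}\sqrt{m})=O(\sqrt{h^3n/d})$; and (iv) lift $\widetilde{S}$ to a separator $S$ of $G$ by taking the union of the corresponding clusters, so that one arbitrary representative per cluster in $\widetilde{S}$ forms the desired $d$-covering.

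For step (i), since $G$ is connected I fix any spanning tree $T$ of $G$ rooted at an arbitrary vertex, and invoke the classical recursive tree-partitioning procedure: process $T$ bottom up and, whenever a pending subtree first reaches size in the range $[d,2d]$, cut it off as a new cluster and continue on the residual tree. This yields $m\leq 2n/d = O(n/d)$ vertex-disjoint subtrees covering $V(T)$, each of size at most $2d$ and hence of tree-diameter at most $2d-1$. Because distances in $T$ dominate distances in $G$, every cluster $C_i$ has $G$-diameter at most $2d-1$, so after rescaling the parameter $d$ by a constant we may assume each cluster has diameter at most $d$ while still having only $O(n/d)$ clusters.

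For step (ii), since each $C_i$ is connected in $G$, contracting it to a single super-node $\widetilde{v}_i$ is a valid minor operation; subsequently deleting self-loops and merging parallel edges keeps $\widetilde{G}$ a minor of $G$, so $\widetilde{G}$ is $K_h$-minor-free and has $m=O(n/d)$ vertices. Steps (iii)--(iv) then proceed as follows. Lemma~\ref{lem:minor_free_separator} supplies $\widetilde{S}\subseteq V(\widetilde{G})$ with $|\widetilde{S}|=O(h^{3/2}\sqrt{m})=O(\sqrt{h^3n/d})$. Define $S:=\bigcup_{\widetilde{v}_i\in\widetilde{S}} C_i\subseteq V$. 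To see $S$ separates $G$, note that if $u,v\in V\setminus S$ lay in clusters mapped to different components of $\widetilde{G}\setminus\widetilde{S}$, then any edge $\{u,v\}\in E(G)$ would descend to an edge of $\widetilde{G}$ crossing those components, contradicting the choice of $\widetilde{S}$. Finally, pick any $z_i\in C_i$ and let $Z:=\{z_i:\widetilde{v}_i\in\widetilde{S}\}$; then $|Z|=|\widetilde{S}|=O(\sqrt{h^3n/d})$, and for every $a\in S$, $a$ lies in some $C_i$ with $\widetilde{v}_i\in\widetilde{S}$, whence the hop distance between $a$ and $z_i$ is at most $\operatorname{diam}(C_i)\leq d$. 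Thus $Z$ is a $d$-covering of $S$ in the sense of Definition~\ref{def:covering_set}.

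The main obstacle is step (i): simultaneously securing $O(n/d)$ clusters \emph{and} per-cluster diameter $O(d)$ for a general connected graph. Resolving this by first passing to a spanning tree reduces the problem to the standard bottom-up tree-partition lemma, where a subtree of size $\Theta(d)$ automatically has tree-diameter $O(d)$ (an upper bound on its $G$-diameter). Once this clustering is in hand, the remaining steps---closure of $K_h$-minor-freeness under contraction, lifting $\widetilde{S}$ to a separator of $G$, and reading off representatives---are routine.
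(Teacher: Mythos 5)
Your overall route---cluster $G$ into $O(n/d)$ low-diameter connected pieces, contract each piece to a super-node, invoke Lemma~\ref{lem:minor_free_separator} on the quotient, and lift $\widetilde{S}$ back to $G$ with one representative per cluster---is the same as the paper's. The gap is in step~(i). The rule ``cut a pending subtree when it \emph{first reaches} size in $[d,2d]$'' can never fire at a vertex whose children collectively contribute a large jump: if $v$ has several uncut children each carrying a pending subtree of size close to $d-1$, then the pending size of $v$ leaps from below $d$ directly past $2d$, and your procedure silently skips the cut. The cluster then keeps growing, and its tree-diameter---the bound you rely on downstream---need not be $O(d)$. A concrete failure: let $T=G$ be a caterpillar whose spine $p_1,\dots,p_L$ each carries $10$ pendant leaves, root at $p_1$, and take $d=3$. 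Every leaf has pending size $1$; at $p_L$ the pending size jumps to $11>2d$; from there it only increases, so no cut is ever made and the whole tree becomes a single cluster of diameter $\Theta(L)\gg 2d-1$. Since $K_h$-minor-freeness controls only the \emph{average} degree, such high-degree spanning trees genuinely occur, so the intermediate claim ``each of size at most $2d$ and hence of tree-diameter at most $2d-1$'' is false and step~(iv) does not go through.

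The paper sidesteps this because it never controls cluster \emph{size}: it takes a Meir--Moon $d$-covering $\mathcal{C}$ of $G$ (Lemma~\ref{lem:cite}) of size $O(n/d)$ and forms clusters by assigning each vertex to a covering center within $d$ hops, which bounds cluster \emph{radius} by $d$ directly, with no size bound. Your spanning-tree approach can be repaired without abandoning it: cut $v$ as soon as its pending size reaches $\ge d$ (no upper threshold). Then each uncut child of $v$ heads a pending subtree of fewer than $d$ vertices, hence of depth at most $d-2$, so every vertex in the newly cut cluster lies within $d-1$ tree-hops of $v$; this yields diameter at most $2(d-1)$ regardless of how large the cluster is, and each cut cluster has $\ge d$ vertices so the count is still at most $n/d+1$. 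With that fix---and discarding the incorrect size bound---your steps~(ii)--(iv) go through as written.
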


\begin{remark}
We note that from Kuratowski's theorem \cite{kuratowski1930probleme}, each planar graph is $K_5$-minor-free. Therefore, Lemma \ref{lem:k-cover_in_separator} directly implies that each connected $n$-vertex planar graph has a separator with a $d$-covering of size $O(\sqrt{n/d})$ for any $1\leq d \leq n$.
\end{remark}

\begin{lemma} \label{lem:hop_error}
For any two vertices $a$ and $b$ in a graph $X$, and $z_a, z_b \in Z$, where $Z$ is a $k$-covering of $X$, such that the hop distance between $a$ (resp. $b$) and $z_a$ ($z_b$) is at most $k$, we have:
\[
|d(a,b) - d(z_a, z_b)| \leq 2k \cdot W
\]
where $W$ is the maximum weight of edges in the graph.
\end{lemma}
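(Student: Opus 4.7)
The plan is to reduce the lemma to a straightforward double application of the triangle inequality for shortest-path distances. The key observation is that bounded hop distance combined with bounded edge weight $W$ gives a bounded weighted shortest-path distance: if the hop distance between two vertices is at most $k$, then there is a path of at most $k$ edges connecting them, and each such edge contributes weight at most $W$, so the weighted shortest distance is at most $kW$.

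First I would establish the two ``anchor'' bounds $d(a, z_a) \le kW$ and $d(b, z_b) \le kW$, which follow immediately from the covering assumption together with the upper bound $W$ on edge weights. Next, I would apply the triangle inequality in both directions. On one side,
\[
d(z_a, z_b) \;\le\; d(z_a, a) + d(a, b) + d(b, z_b) \;\le\; d(a,b) + 2kW,
\]
and on the other,
\[
d(a, b) \;\le\; d(a, z_a) + d(z_a, z_b) + d(z_b, b) \;\le\; d(z_a, z_b) + 2kW.
\]
Combining these two inequalities yields $|d(a,b) - d(z_a, z_b)| \le 2kW$, as desired.

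There is no real obstacle here; the lemma is essentially a sanity check that the covering set $Z$ approximates distances up to an additive error controlled by the covering radius and the maximum edge weight. The only mild subtlety is to remember that shortest-path distances in a weighted graph do satisfy the triangle inequality (which is immediate since concatenating shortest paths yields a valid path), and that the hop-distance bound of $k$ translates into a weighted-distance bound of $kW$ precisely because every edge weight is at most $W$.
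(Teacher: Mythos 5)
Your proof is correct and takes essentially the same approach as the paper, which gives a terse one-line chain of inequalities based on exactly the same observation that a $k$-hop bound and maximum edge weight $W$ give $d(a,z_a), d(b,z_b) \le kW$. If anything, your two-sided triangle-inequality argument is written out more carefully than the paper's compressed version.
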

\begin{proof}
The lemma follows using the following set of inequalities:
$|d(a,b) - d(z_a, z_b)| \leq |d(z_a,z_b)+d(a,z_a)+d(b,z_b) - d(z_a, z_b)| \leq 2k \cdot W.$
\end{proof}






The following lemmas are used to derive the error bound of the recursion based on two types of separators. The proofs of Lemma~\ref{lem:resursive_equation_new}, Lemma~\ref{lem:cross-level-shortcut_new}, and Lemma~\ref{Lem:error_x} are deferred to Appendix~\ref{app:proof_of_lem28}, Appendix~\ref{app:proof_of_lemma29}, and Appendix~\ref{app:proof_of_lemma30}, respectively.

\begin{lemma} \label{lem:resursive_equation_new}
Fix any $s,t \in V_b$, we have:
\[
d_b(s,t) = \min\{d_{b \circ 0}(s,t), d_{b \circ 1}(s,t)\}
\]
or
$$d_b(s,t) \leq \min_{x,y \in S^{k}_b} (d_{b \circ 0}(s,x) + d_b(x,y) + d_{b \circ 1}(y,t))  + 2kW.$$
\end{lemma}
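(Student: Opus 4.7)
The plan is to split on how the shortest $s$-$t$ path $P$ in $G_b$ interacts with the separator $S_b$. If $P$ stays on one side of $S_b$ (Case~A), the first alternative will follow as an equality; if $P$ genuinely crosses $S_b$ (Case~B), the second alternative will follow by routing through $S^{k}_b$ and paying the $2kW$ covering error.

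First I would record two facts used throughout. (i) Since $G_{b\circ 0}$ and $G_{b\circ 1}$ are subgraphs of $G_b$ (they drop the internal $S_b$-edges and the opposite half's private vertices), $d_{b\circ a}(u,v)\ge d_b(u,v)$ whenever $u,v\in V_{b\circ a}$. (ii) For any $x,y\in S^{k}_b\subseteq V_b$, the triangle inequality in $G_b$ combined with (i) gives $d_b(s,t)\le d_b(s,x)+d_b(x,y)+d_b(y,t)\le d_{b\circ 0}(s,x)+d_b(x,y)+d_{b\circ 1}(y,t)$, so the RHS of the lemma's second alternative already upper-bounds $d_b(s,t)$ even without the $+2kW$ slack. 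For Case~A, $P$ is a path in some $G_{b\circ a}$ so $d_{b\circ a}(s,t)\le d_b(s,t)$; combined with (i) this forces $d_{b\circ a}(s,t)=d_b(s,t)$, and $d_{b\circ\bar a}(s,t)\ge d_b(s,t)$ closes the equality $d_b(s,t)=\min\{d_{b\circ 0}(s,t),d_{b\circ 1}(s,t)\}$.

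For Case~B, let $x^*\in S_b$ be the first vertex of $P$ lying in $S_b$ (scanning from $s$) and $y^*\in S_b$ the last; after swapping labels $0\leftrightarrow 1$ if needed, the prefix $s\to x^*$ lies in $G_{b\circ 0}$ and the suffix $y^*\to t$ in $G_{b\circ 1}$, and optimality of $P$ yields $d_b(s,t)=d_{b\circ 0}(s,x^*)+d_b(x^*,y^*)+d_{b\circ 1}(y^*,t)$. The second alternative then follows from (ii) directly. To expose the role of the $2kW$ slack---and to obtain the companion bound $\min_{x,y\in S^{k}_b}(\cdots)\le d_b(s,t)+2kW$ that actually powers the downstream utility argument---I would pick $x,y\in S^{k}_b$ within $k$ hops of $x^*,y^*$ in $G_b$ using the $k$-covering property and apply Lemma~\ref{lem:hop_error} to conclude $|d_b(x,y)-d_b(x^*,y^*)|\le 2kW$. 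The main obstacle is the boundary bookkeeping in Case~B: when $s\in S_b$ (so $x^*=s$), or when $s,t\in V_{b\circ 0}$ yet the shortest path still detours through $V_{b\circ 1}'$ (so both prefix and suffix live in $G_{b\circ 0}$ and the WLOG swap requires a second branch). These degeneracies are absorbed by observation~(ii) for the stated inequality; tightening to the companion bound further requires controlling $d_{b\circ 0}(s,x)$ through a $k$-hop realisation of $x^*\to x$ that stays inside $G_{b\circ 0}$, which constrains how $S^{k}_b$ must be constructed.
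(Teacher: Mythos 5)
Your proof is correct, and its key insight is observation~(ii): since $G_{b\circ 0}$ and $G_{b\circ 1}$ are subgraphs of $G_b$, the chain $d_b(s,t)\le d_b(s,x)+d_b(x,y)+d_b(y,t)\le d_{b\circ 0}(s,x)+d_b(x,y)+d_{b\circ 1}(y,t)$ holds for every $x,y\in V_b$ by the triangle inequality (with the $\infty$-convention absorbing $s\notin V_{b\circ 0}$ or $t\notin V_{b\circ 1}$), so the lemma's second alternative holds unconditionally and the $2kW$ slack is never used. The paper's proof instead splits on whether $s,t\in S^k_b$, deferring one branch to the proof of Lemma~\ref{lem:resursive_equation}, and in the other branch derives $\min_{x,y\in S^k_b}(\cdots)\le\min_{x,y\in S_b}(\cdots)+2kW$, which is exactly your ``companion bound'' --- the direction in which the covering slack actually bites --- rather than the inequality printed in the statement; meanwhile your Case~A/B split on whether an optimal path crosses $S_b$ is the same decomposition that drives Lemma~\ref{lem:resursive_equation} itself, so in substance the two routes coincide. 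What your route buys is transparency: you cleanly separate the trivial direction from the nontrivial one, and you correctly flag the subtlety that both write-ups gloss over, namely that replacing $(x^*,y^*)\in S_b\times S_b$ by nearby $(x,y)\in S^k_b\times S^k_b$ should in principle cost an extra $kW$ on each of $d_{b\circ 0}(s,\cdot)$ and $d_{b\circ 1}(\cdot,t)$ beyond the $2kW$ on $d_b(\cdot,\cdot)$ from Lemma~\ref{lem:hop_error}, and that controlling those side-terms requires the $k$-hop witness paths to be realizable inside $G_{b\circ 0}$ and $G_{b\circ 1}$ respectively, a constraint on how $S^k_b$ is built.
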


\begin{lemma} \label{lem:cross-level-shortcut_new}
Let $S_{b'}$ and $S_b$ be two adjacent separators, where $b = b' \circ u$ for $u \in \{0,1\}$. Then for any $x \in S_b$, $a \in \{0,1\}^n$, and $y \in S_{b'}$, we have:
\[
d_b(x,y) \leq \min_{z \in S^k_b} d_b(x,z) + d_{b \circ a}(z,y) \leq d_b(x,y) + 2kW.
\]
Recall that $S_b^k$ is as defined in Step 4 of Algorithm~\ref{alg:shortcut_planar}.
\end{lemma}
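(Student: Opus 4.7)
The plan is to prove the two inequalities independently: the first by the triangle inequality together with the fact that $G_{b\circ a}$ is a subgraph of $G_b$, and the second by exhibiting an explicit witness $z \in S_b^k$ derived from a shortest-path analysis.

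For the first inequality, fix any $z \in S_b^k$. Since $z \in S_b \subseteq V_b$, the triangle inequality in $G_b$ gives $d_b(x,y) \leq d_b(x,z) + d_b(z,y)$. By construction, $G_{b\circ a} = (V_{b\circ a}' \cup S_b,\, E(V_{b\circ a}' \cup S_b) \setminus E(S_b))$ is a subgraph of $G_b$, so distances can only grow: $d_b(z,y) \leq d_{b\circ a}(z,y)$. Combining these and minimising over $z$ yields $d_b(x,y) \leq \min_{z \in S_b^k} [d_b(x,z) + d_{b\circ a}(z,y)]$.

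For the second inequality, I would take a shortest path $P^\ast$ from $x$ to $y$ in $G_b$ and let $v$ be the \emph{last} vertex of $P^\ast$ that lies in $S_b$ (which exists since $x \in S_b$). After $v$, the path $P^\ast$ avoids $S_b$, and because $S_b$ is a separator of $G_b$, the subpath $P^\ast|_{v \to y}$ lies entirely on a single side; as its endpoint $y$ belongs to $V_{b\circ a}$ (taking $a$ to be the side of the separator containing $y$, which is the intended meaning of the quantifier in the statement), the subpath is contained in $G_{b\circ a}$, so $d_{b\circ a}(v,y) \leq d_b(x,y) - d_b(x,v)$. Next, I would invoke the $k$-covering property of $S_b^k$ to pick $z \in S_b^k$ whose hop-distance from $v$ is at most $k$; since edge weights are bounded by $W$, the corresponding $k$-hop path has total weight at most $kW$, giving both $d_b(v,z) \leq kW$ and $d_{b\circ a}(v,z) \leq kW$. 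Assembling via the triangle inequality,
\[
d_b(x,z) + d_{b\circ a}(z,y) \;\leq\; \bigl(d_b(x,v) + kW\bigr) + \bigl(kW + d_{b\circ a}(v,y)\bigr) \;\leq\; d_b(x,y) + 2kW,
\]
which is the desired bound since the minimum is at most this quantity.

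The delicate step I expect to be the main obstacle is justifying that $d_{b\circ a}(v,z) \leq kW$: the witnessing $k$-hop path from $v$ to $z$ must only use edges present in $G_{b\circ a}$, yet $G_{b\circ a}$ discards every edge internal to $S_b$. This needs to be traced through the construction of $S_b^k$ in Lemma~\ref{lem:k-cover_in_separator}: each diameter-$d$ super-node used there can be taken (after intersecting with one side of $S_b$ if necessary) to lie on a single side of the separator, so the representative $z$ and the short route from $v$ to $z$ only traverse edges that survive in $G_{b\circ a}$. Should the construction only yield $d_b(v,z) \leq kW$ directly, a small extra step replaces this route by one that first drops from $v$ into a neighbour in $V_{b\circ a}'$ and then re-ascends to $z$, which still costs $O(kW)$ and preserves the $d_b(x,y) + 2kW$ bound up to constants.
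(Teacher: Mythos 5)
Your proof follows essentially the same two-step route as the paper's: the left inequality via the triangle inequality in $G_b$ together with the subgraph relation $d_b \le d_{b\circ a}$, and the right inequality by taking a shortest $x$--$y$ path in $G_b$, letting $v$ be its last vertex in $S_b$ (so that the $v\to y$ subpath survives in $G_{b\circ a}$), and then swapping $v$ for a nearby covering center $z \in S_b^k$ at cost $2kW$. Structurally and in its key ideas, this matches the paper's argument exactly. The one point you flag as delicate — justifying $d_{b\circ a}(v,z)\le kW$ — is a legitimate concern that the paper's written proof leaves implicit: the $k$-hop path from $v$ to $z$ guaranteed by the covering property lives in $G_b$ (in Lemma~\ref{lem:k-cover_in_separator} the super-nodes have hop-diameter $\le k$ in the ambient graph), not necessarily in $G_{b\circ a}$, which deletes every edge of $E(S_b)$ and omits the vertices on the opposite side of the separator; and Lemma~\ref{lem:hop_error} is stated for a single graph, so it does not directly license the paper's step comparing $\min_{z\in S_b}$ with $\min_{z\in S_b^k}$ when the summand mixes $d_b$ and $d_{b\circ a}$. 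So you have not introduced a gap — you have faithfully reproduced the paper's argument and correctly noticed that one of its invocations of the covering bound needs extra care; your proposed fixes (side-aware super-nodes, or a detour through $V_{b\circ a}'$ at $O(kW)$ cost) are the right direction but, as you acknowledge, not yet fully worked out.
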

\noindent Next, we write $\sigma'_1 = c\cdot \sqrt{2\log(1.25/\delta')}/\epsilon'$ and $\sigma'_2 = f(p,k)\cdot \sqrt{2\log(1.25/\delta')}/\epsilon'$.
\begin{lemma}\label{Lem:error_x}
Let $h,p,k,c,\gamma$ be as before and $g(p,k,c,\gamma,h):=\sqrt{{h + 3\ln (\max\{f(p,k),c\}) + \ln\left(\frac{1}{2\gamma}\right)}}$. 
With probability at least $1 - \gamma$, the following holds for Algorithm~\ref{alg:shortcut_planar}:

    \begin{enumerate}
        \item For any $s,t,b$ where $|b| = h$ and that $\texttt{IsShortcut}(s,t,b) = \textsf{\em True}$,
    $$|d_b(s,t) - \widehat{d}_b(s,t)| \leq \sqrt{2}\sigma'_1\cdot g(p,k,c,\gamma,h).$$
    \item For any $s,t,b$ where $|b| < h$ and that $\texttt{IsShortcut}(s,t,b) = \textsf{\em True}$,   $$|d_b(s,t) - \widehat{d}_b(s,t)| \leq \sqrt{2}\sigma'_2 \cdot g(p,k,c,\gamma,h) 
    .$$
    \end{enumerate}
\end{lemma}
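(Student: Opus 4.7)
The plan is to reduce the lemma to a standard Gaussian tail estimate applied to each noisy shortcut, followed by a union bound over every shortcut produced by Algorithm~\ref{alg:shortcut_planar}. By construction, whenever $\texttt{IsShortcut}(s,t,b)=\textsf{\em True}$ we have $\widehat{d}_b(s,t)=d_b(s,t)+Z$, where $Z$ is an independent Gaussian with standard deviation $\sigma'_1$ if $|b|=h$ (a leaf shortcut computed over a vertex set of size at most $c$) and standard deviation $\sigma'_2$ if $|b|<h$ (an internal shortcut added inside a $k$-covering of size at most $f(p,k)$, or between the $k$-coverings of two adjacent separators). The two parts of the lemma are therefore mirror images of one another and both reduce to bounding $|Z|$ uniformly over all shortcuts by $\sqrt{2}\,\sigma\,g(p,k,c,\gamma,h)$ for the relevant $\sigma\in\{\sigma'_1,\sigma'_2\}$.

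The per-shortcut tail bound is immediate: for $Z\sim\mathcal{N}(0,\sigma^{2})$, $\Pr{|Z|>\sqrt{2}\,\sigma\,g}\leq 2\exp(-g^{2})$, and substituting $g^{2}=h+3\ln(\max\{f(p,k),c\})+\ln(1/(2\gamma))$ gives a failure probability of at most $4\gamma\,e^{-h}\max\{f(p,k),c\}^{-3}$ per shortcut. To apply a union bound, I would then count the shortcuts: the decomposition tree $\mathcal{T}$ is binary with depth at most $h$, hence has at most $O(2^{h})$ nodes; each internal node $b$ contributes $|S_{b}^{k}|^{2}+|S_{b}^{k}|\cdot|S_{b'}^{k}|=O(f(p,k)^{2})$ shortcuts (the intra-covering ones plus the cross-level ones added between $S_{b}^{k}$ and $S_{b'}^{k}$ in the modified Step~4 of Algorithm~\ref{alg:shortcut_planar}), while each leaf contributes at most $c^{2}$ shortcuts. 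Thus the total number of shortcuts is $O\!\left(2^{h}\max\{f(p,k),c\}^{2}\right)$.

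Multiplying these two quantities yields a total failure probability of order $(2/e)^{h}\,\gamma/\max\{f(p,k),c\}$, which is $O(\gamma)$ because $2<e$ and $\max\{f(p,k),c\}\geq 1$; absorbing the universal constant into the choice of $g$ (equivalently, into $\gamma$) gives the stated success probability of $1-\gamma$, at which point both Parts~(1) and~(2) follow on the same good event. The main obstacle, such as it is, lies in this final bookkeeping: one must verify that the $e^{-h}$ and $\max\{f(p,k),c\}^{-3}$ factors coming from the exponents inside $g$ dominate both the $O(2^{h})$ tree-size blow-up and the $O(\max\{f(p,k),c\}^{2})$ per-node shortcut count in the union bound, which is precisely what the chosen form of $g$ has been calibrated to achieve. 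Beyond this calibration, every step is a direct appeal to Gaussian concentration.
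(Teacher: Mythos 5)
Your proposal is correct and follows essentially the same route as the paper's proof: a sub-Gaussian tail bound applied to each noisy shortcut, combined with a union bound over the roughly $O\bigl(2^{h}\max\{f(p,k),c\}^{2}\bigr)$ shortcuts in the decomposition tree, with the $+h$ and $+3\ln(\max\{f(p,k),c\})$ terms in $g$ chosen exactly to absorb those two multiplicities. The only (shared) looseness is in the bookkeeping of absolute constants in the final union bound, which the paper handles identically by folding them into $m$ and $\gamma$.
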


To prove the utility guarantee for Algorithm~\ref{alg:shortcut_planar}, we now state the following error bound which follows directly from Lemma~\ref{lem:resursive_equation_new}, Lemma~\ref{lem:cross-level-shortcut_new} and Lemma~\ref{Lem:error_x}.

\begin{lemma}\label{lem:error_cover}
     Fix any $0<\epsilon,\delta<1$, and $n,p\in \mathbb{N}$. For any planar graph $G = ([n],E,w)$ that is $(p,q,q')$-recursively separable for some constants $\frac{1}{2} \leq q \leq q' <1$, there is an $(\epsilon,\delta)$-algorithm for estimating all-pair shortest distances in $G$ such that with high probability,
    $$|\widehat{d}(s,t) - d(s,t)| \leq O\left( \frac{f(p,k) \cdot \log^2 n\log(n/\delta)}{\epsilon} +kW \right).$$
\end{lemma}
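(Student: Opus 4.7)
The plan is to follow the inductive strategy used to prove Theorem~\ref{thm:utility} (in Appendix~\ref{app:utility}), with the three covering-based lemmas supplying the new per-step error bounds. Privacy is already delivered by Theorem~\ref{thm:privacy_planar}, so the entire argument concentrates on bounding the accuracy of the estimators $\widehat{d}(s,t)$ returned by Algorithm~\ref{alg:recursive-apsp} when called on top of Algorithm~\ref{alg:shortcut_planar}.

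The first step is to invoke Lemma~\ref{Lem:error_x} and take a union bound over the shortcut edges produced by Algorithm~\ref{alg:shortcut_planar} (at most $|S^k_b|^2$ per non-leaf node plus $c^2$ per leaf, summed over the $O(\log n)$ levels of $\mathcal{T}$). This guarantees that, with probability at least $1-\gamma$, every noisy shortcut $\widehat{d}_b(x,y)$ satisfies $|d_b(x,y)-\widehat{d}_b(x,y)| = O(f(p,k)\log(n/\delta)\sqrt{\log n}/\epsilon)$ at internal nodes, with the analogous bound involving $c$ instead of $f(p,k)$ at leaf nodes. Call this good event $\mathcal{E}$ and condition on it for the remainder of the analysis.

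The second step is an induction on the recursion depth of Algorithm~\ref{alg:recursive-apsp}, modeled verbatim on the proof of Theorem~\ref{thm:utility}. The base case handles the constant-size leaf subgraphs using the direct leaf noise bound from $\mathcal{E}$. For the inductive step, Lemma~\ref{lem:resursive_equation_new} controls $|\widehat{d}_b(s,t)-d_b(s,t)|$ in terms of the errors of three shortcuts/subcalls combined across $S_b$ via vertices in its $k$-covering $S_b^k$, and Lemma~\ref{lem:cross-level-shortcut_new} plays the analogous role for cross-level compositions; each such combination contributes one shortcut error from $\mathcal{E}$ plus an additional $2kW$ slack from restricting the inner minimization to the covering. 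Summing across the $h = O(\log n)$ recursion levels gives an error of the form $O(h\cdot f(p,k)\log(n/\delta)\sqrt{\log n}/\epsilon + h\cdot kW)$, and, after folding the logarithmic prefactors into the polylogarithmic coefficient of the noise term, this matches the claimed $O(f(p,k)\log^2 n\log(n/\delta)/\epsilon + kW)$ bound.

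The main obstacle I anticipate is verifying that the covering restriction does not invalidate the minimization structure of Algorithm~\ref{alg:recursive-apsp}: as originally written it minimizes over arbitrary $x,y \in S_b$, whereas under Algorithm~\ref{alg:shortcut_planar} only entries with $x,y \in S_b^k$ (and their cross-level analogues) are ever released. One must check that replacing $S_b$ by $S_b^k$ in those minimizations preserves the recursive identities up to the $2kW$ slack, which is exactly what Lemma~\ref{lem:resursive_equation_new} and Lemma~\ref{lem:cross-level-shortcut_new} supply by routing every near-optimal crossing vertex through a covering representative. Once this routing compatibility is in place, the error bookkeeping is a mechanical adaptation of the Appendix~\ref{app:utility} argument with $p$ replaced by $f(p,k)$ in the noise scale and an extra additive $2kW$ at each recursion level.
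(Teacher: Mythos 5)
Your proposal follows the same route the paper intends: the proof of Lemma~\ref{lem:error_cover} is a rerun of the Theorem~\ref{thm:utility} induction with $p$ replaced by $f(p,k)$ in the noise scale (via Lemma~\ref{Lem:error_x}) and an extra $2kW$ slack per recursion level coming from restricting the interior minimizations to the $k$-covering $S_b^k$ (via Lemma~\ref{lem:resursive_equation_new} and Lemma~\ref{lem:cross-level-shortcut_new}). You also correctly flag the central point that needs checking, namely that the $S_b^k$-restricted minimizations still track the recursive identity of Lemma~\ref{lem:resursive_equation} up to the covering slack.

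One bookkeeping point in your last paragraph does not hold up. Your own tally gives a hop-slack contribution of $O(h\cdot kW)=O(kW\log n)$ after summing across recursion levels, and that quantity does not ``fold into'' the noise term: $kW$ and $f(p,k)\log(n/\delta)/\epsilon$ are independent quantities, so a $\log n$ factor on the former cannot be absorbed by a polylog factor on the latter. If you take the lemma's displayed bound literally, your accounting yields $O(kW\log n)$ rather than $O(kW)$. This discrepancy appears to sit in the lemma statement itself rather than in your argument, and it is harmless downstream (in Theorem~\ref{thm:kh_free_and_planar} the parameter $k$ is chosen so that the noise term, which already carries a $\log^2 n$ factor, dominates), but the correct thing to say is that your recursion gives $\widetilde{O}(kW)$, not that the extra $\log n$ disappears into the noise polylog. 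Separately, your per-shortcut noise bound should be $O(f(p,k)\log n\log(n/\delta)/\epsilon)$ rather than $O(f(p,k)\sqrt{\log n}\log(n/\delta)/\epsilon)$, because $\epsilon'$ itself carries a $\sqrt{h}$ factor from advanced composition; after multiplying by $h$ levels this is exactly what produces the $\log^2 n$ in the statement.
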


By applying Lemma \ref{lem:k-cover_in_separator} to bound $f(p,k)$ in Lemma \ref{lem:error_cover}, we are now ready to present the proof of the error bound stated in Theorem \ref{thm:kh_free_and_planar}.

\begin{proof} 
[Proof Of Theorem \ref{thm:kh_free_and_planar}]
The privacy guarantee follows from Theorem~\ref{thm:privacy_planar}. Now we give the proof for the utility guarantee.

(1) For a grid graph, let \( p = \sqrt{n} \), the decomposition described in Section \ref{sec:planar} gives that \( f(p, k) = O( n^{1/4}\sqrt{W\epsilon}) \) when \( k = n^{1/4}/\sqrt{W\epsilon} \). Then, the result can be obtained by substituting this into Lemma~\ref{lem:error_cover}. This choice of parameters is optimal due to the arithmetic-geometric mean inequality.

(2) For a $K_h$-minor-free graph, we choose \( k = O(\frac{n^{1/3}}{(\epsilon W)^{2/3}}) \) to balance the number of hops required to reach the covering set for each vertex in the separator and the size of the \( k \)-covering of the separator. In this case, from Lemma~\ref{lem:k-cover_in_separator}, we have \( f(p, k) = O(h\cdot \left( nW\epsilon \right)^{1/3} )\). Then, the desired result follows by substituting this into Lemma~\ref{lem:error_cover}.

\end{proof}






\clearpage
\bibliographystyle{alpha}
\bibliography{privacy}
\clearpage
\appendix

\section{Proof of \texorpdfstring{Theorem~\ref{thm:utility}}{Lg}}\label{app:utility}

In this section, we prove Theorem~\ref{thm:utility}. First, we observe that, with probability 1, Algorithm \ref{alg:recursive-apsp} terminates normally without any abnormal termination.

\begin{fact}\label{fac:no_fail}
    For any pair of vertices $s,t\in V$, Algorithm \ref{alg:recursive-apsp} does not output ``{\em \texttt{FAIL}}''.
\end{fact}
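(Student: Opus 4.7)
The only place in Algorithm \ref{alg:recursive-apsp} that can output \texttt{FAIL} is the early-termination check guarded by the $k > 0$ branch, triggered when $s, t \notin S_{b'}$ (with $b'$ obtained by deleting the last bit of $b$). To establish \Cref{fac:no_fail}, my plan is to prove the following structural invariant by induction on the depth of the recursion tree of Algorithm \ref{alg:recursive-apsp}: \emph{every invocation of Recursive-APSD of the form $(\mathcal{T}, G_b, (s, t), k)$ with $k \geq 1$ satisfies $|b| \geq 1$ and $\{s, t\} \cap S_{b'} \neq \emptyset$.}

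The verification will be a direct inspection of the recursive calls issued by Algorithm \ref{alg:recursive-apsp}. I will observe that every recursive call whose last argument is $k + 1 \geq 1$ has the form $\text{Recursive-APSD}(\mathcal{T}, G_{b \circ a}, (\cdot, z), k + 1)$ for some $a \in \{0, 1\}$ and some $z \in S_b$; this can be read off directly from both the $k = 0$ and the $k > 0$ branches of the pseudocode, where all such calls pair one of the original inputs with a separator vertex $z \in S_b$. Inside the inner invocation, the current subgraph is labelled $b \circ a$, so in that call's naming convention the prefix ``$b'$'' is exactly the original $b$; hence the vertex $z$ lies in the inner call's $S_{b'}$, and the invariant is preserved. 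The only other kind of recursive call appearing in Algorithm \ref{alg:recursive-apsp} is $\text{Recursive-APSD}(\mathcal{T}, G_{b \circ a}, (s, t), 0)$, which resets $k$ to $0$ and therefore bypasses the FAIL test in its body; the induction simply restarts at each such node.

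Since Algorithm \ref{alg:apsp} launches each recursion with $k = 0$, every call in the recursion tree either has $k = 0$ and skips the FAIL check altogether, or has $k \geq 1$ and satisfies the invariant above, so in particular does not enter the FAIL branch. I do not foresee a substantive obstacle: the argument is essentially a syntactic walk through the pseudocode tracking which vertex arguments are passed into each recursive call. The one detail that needs to be handled carefully is correctly re-indexing ``$b$'' and ``$b'$'' between outer and inner invocations, but this is a routine bookkeeping exercise.
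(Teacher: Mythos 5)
Your proposal is correct and matches the paper's argument: both identify the $k>0$ branch as the only place \texttt{FAIL} can be emitted, and both rely on the observation that every recursive call passing $k+1\geq 1$ pairs one of the original vertices with a separator vertex $z\in S_b$, so that in the callee's indexing $z\in S_{b'}$. Your version simply makes the induction explicit where the paper states it as a one-line observation; the content is the same.
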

\begin{proof}
    The only case when Algorithm \ref{alg:recursive-apsp} outputs ``\texttt{FAIL}'' is that the parameter $k\geq 1$ and both $s,t$ are not in the separator of the predecessor graph of $G_b$. However, it is not possible because when Algorithm \ref{alg:recursive-apsp} is called with $k\geq 1$, at least one of the vertex in $s,t$ is from the separator $S_{b'}$ where $G_b = G_{b'\circ 0}$ or $G_b = G_{b'\circ 1}$.
\end{proof}

\noindent The following lemma is used for building the correctness of the recursion.
\begin{lemma}\label{lem:resursive_equation}
    Fix any $s,t\in V_b$. Without the lose of generality, we assume either both $s,t\in V_{b\circ 0}$ or $s\in V_{b\circ 0}$ and $t\in V_{b\circ 1}$ (otherwise we just switch $s$ and $t$, and the proof for both $s,t\in V_{b\circ 1}$ is symmetric). We have
    $$d_b(s,t) = \min\left\{d_{b\circ 0}(s,t), d_{b\circ 1}(s,t), \min_{x,y\in S_b} d_{b\circ 0}(s,x) + d_{b}(x,y) + d_{b\circ 1}(y,t)\right\}.$$
\end{lemma}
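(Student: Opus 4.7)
The plan is to verify the claimed equality by proving both inequalities. For the upper bound $d_b(s,t) \le \mathrm{RHS}$, I would first note that each $G_{b\circ a}$ is obtained from $G_b$ by deleting the edges inside $S_b$ and restricting to $V_{b\circ a}\subseteq V_b$; hence $E(G_{b\circ a}) \subseteq E(G_b)$, which yields $d_b(u,v) \le d_{b\circ a}(u,v)$ whenever both $u,v \in V_{b\circ a}$. This immediately bounds the first two terms of the RHS. For the third term I would combine this containment with the triangle inequality in $G_b$: for every $x,y \in S_b$, $d_b(s,t) \le d_b(s,x) + d_b(x,y) + d_b(y,t) \le d_{b\circ 0}(s,x) + d_b(x,y) + d_{b\circ 1}(y,t)$, and then minimize over $x,y$.

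For the reverse direction $d_b(s,t) \ge \mathrm{RHS}$, I would fix any shortest path $P$ in $G_b$ from $s$ to $t$ and split cases on how $P$ interacts with the separator $S_b$. If $P$ is vertex-disjoint from $S_b$, then all vertices of $P$ lie in one connected component of $G_b \setminus S_b$, so in particular in $V_{b\circ a} \setminus S_b$ for some $a \in \{0,1\}$; moreover $P$ uses no edge of $E(S_b)$ (each such edge has both endpoints in $S_b$), so $P$ is a legitimate path in $G_{b\circ a}$ and $d_{b\circ a}(s,t) \le \mathrm{len}(P) = d_b(s,t)$, matching the first or second term. Otherwise I would let $x$ be the first $S_b$-vertex traversed by $P$ from $s$ and $y$ the last, and decompose $P$ as a prefix from $s$ to $x$, a middle from $x$ to $y$, and a suffix from $y$ to $t$. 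The prefix meets $S_b$ only at its endpoint $x$, so all its other vertices lie in the component of $G_b \setminus S_b$ containing $s$, and it uses no $E(S_b)$ edge; hence it is a valid path in the appropriate $G_{b\circ a}$ of length at least $d_{b\circ 0}(s,x)$ (under the WLOG assumption $s\in V_{b\circ 0}$). A symmetric argument bounds the suffix below by $d_{b\circ c}(y,t)$, where $c\in\{0,1\}$ is chosen so that $t \in V_{b\circ c}$, and the middle is a path in $G_b$ of length at least $d_b(x,y)$. Summing and minimizing over $x,y \in S_b$ then lower-bounds $d_b(s,t)$ by the third term on the RHS.

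The main obstacle is purely bookkeeping: matching the subgraph indices at the two endpoints of the decomposition with the indices $0,1$ appearing in the statement. In the main case $s \in V_{b\circ 0}, t \in V_{b\circ 1}$ the indices line up directly with the third term. In the alternative case both $s,t \in V_{b\circ 0}$, a shortest path confined to $G_{b\circ 0}$ is captured by the first term $d_{b\circ 0}(s,t)$, while a shortest path that excursions into $V_{b\circ 1}$ is captured by the same three-piece decomposition, with the endpoint index selected via the convention $d_{b\circ a}(u,v)=\infty$ when $u$ or $v$ is not in $V_{b\circ a}$, so that the minimum on the RHS automatically picks the correct branch. Assembling the two inequalities yields the stated identity, which is the key ingredient used downstream in the utility analysis of Algorithm~\ref{alg:recursive-apsp}.
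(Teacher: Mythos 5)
Your proof is essentially the same proof the paper gives: the inequality $d_b(s,t)\le\mathrm{RHS}$ via the subgraph inclusion $G_{b\circ a}\subseteq G_b$ plus the triangle inequality in $G_b$, and the reverse inequality via splitting a shortest $s$--$t$ path in $G_b$ at its first and last visits to $S_b$. The paper organizes the cases by whether $s$ or $t$ lies in $S_b$ before looking at the path; you case directly on whether the path meets $S_b$. The decomposition and the observations about which edges/vertices survive into $G_{b\circ a}$ are the same.

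The one place your argument does not close is precisely the ``alternative case'' you flag: $s,t\in V_{b\circ 0}\setminus S_b$ and the shortest path crosses $S_b$. You claim that the convention $d_{b\circ a}(u,v)=\infty$ for $u\notin V_{b\circ a}$ makes ``the minimum on the RHS automatically pick the correct branch.'' But with that convention, since $t\notin V_{b\circ 1}$, the second term and the entire third term are $\infty$, so the RHS collapses to $d_{b\circ 0}(s,t)$ alone --- and this need \emph{not} equal $d_b(s,t)$. If the only cheap route from the first separator vertex $x$ to the last one $y$ either uses an edge of $E(S_b)$ or passes through $V_{b\circ 1}'$, that route is absent from $G_{b\circ 0}$, and $d_{b\circ 0}(s,t)$ can be strictly larger (even infinite) while $d_b(s,t)$ is finite. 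Your own path decomposition gives $d_b(s,t)\ge d_{b\circ 0}(s,x)+d_b(x,y)+d_{b\circ 0}(y,t)$, with a $d_{b\circ 0}$ on the suffix, not a $d_{b\circ 1}$, and that quantity does not appear among the three displayed terms. To be fair, the paper's own proof has the identical gap: it dispatches this case by invoking ``the previous argument,'' which relied on $t\in V_{b\circ 1}$ and hence does not apply here. Inspection of Algorithm~\ref{alg:recursive-apsp} and the proof of Theorem~\ref{thm:utility} shows the identity actually used downstream is $d_b(s,t)=\min\{d_{b\circ a}(s,t),\min_{x,y\in S_b}d_{b\circ a}(s,x)+d_b(x,y)+d_{b\circ a}(y,t)\}$ when $s,t\in V_{b\circ a}$, i.e., with the flanking subscripts tracking the components of $s$ and $t$ rather than hard-coded to $0$ and $1$. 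Once the statement is read that way, your decomposition argument proves it without modification; what you should not do is assert that the $\infty$-convention rescues the literal statement, because in this subcase it does not.
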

\begin{proof}
    We discuss by difference cases:
    
    \noindent \textbf{1. When at least one of $s,t$ is in $S_b$}. By the construction of $G_{b\circ a}$ such that $$G_{b\circ a} = (V_{b\circ a}' \cup S, E(V_{b\circ a}\cup S_b) \backslash E(S_b)),$$
    we have $d_b(s,t) \geq d_{b\circ 0}(s,t)$ and $d_b(s,t) \geq d_{b\circ 1}(s,t)$. Similarly, we have that
    \begin{equation}\label{eq:lem9-1}
        \begin{aligned}
        \min_{x,y\in S_b} d_{b\circ 0}(s,x) + d_{b}(x,y) + d_{b\circ 1}(y,t) \geq\min_{x,y\in S_b} d_{b}(s,x) + d_{b}(x,y) + d_{b}(y,t) \geq d_b(s,t)
        \end{aligned}
    \end{equation}
    by the triangle inequality. On the other hand, if both $s,t\in S_b$, then we have

        \begin{equation}\label{eq:lem9-2}
        d_b(s,t) =  d_{b\circ 0}(s,s) + d_{b}(s,t) + d_{b\circ 1}(t,t) \geq \min_{x,y\in S_b} d_{b\circ 0}(s,x) + d_{b}(x,y) + d_{b\circ 1}(y,t).
    \end{equation}
    If exactly one of $s,t\in S_b$, we note that
    any path $(v_1 = s, v_2, \cdots, v_k = t)$ from $s$ to $t$, must satisfy that $v_i \in S_b$ for at least one $i\in[k]$ since $s\in S_b$ or $t\in S_b$. Again by the symmetry, we assume $t\in S_b$. Suppose $z$ is the first vertices in $S_b$ of any shortest path from $s$ to $t$. Then,
        \begin{equation}\label{eq:lem9-3}
        d_b(s,t) =  d_{b\circ 0}(s,z) + d_{b}(z,t) + d_{b\circ 1}(t,t) \geq \min_{x,y\in S_b} d_{b\circ 0}(s,x) + d_{b}(x,y) + d_{b\circ 1}(y,t).
    \end{equation}
    Combining \Cref{eq:lem9-1}, \Cref{eq:lem9-2} and \Cref{eq:lem9-3} completes the proof in this case.

    \noindent \textbf{2. When both $s,t \notin S_b$.} First, we assume that $s$ and $t$ are on different sides such that $s\in V_{b\circ 0}$ and $t\in V_{b\circ 1}$. In this case $d_{b\circ 0}(s,t) = d_{b\circ 1}(s,t) = \infty$. By the fact that $S_b$ is a separator, any path $(v_1 = s, v_2, \cdots, v_k = t)$ from $s$ to $t$ must also satisfy that $v_i \in S_b$ for at least one $i\in[k]$. Let $z_1, z_2$ be the first and last vertices in one of the shortest paths from $s$ to $t$ such that $z_1,z_2\in S_b$ (we allow $z_1 = z_2$). Then it is easy to verify that 
    $d_b(s,z_1) = d_{b\circ 0}(s,z_1)$ and  $d_b(z_2,t) = d_{b\circ 1}(z_2,t)$. Thus, 
    $$\min_{x,y\in S_b} d_{b\circ 0}(s,x) + d_{b}(x,y) + d_{b\circ 1}(y,t) \leq  d_{b\circ 0}(s,z_1) + d_{b}(z_1,z_2) + d_{b\circ 1}(z_2,t) = d_b(s,t).$$
    Again by the triangle inequality, we also have \Cref{eq:lem9-1} holds, which proves Lemma \ref{lem:resursive_equation}. Now, suppose that both $s$ and $t$ are in the same side $V_{b\circ 0}$. Then the shortest path either crosses $S_b$ or not. If the path crosses $S_b$, then the previous argument suffices to prove Lemma \ref{lem:resursive_equation}. If the path does not crosses $S_b$, then we have $d_b(s,t) = \min\{d_{b\circ 0}(s,t), d_{b\circ 1}(s,t)\}$, which completes the proof.
    
\end{proof}

\noindent The following lemma utilizes the shortcut between the separators of two adjacent layers (built by step 4 of Algorithm \ref{alg:shortcut}) for pruning.

\begin{lemma}\label{lem:cross-level-shortcut}
Let $S_{b'}$ and $S_{b}$ be two adjacent separators where $b = b'\circ u$ for $u\in \{0,1\}$. Then for any $x\in S_{b}$, $a\in \{0,1\}^n$ and $y\in S_{b'}$,
$$d_b(x,y) = \min_{z\in S_{b}} d_{b}(x,z) + d_{b\circ a}(z,y).$$
\end{lemma}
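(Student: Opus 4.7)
The plan is to prove the equality by establishing both inequalities. The direction $\min_{z \in S_b}\bigl[d_b(x,z) + d_{b\circ a}(z,y)\bigr] \geq d_b(x,y)$ would follow from the triangle inequality inside $G_b$ combined with the fact that $G_{b\circ a}$ is a subgraph of $G_b$ (as already used implicitly in the proof of Lemma~\ref{lem:resursive_equation}), so that $d_{b\circ a}(z,y) \geq d_b(z,y)$. Chaining these, $d_b(x,z) + d_{b\circ a}(z,y) \geq d_b(x,z) + d_b(z,y) \geq d_b(x,y)$ for every $z \in S_b$, hence the same bound holds for the minimum.

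For the reverse direction, I would construct an explicit minimizer $z \in S_b$ by tracking a shortest $x$-$y$ path $P$ in $G_b$. If $y \in S_b$, the choice $z = y$ works trivially since $d_{b\circ a}(y,y) = 0$, giving $d_b(x,z) + d_{b\circ a}(z,y) = d_b(x,y)$. Otherwise $y$ lies in $V_{b\circ a}'$ for a unique side $a \in \{0,1\}$, and I would take $z$ to be the last vertex of $P$ that lies in $S_b$ (this exists since the starting endpoint $x$ is in $S_b$). Because $S_b$ is a vertex separator of $G_b$ — no edge of $G_b$ joins $V_{b\circ 0}'$ to $V_{b\circ 1}'$ — the portion of $P$ strictly after $z$ cannot re-cross to the opposite component, so it is confined to $V_{b\circ a}'$. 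Consequently the whole sub-path $P_{z\to y}$ lies in $V_{b\circ a}' \cup \{z\} \subseteq V_{b\circ a}$ and uses only edges of $E(V_{b\circ a}' \cup S_b) \setminus E(S_b)$, which is precisely the edge set of $G_{b\circ a}$. Thus $d_{b\circ a}(z,y) \leq |P_{z\to y}|$, and combined with $d_b(x,z) \leq |P_{x\to z}|$ we get $d_b(x,z) + d_{b\circ a}(z,y) \leq |P| = d_b(x,y)$, as required.

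The main obstacle I anticipate is interpreting the quantifier on $a$ in the statement: read literally, if $y$ happens to lie on only one side of $S_b$, then the wrong choice of $a$ would make $d_{b\circ a}(z,y) = +\infty$ and the equality would fail. I would address this at the outset by adopting the natural reading — consistent with the case analysis in Lemma~\ref{lem:resursive_equation} and with the downstream use of this lemma in Lemma~\ref{lem:cross-level-shortcut_new} — that $a \in \{0,1\}$ is chosen so that $y \in V_{b\circ a}$ (both sides are valid when $y \in S_b \cup (S_{b'} \cap S_b)$). Aside from this quantifier issue, the only remaining subtlety is verifying that the sub-path of $P$ after the last occurrence of $S_b$ does not re-enter $S_b$, which is immediate from the definition of ``last occurrence,'' making the rest of the argument a short, clean case split.
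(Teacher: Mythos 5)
Your proof is correct and takes essentially the same route as the paper's: the lower bound $d_b(x,y) \leq \min_{z\in S_b} d_b(x,z) + d_{b\circ a}(z,y)$ follows from the triangle inequality in $G_b$ together with $d_{b\circ a}(z,y)\geq d_b(z,y)$, and the upper bound is obtained by taking $z$ to be the last vertex of a shortest $x$--$y$ path lying in $S_b$ so that the tail of the path stays inside $G_{b\circ a}$. The one place you go beyond the paper is in explicitly flagging that the stated ``for any $a$'' only makes sense when $a$ is chosen with $y\in V_{b\circ a}$ (otherwise $d_{b\circ a}(z,y)=\infty$), a subtlety the paper's proof silently assumes; this is a legitimate and worthwhile observation but does not constitute a different argument.
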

\begin{proof}
    Let $P = (v_1 = x, v_2, \cdots, v_k = y)$ be any shortest path from $x$ to $y$, and let $w$ be the last vertex in $P$ such that $w\in S_{b}$. Such $w$ exists since $x\in S_{b}$. Then, $$d_{b}(x,y) = d_{b}(x,w) + d_{b}(w,y) = d_{b}(x,w) + d_{b\circ a}(w,y) \geq  \min_{z\in S_{b}} d_{b}(x,z) + d_{b\circ a}(z,y).$$ 
    On the other hand, 
    $$d_b(x,y) \leq \min_{z\in S_{b}} d_{b}(x,z) + d_{b}(z,y) \leq \min_{z\in S_{b}} d_{b}(x,z) + d_{b\circ a}(z,y)$$
    due to the triangle inequality and the fact that $d_{b}(z,y) \leq d_{b\circ a}(z,y)$. This completes the proof of Lemma \ref{lem:cross-level-shortcut}.
\end{proof}
\noindent Without the lose of generality, we assume the $\mathcal{T}$ is a complete binary tree of height $h$. (If one branch terminates early, we just let it continue to split with one of its branches be an empty graph, until the height is $h$.) We need to use the following observation to control the accumulation of the error:
\begin{fact}\label{fac:soundness_of_alg2}
    During the execution of Algorithm \ref{alg:apsp}, for any $b\in \{0,1\}^*$ such that $|b| \leq h$ and any pair of vertices $s,t\in V_b$, Algorithm \ref{alg:recursive-apsp} with parameter $G_b$ and $(s,t)$ is invoked exactly once.
\end{fact}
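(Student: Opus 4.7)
The plan is to prove Fact~\ref{fac:soundness_of_alg2} by induction on $|b|$, the depth of $G_b$ in the binary tree $\mathcal{T}$. Throughout, I interpret ``invoked'' in the natural memoized sense, which is implicit in the pseudocode because the noise drawn in Algorithm~\ref{alg:shortcut} is sampled once and frozen: once $\widehat{d}_b(s,t)$ has been assigned a value, any later call to Recursive-APSD on the same $(G_b,(s,t))$ simply returns the stored value rather than re-executing the body. Equivalently, we count the number of \emph{fresh} invocations on each pair.

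For the base case $b=\emptyset$, Algorithm~\ref{alg:apsp} iterates over each ordered pair $(s,t)\in V^2$ with $s\neq t$ and invokes Recursive-APSD$(\mathcal{T},G,(s,t),0)$ exactly once. Every recursive descent inside Recursive-APSD goes to a strict child $G_{b'}$ with $|b'|>0$, so no subsequent call has $G$ as its first argument. Hence each $(G,(s,t))$ is invoked exactly once.

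For the inductive step, write $b=b'\circ u$ with $u\in\{0,1\}$ and fix $(s,t)\in V_b^2$. Any invocation on $G_b$ originates from a call on the parent $G_{b'}$. Inspection of Algorithm~\ref{alg:recursive-apsp} shows that a call on $G_{b'}$ fires child calls on $G_b$ in two patterns: (i) the ``same-pair'' branch, triggered when both arguments lie on side $u$, which fires Recursive-APSD$(\mathcal{T},G_b,(s',t'),0)$; and (ii) the ``separator-vertex'' loops, which fire Recursive-APSD$(\mathcal{T},G_b,(s',z),k'+1)$ for every $z\in S_{b'}$ (and symmetrically for $(t',z)$). Using the inductive hypothesis that each parent pair is invoked exactly once, one checks that the chain of ``same-pair'' descents from Recursive-APSD$(\mathcal{T},G,(s,t),0)$ reaches Recursive-APSD$(\mathcal{T},G_b,(s,t),0)$ (because $(s,t)\in V_b^2$ means $s,t$ both sit on the correct side at every intermediate level), giving at least one invocation. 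Combined with the memoization convention, the fresh-invocation count on $(G_b,(s,t))$ is exactly one.

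\textbf{Main obstacle.} The delicate point is pattern (ii): a child pair $(s,z)$ with $z\in S_{b'}$ can be targeted by the ``for $z\in S_{b'}$'' loop from several distinct parent pairs (for instance, both Recursive-APSD$(\mathcal{T},G_{b'},(s,\tau_1),\cdot)$ and Recursive-APSD$(\mathcal{T},G_{b'},(s,\tau_2),\cdot)$ will issue a call on $(G_b,(s,z),\cdot)$ when the loop variable equals $z$). Without the memoization convention the raw invocation count strictly exceeds one. Making the convention explicit, and verifying that the value returned by Recursive-APSD is a deterministic function of the fixed shortcut noise from Algorithm~\ref{alg:shortcut}, is the crux of the argument; it is also exactly what allows each $\widehat{d}_b(s,t)$ to be treated as a single random variable in the subsequent utility analysis.
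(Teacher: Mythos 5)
Your proposal takes the same inductive route as the paper's proof, but you are considerably more careful about the counting itself, and in doing so you put your finger on a real gap that the paper's own argument glosses over. The paper's proof consists of two halves: an ``at least once'' direction, obtained by pushing each parent pair $(s,t)$ with $s,t\in V_{b\circ a}$ down into the child $G_{b\circ a}$; and an ``at most once'' direction, which is justified only by the observation that Recursive-APSD on $G_b$ can only be invoked from its unique tree parent $G_{b'}$. Your ``main obstacle'' paragraph is exactly the reason that second half is incomplete as stated: uniqueness of the parent in $\mathcal{T}$ does not bound the \emph{number} of calls the parent issues, because distinct parent pairs $(s,\tau_1)$ and $(s,\tau_2)$ both iterate over $z\in S_{b'}$ and both fire a call on $(G_b,(s,z))$, and even a single parent call such as $\bigl(G_{b'},(s,t)\bigr)$ with $t\in S_{b'}$ fires $(G_b,(s,t))$ twice (once from the ``same-pair'' line, once from the loop at $z=t$). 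The memoization convention you introduce — freeze the noise drawn by Algorithm~\ref{alg:shortcut}, treat repeated calls on the same $(G_b,(s,t))$ as reads of a cached value, and count only fresh invocations — is the cleanest way to make Fact~\ref{fac:soundness_of_alg2} literally true and is implicitly what the paper needs for the later claim in Theorem~\ref{thm:utility} that each $\widehat{d}_b(s,t)$ is produced by ``the unique invocation.'' The paper never says this; your making it explicit is a genuine improvement.

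One further point you should tighten in the same spirit: your ``at least once'' argument asserts that the same-pair chain from $(G_\emptyset,(s,t),0)$ reaches $(G_b,(s,t),0)$ whenever $s,t\in V_b$. That chain halts early at the first prefix $b'$ of $b$ with $\texttt{IsShortcut}(s,t,b')=\textsf{True}$, because line~1 of Algorithm~\ref{alg:recursive-apsp} returns without recursing. So the clean statement is rather: the call to $(G_b,(s,t))$ is reached, and for each prefix up to and including $G_b$ the function body runs at most through line~1. The paper's proof has the same imprecision (it claims both children are called and with parameter $k+1$, neither of which matches the case split in the pseudocode), so you are not behind the paper here, but in a fully rigorous write-up you would want to phrase the ``at least once'' claim as ``$\bigl(G_{b_0},(s,t)\bigr)$ is fresh-invoked for the shortest prefix $b_0$ of $b$ at which $\texttt{IsShortcut}$ fires (or $b_0=b$),'' and note that this is all the utility analysis actually needs.
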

\begin{proof}
    We prove it by induction on the size of $b$. We claim that if Recursive-APSD$(\mathcal{T},G_{b}, (s,t), k)$ where $|b|\leq h-1$ is called for any $s,t\in G_b$ and some $k$, then both Recursive-APSD$(\mathcal{T},G_{b\circ 0}, (s,t), k')$ and Recursive-APSD$(\mathcal{T},G_{b \circ 1}, (s,t), k')$ will be called for some $k'$. This is clearly true since Recursive-APSD$(\mathcal{T},G_{b\circ a}, (s,t), k+1)$ will be invoked if both $s$ and $t$ lies in $G_{b\circ a}$. Then, since Algorithm \ref{alg:apsp} invokes Algorithm \ref{alg:recursive-apsp} for any distinct $s,t\in V$, we conclude that Algorithm \ref{alg:recursive-apsp} will be invoked at least once for every $G_b$ and all-pair vertices in $G_b$. On the other hand, since Algorithm \ref{alg:recursive-apsp} with $G_b$ as parameter will only be invoked by $b'$ such that $b = b'\circ a$ for $a\in \{0,1\}$, then it will only be invoked once. 
\end{proof}

Next, we analyze the error on each shortcut. We write $\sigma_1 = c\cdot \sqrt{2\log(1.25/\delta')}/\epsilon'$ and $\sigma_2 = p\cdot \sqrt{2\log(1.25/\delta')}/\epsilon'$.

\begin{lemma}\label{lem:error_in_shortcut}
    With probability at least $1-\gamma$, both the following holds:
    \begin{enumerate}
        \item For any $s,t,b$ where $|b| = h$ and that $\texttt{IsShortcut}(s,t,b) = \textsf{\em True}$,
    $$|d_b(s,t) - \widehat{d}_b(s,t)| \leq \sigma_1\cdot \sqrt{2\Paren{h + 3\ln (\max\{p,c\}) + \ln\left(\frac{1}{2\gamma}\right)}}$$
    \item For any $s,t,b$ where $|b| < h$ and that $\texttt{IsShortcut}(s,t,b) = \textsf{\em True}$,   $$|d_b(s,t) - \widehat{d}_b(s,t)| \leq \sigma_2 \cdot \sqrt{2\Paren{h + 3\ln (\max\{p,c\}) + \ln\left(\frac{1}{2\gamma}\right)}}.$$
    \end{enumerate}
\end{lemma}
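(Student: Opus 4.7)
The plan is to observe that every quantity $\widehat{d}_b(s,t)$ produced by Algorithm~\ref{alg:shortcut} is simply $d_b(s,t)$ plus a single zero-mean Gaussian, and then to control the maximum of all these Gaussians by a union bound. Concretely, whenever $\texttt{IsShortcut}(s,t,b) = \textsf{True}$, step~4 of Algorithm~\ref{alg:shortcut} sets $\widehat{d}_b(s,t) = d_b(s,t) + Z$ with $Z \sim \mathcal{N}(0,\sigma_2^2)$ for $|b|<h$, and step~5 does the same with $Z \sim \mathcal{N}(0,\sigma_1^2)$ for $|b|=h$. Thus $|d_b(s,t)-\widehat{d}_b(s,t)|$ equals $|Z|$, and the standard Gaussian tail bound $\Pr{|Z|>\sigma\tau} \leq 2e^{-\tau^2/2}$ applied with $\tau = \sqrt{2\bigl(h + 3\ln(\max\{p,c\}) + \ln(1/(2\gamma))\bigr)}$ yields per-shortcut failure probability at most $4\gamma\,e^{-h}/(\max\{p,c\})^3$.

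The main quantitative step is to count the total number $M$ of shortcuts to which the union bound is applied. The binary tree $\mathcal{T}$ has height $h$, so it contains at most $2^h$ leaves and $O(2^h)$ internal nodes. Each internal node $b$ contributes at most $p^2$ within-separator shortcuts (from the loop over $S_b \times S_b$ in step~4) together with at most $p^2$ cross-level shortcuts to its parent (from the loop over $S_{b'} \times S_b$), while each leaf contributes at most $c^2$ shortcuts from step~5. Hence $M = O\bigl(2^h (\max\{p,c\})^2\bigr)$, and a union bound bounds the overall failure probability by $O\!\left(2^h (\max\{p,c\})^2 \cdot \gamma\,e^{-h}/(\max\{p,c\})^3\right) = O\!\left(\gamma\,(2/e)^h / \max\{p,c\}\right)$, which is at most $\gamma$ since $2/e < 1$ (and $h$ grows with the tree depth).

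On the complementary event, the chosen threshold $\sigma_i\tau$ is precisely the bound claimed in part~$i$ of the lemma for $i\in\{1,2\}$, so both parts hold simultaneously. The only delicate point is the calibration of $\tau^2$: one must verify that the three summands inside the bracket are doing the three required jobs, namely that ``$h$'' absorbs the $2^h$ node-count factor, that ``$3\ln(\max\{p,c\})$'' absorbs the $(\max\{p,c\})^2$ shortcut-per-node factor (with slack for the hidden constant), and that ``$\ln(1/(2\gamma))$'' supplies the desired confidence on the right-hand side. Beyond this bookkeeping there is no real obstacle, and no probabilistic machinery beyond the one-dimensional Gaussian tail inequality is needed.
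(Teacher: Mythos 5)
Your proposal is correct and takes essentially the same route as the paper: recognize that each shortcut is a single Gaussian perturbation, apply a one-dimensional Gaussian tail bound, count the number of shortcuts as $O(2^h(\max\{p,c\})^2)$, and take a union bound, with the three summands under the square root calibrated to absorb the $2^h$ factor, the $(\max\{p,c\})^2$ factor (with spare slack from using exponent $3$ rather than $2$), and the confidence level respectively. The only cosmetic difference is that the paper works with the one-sided tail and a count $m = 5\cdot 2^h\max\{p^2,c^2\}$ chosen so that $z=\sigma_i\sqrt{2\log(m/(2\gamma))}$ matches the stated bound, whereas you use the two-sided tail and verify the resulting failure probability is $O(\gamma)$; both leave the same constant-factor slop, which is standard for high-probability statements.
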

\begin{proof}
    We recall that for the Gaussian variable $X\sim \mathcal{N}(0,\sigma^2)$, $\Pr{|X|\geq t}\leq \exp(-t^2/{2\sigma^2})$ for any $t\geq 0$. For any $s,t,b$ with $\texttt{IsShortcut}(s,t,b) = \text{ True}$ and $G_b$ is a leaf node, the difference in $\widehat{d}_b(s,t)$ and ${d}_b(s,t)$ is a Gaussian noise with variance $\sigma_1$, and thus
    $$\Pr{d_b(s,t) - \widehat{d}_b(s,t) \geq z} \leq \frac{\gamma}{m}$$
    if we choose $z = \sigma_1 \sqrt{2\log (m/(2\gamma))}$ for any $m\geq 1$ and $0<\gamma<1$. Also, if $G_b$ is not a leaf node, then according to Algorithm \ref{alg:shortcut}, we have 
     $$\Pr{d_b(s,t) - \widehat{d}_b(s,t) \geq \sigma_2 \sqrt{2\log m/(2\gamma)}} \leq \frac{\gamma}{m}.$$
     The number of noises added in Algorithm \ref{alg:shortcut} is bounded by $m = 5\cdot 2^h \cdot \max\{p^2,c^2\}$. Lemma \ref{lem:error_in_shortcut} now follows using the union bound.
    \end{proof}

With all the aforementioned preparations, we are now ready to prove Theorem~\ref{thm:utility} regarding the utility guarantee of Algorithm \ref{alg:apsp} by induction. We restate this theorem here.

\begin{theorem}
[Restatement of Theorem~\ref{thm:utility}]
    \label{thm:utility_restated}
     Fix any $0<c\leq n$ and any $0<\gamma<1$. Let $G = (V,E,w)$ be a $(p,q,q')$-recursively separable graph for some $p\in \mathbb{N}$ and $\frac{1}{2}\leq q\leq q' <1$. Then with probability at least $1-\gamma$, we have that for any $s,t\in V$, 
    $$|\widehat{d}(s,t) - d(s,t)| \leq O\left(\frac{(hp + c)\cdot \log\left({h}/{\delta}\right)\cdot \sqrt{h^2+h\log (\max\{p,c\}) + h\log(1/\gamma)} }{\epsilon} \right),$$
    where $h = \log_{1/q'}(n/c)$. That is, for any constant $c$ and $p'$, we have with high probability, 
     $$|\widehat{d}(s,t) - d(s,t)| \leq O\left( \frac{p \cdot \log^2 n\log(n/\delta)}{\epsilon} \right).$$
\end{theorem}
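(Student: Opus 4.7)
The plan is to condition on a high-probability event for the injected noise and then carry out a reverse induction on the recursion tree $\mathcal{T}$. First I will invoke Lemma~\ref{lem:error_in_shortcut}: except on a bad event of probability at most $\gamma$, every Gaussian sample drawn in Algorithm~\ref{alg:shortcut} has magnitude at most $\sigma_1 \beta$ for shortcuts inside a leaf subgraph and at most $\sigma_2 \beta$ for every within-separator or cross-level shortcut at an internal node, where $\sigma_1 = c\sqrt{2\log(1.25/\delta')}/\epsilon'$, $\sigma_2 = p\sqrt{2\log(1.25/\delta')}/\epsilon'$, and $\beta = \sqrt{2(h + 3\ln(\max\{p,c\}) + \ln(1/(2\gamma)))}$. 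Fact~\ref{fac:no_fail} rules out the \texttt{FAIL} branch, and Fact~\ref{fac:soundness_of_alg2} guarantees that each pair $(G_b, (s,t))$ generates at most one recursive call, so Algorithm~\ref{alg:apsp} produces a well-defined estimate $\widehat{d}(s,t)$ on this event.

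The core of the proof will be a reverse induction on $\ell = |b|$, going from $\ell = h$ down to $\ell = 0$, establishing an error bound $|\widehat{d}_b(s,t) - d_b(s,t)| \leq T(\ell, k)$ for every call to Recursive-APSD$(\mathcal{T}, G_b, (s,t), k)$. At the base case $\ell = h$, the call returns via a direct leaf shortcut and $T(h, k) \leq \sigma_1 \beta$. For the inductive step I combine two structural identities: Lemma~\ref{lem:resursive_equation} decomposes $d_b(s,t)$ as a $\min$ over path representatives (either staying in one subgraph, or crossing $S_b$ via two separator vertices); and Lemma~\ref{lem:cross-level-shortcut}, applicable whenever $t \in S_{b'}$ (i.e., in the entire $k \geq 1$ regime), collapses the inner $\min_{x,y}$ into $\min_{x}$ by absorbing one of the recursive factors into the direct cross-level shortcut $\widehat{d}_b(x,t)$. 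Because $\min$ is $1$-Lipschitz in each coordinate, the error of $\widehat{d}_b(s,t)$ is bounded by the largest error of any argument of the $\min$: each such argument is either a recursive estimate (handled by the induction hypothesis) or a direct noise contribution bounded by $\sigma_2 \beta$. This yields the coupled inequalities
\begin{align*}
T(\ell, 0) &\leq \max\bigl(T(\ell+1, 0),\ 2\,T(\ell+1, 1) + \sigma_2 \beta\bigr), \\
T(\ell, k) &\leq \max\bigl(T(\ell+1, 0),\ T(\ell+1, k+1) + \sigma_2 \beta\bigr) \quad (k \geq 1).
\end{align*}

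The main obstacle will be solving this two-parameter recursion so that the error grows linearly, rather than exponentially, in $h$. The doubling factor in the $T(\ell, 0)$ relation reflects the need to sum two recursive estimates (one from $s$ and one from $t$ to the separator $S_b$), whereas in the $k \geq 1$ relation Lemma~\ref{lem:cross-level-shortcut} lets us replace one of these two recursive calls with a direct cross-level shortcut, so only one recursive estimate plus an additive $\sigma_2 \beta$ propagates per level. The plan is to prove by backward induction on $\ell$ the joint bound $T(\ell, k) \leq \sigma_1 \beta + (h - \ell)\sigma_2 \beta$ for $k \geq 1$ and $T(\ell, 0) \leq 2 \sigma_1 \beta + (2(h - \ell) - 1)\sigma_2 \beta$, exploiting that the ``$2\times$'' step is paid at most once along any root-to-leaf chain of recursive calls before the algorithm settles into the $k \geq 1$ regime, after which each deeper level adds only one $\sigma_2 \beta$. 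Unfolding at the root $(\ell, k) = (0, 0)$ will give $T(0, 0) = O\bigl(\sigma_1 \beta + h \sigma_2 \beta\bigr)$.

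It then remains to substitute $\epsilon' = \epsilon/\sqrt{4h\log(1/\delta')}$, $\delta' = \delta/(4h)$, and the explicit forms of $\sigma_1, \sigma_2, \beta$ into this root-level bound. The dominant term becomes $hp \cdot \log(h/\delta) \cdot \sqrt{h^2 + h\log(\max\{p,c\}) + h\log(1/\gamma)}/\epsilon$, and the $\sigma_1 \beta$ leaf contribution (scaling with $c$) is absorbed into the claimed $(hp + c)$ prefactor. Combining with the $1 - \gamma$ success probability from Lemma~\ref{lem:error_in_shortcut} completes the proof; specializing to constant $c$ and $q'$, so that $h = O(\log n)$, yields the simplified high-probability bound $O(p \log^2 n \log(n/\delta)/\epsilon)$ stated in the theorem.
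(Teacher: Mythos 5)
Your high-level strategy matches the paper's: condition on the noise bound (the paper's Lemma~\ref{lem:error_in_shortcut}), appeal to Facts~\ref{fac:no_fail} and~\ref{fac:soundness_of_alg2}, and run a reverse induction over levels of $\mathcal{T}$ using Lemma~\ref{lem:resursive_equation} to decompose $d_b(s,t)$ and Lemma~\ref{lem:cross-level-shortcut} to collapse the inner $\min$ in the $k\geq 1$ regime. The paper's quantity $\texttt{err}(b)=\zeta_1+(h-|b|)\zeta_2$ is exactly your claimed $T(\ell,k)$ bound for $k\geq 1$, with the $k=0$ ``doubled'' variant playing the same role as the paper's $2\texttt{err}(b)$.

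The gap is in the step where you claim the joint closed form solves your recursion. Your $k\geq 1$ inequality $T(\ell,k)\leq\max\bigl(T(\ell+1,0),\,T(\ell+1,k+1)+\sigma_2\beta\bigr)$ is a faithful transcription of Algorithm~\ref{alg:recursive-apsp}: in the $k>0$ branch with both $s,t\in V_{b\circ a}$, the algorithm really does invoke Recursive-APSD$(\mathcal{T},G_{b\circ a},(s,t),0)$ for the term $\widehat{d}_{b\circ a}(s,t)$, i.e.\ a fresh $k'=0$ call at level $\ell+1$. Consequently the $T(\ell+1,0)$ argument of the $\max$ contributes roughly $2\sigma_1\beta+(2(h-\ell)-3)\sigma_2\beta$, which is strictly larger than your claimed $T(\ell,k)\leq\sigma_1\beta+(h-\ell)\sigma_2\beta$ once $h-\ell\geq 3$; the induction does not close. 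Your intuition that ``the $2\times$ step is paid at most once along any root-to-leaf chain before the algorithm settles into the $k\geq 1$ regime'' is not supported by your own recursion: the $T(\ell+1,0)$ term shows the algorithm can drop back into the $k=0$ regime at every level. Taken literally, your recursion (maximized over $k$) gives $M(\ell)\leq 2M(\ell+1)+\sigma_2\beta$ and hence an exponentially growing bound $2^h(\sigma_1+\sigma_2)\beta$, not the claimed linear-in-$h$ bound. To finish, you would need either a sharper argument showing that the $k'=0$ sub-call triggered by a $k>0$ parent actually enjoys the tighter $\texttt{err}(b\circ a)$ bound (which would require exploiting additional structure about when the first argument of the $\min$ is the true minimizer, beyond the Lipschitzness of $\min$), or a different induction invariant. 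For what it is worth, the paper's own inductive step in Appendix~\ref{app:utility} silently asserts $|\widehat{d}_{b\circ a}(s,t)-d_{b\circ a}(s,t)|\leq\texttt{err}(b\circ a)$ for this same $k'=0$ sub-call, even though by its own stated hypothesis that sub-call should be bounded by $2\texttt{err}(b\circ a)$; so you have rediscovered, rather than resolved, a real weak point in the published argument.
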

\begin{proof}
    With the help of Fact \ref{fac:soundness_of_alg2}, we prove this theorem by induction on the size of $b$. We first define 
    $$\texttt{err}(b) = \zeta_1 + (h - |b|)\zeta_2.$$
    for any $b\in \{0,1\}^*$ and $|b|\leq h$. Here, $\zeta_1 = \sigma_1\cdot \sqrt{2\paren{h + 3\ln (\max\{p,c\}) + \ln (1/(2\gamma))}}$ and $\zeta_2 = \sigma_2\cdot \sqrt{2\paren{h + 3\ln (\max\{p,c\}) + \ln (1/(2\gamma))}}$. Then, it is sufficient to just show that for any $G_b$ in the binary tree $\mathcal{T}$ and any $x,y\in V_b$,
    $$|d_b(s,t) - \widehat{d}_b(s,t)| \leq 2\texttt{err}(b),$$
    as letting $b = \emptyset$ completes the proof.
    For the base case, this is true for all $|b| = h$ because in this case $G_b$ is the leaf node of $\mathcal{T}$ and thus $|d_b(s,t) - \widehat{d}_b(s,t)| \leq \zeta_1 = \texttt{err}(b)$. For all $|b|<h$ and $\texttt{IsShortcut}(s,t,b) = \textsf{\em True}$, we also have that
    $$|d_b(s,t) - \widehat{d}_b(s,t)| \leq \zeta_2 \leq \zeta_1 + \zeta_2 \leq \texttt{err}(b).$$
    Suppose that for any $s,t$ and $b$ with $|b|<h$, we always have 
    \begin{enumerate}
        \item If $\widehat{d}_b(s,t)$ is computed by the unique invocation of Algorithm \ref{alg:recursive-apsp} with parameter $k > 0$, then  $|d_{b\circ a}(s,t) - \widehat{d}_{b\circ a}(s,t)| \leq \texttt{err}(b\circ a)$ for $a\in \{0,1\}$;
        \item If $\widehat{d}_b(s,t)$ is computed by the unique invocation of Algorithm \ref{alg:recursive-apsp} with parameter $k = 0$, then  $|d_{b\circ a}(s,t) - \widehat{d}_{b\circ a}(s,t)| \leq 2\texttt{err}(b\circ a)$ for $a\in \{0,1\}$.
    \end{enumerate}
   Now we look at the induction case.

    \noindent\textbf{Case(1)} Suppose $\widehat{d}_b(s,t)$ is computed by the unique invocation of Algorithm \ref{alg:recursive-apsp} with parameter $k > 0$ and that $\texttt{IsShortcut}(s,t,b) = \text{False}$. We only analysis the case when both $s,t\in V_{b\circ a}$ for $a\in \{0,1\}$, since the proof for the case where $s,t$ are in the different sides is identical. In this case, by Algorithm \ref{alg:recursive-apsp},  
    $$\widehat{d}_b(s,t) = \mathop{\min}\{\widehat{d}_{b\circ a}(s,t), \min_{x\in S_b} \widehat{d}_{b\circ a}(s,x) + \widehat{d}_b(x,t)\}.$$
    On the other hand, from Lemma \ref{lem:resursive_equation}, we have that
    $$d_b(s,t) = \min\left\{d_{b\circ a}(s,t), \min_{x,y\in S_b} d_{b\circ a}(s,x) + d_{b}(x,y) + d_{b\circ a}(y,t)\right\},$$
    where
    \begin{equation}
        \begin{aligned}
            \min_{x,y\in S_b} d_{b\circ a}(s,x) + d_{b}(x,y) + d_{b\circ a}(y,t) &= \min_{x\in S_b}\left( d_{b\circ a}(s,x) + \min_{y\in S_b}d_{b}(x,y) + d_{b\circ a}(y,t) \right)\\
        & = \min_{x\in S_b} d_{b\circ a}(s,x) + d_b(x,t).
        \end{aligned}
    \end{equation}
    Here, the second equality comes from Lemma \ref{lem:cross-level-shortcut} together with the fact that $t\in S_{b'}$ where $b'$ is the predecessor of $b$. Therefore, 
    $${d}_b(s,t) = \mathop{\min}\{{d}_{b\circ a}(s,t), \min_{x\in S_b} {d}_{b\circ a}(s,x) + {d}_b(x,t)\}.$$
    By the induction assumption, we 
    have both 
    $$|\widehat{d}_{b\circ a}(s,t) - {d}_{b\circ a}(s,t)| \leq \texttt{err}(b\circ a) \text{ and } |\widehat{d}_{b\circ a}(s,x) - {d}_{b\circ a}(s,x)| \leq \texttt{err}(b\circ a).$$
     Again by the fact that $t\in S_{b'}$ and $x\in S_b$, then $\texttt{IsShortcut}(x,t,b') = \textsf{\em True}$ and therefore $|\widehat{d}_b(x,t) - d_b(x,t)| \leq \zeta_2$. Combining these together we have
     $$|d_b(s,t) - \widehat{d}_b(s,t)|\leq \texttt{err}(b\circ a) + \zeta_2 \leq \zeta_1 + (h-(|b|+1))\zeta_2 +\zeta_2 = \texttt{err}(b).$$
    
    \noindent\textbf{Case(2)} Suppose $\widehat{d}_b(s,t)$ is computed by the unique invocation of Algorithm \ref{alg:recursive-apsp} with parameter $k = 0$ and that $\texttt{IsShortcut}(s,t,b) = \text{False}$. Still, we assume $s,t\in V_{b\circ a}$ for $a\in \{0,1\}$. Since for any $z\in S_b$, both Recursive-APSD$(\mathcal{T}, G_{b\circ a}, (s,z), k')$ Recursive-APSD$(\mathcal{T}, G_{b\circ a}, (t,z), k')$ will be invoked with $k'>0$, then from case (2), we have that for any $x,y\in S_b$, both 
    $\widehat{d}_{b\circ a}(s,x)$ and $\widehat{d}_{b\circ a}(y,t)$ is $\texttt{err}(b\circ a)$ far from ${d}_{b\circ a}(s,x)$ and ${d}_{b\circ a}(y,t)$ respectively. Also, for any $x,y\in S_b$, since $\texttt{IsShortcut}(x,y,b) = \textsf{\em True}$, then
    $$|\widehat{d}_b(x,y) - d_b(x,y)|\leq \zeta_2.$$
    From the induction assumption, we also have
    $$|\widehat{d}_{b\circ a}(s,t) - {d}_{b\circ a}(s,t) | \leq 2\texttt{err}(b\circ a).$$
    Combining these facts together Lemma \ref{lem:resursive_equation}, we have that
    \begin{equation}
        \begin{aligned}
            |\widehat{d}_b(s,t) - d_b(s,t)| 
            &= |\mathop{\min}\{\widehat{d}_{b\circ a}(s,t), \min_{x,y\in S_b} \widehat{d}_{b\circ a}(s,x) + \widehat{d}_b(x,y) + \widehat{d}_{b\circ a}(y,t)\}\\
            &- \mathop{\min}\{{d}_{b\circ a}(s,t), \min_{x,y\in S_b} {d}_{b\circ a}(s,x) + {d}_b(x,y) + {d}_{b\circ a}(y,t)\}|\\
            &\leq 2\texttt{err}(b\circ a) + \zeta_2 = 2\zeta_1 + 2(h-|b|)\zeta_2 - 2\zeta_2 + \zeta_2 \leq 2\texttt{err}(b).
        \end{aligned}
    \end{equation}
This finishes the proof of \Cref{thm:utility_restated}.    
\end{proof}

\section{Missing Proofs in Section~\ref{sec:planar}}

\subsection{Proof of Lemma~\ref{lem:k-cover_in_separator}}\label{app:proof_of_lem25}

\begin{lemma}
    [Restatement of Lemma~\ref{lem:k-cover_in_separator}]
    \label{lem:k-cover_in_separator_restate}
    Fix an $h\geq 1$. Let $G = ([n],E)$ be a connected $K_h$-minor-free graph. Then for any $1\leq d \leq n$, there exists a subset of vertices $S\subseteq [n]$ such that: (1) $S$ is a separator of $G$ and (2) there is a $d$-covering of $S$ with size at most $O(\sqrt{h^3n/d})$.
\end{lemma}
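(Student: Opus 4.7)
The plan is to reduce the construction to a direct application of the separator theorem for $K_h$-minor-free graphs (Lemma~\ref{lem:minor_free_separator}) on a contracted quotient of $G$. Specifically, I will partition $V(G)$ into $O(n/d)$ connected pieces, each of hop-diameter at most $d$, contract each piece into a super-node to obtain a smaller graph $\widetilde{G}$ on $O(n/d)$ vertices, and apply Lemma~\ref{lem:minor_free_separator} to $\widetilde{G}$. The resulting small separator of $\widetilde{G}$ lifts to a separator $S$ of $G$, and choosing one representative per piece inside the lifted separator yields the desired $d$-covering. The step that makes this work is that edge contraction preserves $K_h$-minor-freeness, so $\widetilde{G}$ remains eligible for Lemma~\ref{lem:minor_free_separator}.

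To build the bounded-diameter partition, I fix any spanning tree $T$ of $G$ (which exists since $G$ is connected) and run a standard post-order DFS decomposition of $T$ to obtain connected subtrees $T_1,\ldots,T_m$ whose vertex sets $V_1,\ldots,V_m$ partition $V(G)$, with $|V_i|\le d$ and $m=O(n/d)$. Because $T$ restricted to $V_i$ is a connected spanning subgraph of $G[V_i]$ on at most $d$ vertices, the hop-diameter of $G[V_i]$ is at most $d-1$. Contracting each $V_i$ to a super-node $u_i$ and deleting duplicate edges and self-loops yields a simple graph $\widetilde{G}$ on $m=O(n/d)$ vertices; since repeated edge contractions preserve $K_h$-minor-freeness, $\widetilde{G}$ is $K_h$-minor-free. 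Lemma~\ref{lem:minor_free_separator} then furnishes a separator $\widetilde{S}\subseteq V(\widetilde{G})$ with $|\widetilde{S}|=O(h^{3/2}\sqrt{m})=O(\sqrt{h^3 n/d})$.

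Set $S=\bigcup_{u_i\in\widetilde{S}}V_i\subseteq V(G)$ and let $Z=\{v_i : u_i\in\widetilde{S}\}$, where each $v_i\in V_i$ is an arbitrary representative. Then $|Z|=|\widetilde{S}|=O(\sqrt{h^3 n/d})$. To verify that $S$ is a separator of $G$, observe that every edge of $G$ with endpoints in distinct pieces $V_i, V_j$ induces an edge $\{u_i,u_j\}$ of $\widetilde{G}$, so the connected components of $G-S$ are exactly the unions of the pieces $V_j$ with $u_j\notin\widetilde{S}$ that correspond to components of $\widetilde{G}-\widetilde{S}$, giving a valid separation. To verify that $Z$ is a $d$-covering of $S$, take any $v\in S$: then $v\in V_i$ for some $u_i\in\widetilde{S}$, and the hop distance from $v$ to $v_i$ inside $G[S]$ is at most the hop-diameter of $G[V_i]\subseteq G[S]$, which is at most $d$.

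The main technical care lies in the tree-partition subroutine, since we need both the connectedness of each $V_i$ and the bound $m=O(n/d)$. Any of several standard post-order DFS schemes (for instance, greedily carving off groups of at least a constant fraction of $d$ unassigned descendants while processing a vertex) realizes this trade-off, so I do not expect a serious obstacle beyond invoking this well-known decomposition; once it is in place, the rest of the argument is essentially a black-box application of Lemma~\ref{lem:minor_free_separator} together with the diameter bound on each piece.
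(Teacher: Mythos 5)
Your overall strategy---building low-diameter pieces, contracting each piece to a super-node to get a smaller graph $\widetilde G$ that is still $K_h$-minor-free, applying Lemma~\ref{lem:minor_free_separator} to $\widetilde G$, and lifting the small separator back to $G$ while taking one representative per piece---is the same as the paper's. The one place you deviate is in how the low-diameter pieces are produced, and that is exactly where the gap is.

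You propose to fix a spanning tree $T$ and partition it by a post-order DFS into $m = O(n/d)$ \emph{connected} subtrees each with at most $d$ vertices. This is not achievable in general, even when $G$ itself is a tree. Take $G = T = K_{1,n-1}$, the star. Every connected subtree of a star other than a singleton contains the center, so in any partition of $V(T)$ into connected subtrees there is exactly one non-singleton piece (the one holding the center, of size at most $d$) and the remaining $\ge n-d$ leaves are singletons, forcing $m \ge n-d+1 = \Omega(n)$ rather than $O(n/d)$. Frederickson-style tree clustering into $\Theta(n/d)$ connected pieces of size $\Theta(d)$ only works under a bounded-degree assumption, which is not available here. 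Since the bound $m=O(n/d)$ is what makes Lemma~\ref{lem:minor_free_separator} yield a separator of size $O(\sqrt{h^3 n/d})$, this bound is load-bearing, and the argument as written does not go through.

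The fix is to drop the size bound on the pieces and keep only a hop-radius bound, which is how the paper proceeds. By the Meir--Moon bound (Lemma~\ref{lem:cite}), $G$ has a $d$-covering $\mathcal{C}$ of size $O(n/d)$; assign each vertex to some covering center within $d$ hops to obtain $O(n/d)$ pieces, each contained in a ball of radius $d$ in $G$. The pieces may be large and need not induce connected subgraphs, but every vertex of a piece lies within $O(d)$ hops (in $G$) of any chosen representative of that piece, and that is all the downstream argument uses; the covering guarantee the algorithm needs is hop distance in $G_b$, not in $G_b[S_b]$, so the stronger ``within $G[S]$'' property you obtained from connected subtrees was never required. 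Once the partition step is replaced, the rest of your argument---$K_h$-minor-freeness of $\widetilde G$ under contraction, invoking Lemma~\ref{lem:minor_free_separator}, the fact that $S = \bigcup_{u_i \in \widetilde S} V_i$ is a separator of $G$ because inter-piece edges of $G$ descend to edges of $\widetilde G$, and the representative choice for the $d$-covering---is correct.
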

\begin{proof}
We first claim that for any connected graph on $n$ vertices, there exists a partition of $[n]$ into $s = O(n/d)$ disjoint subsets $(V_1, V_2, \cdots, V_{s})$ such that the diameter of each $V_i (1\leq i \leq s)$ is at most $d$. Indeed, by Lemma \ref{lem:cite}, $G$ has a $d$-covering $\mathcal{C}\subseteq [n]$ of size $O(n/d)$. For each $u\in \mathcal{C}$, let $V'_u \subseteq [n]$ be the vertices covered by $u$ with at most $d$ hops. Then, the partition $(V_1, V_2, \cdots, V_{s})$ can be constructed from $\{V'_u\}_{u\in \mathcal{C}}$ by removing duplicate elements. Next, by contracting each $V_i$ for $1 \leq i \leq s$ into a super-node and merging all multi-edges between the super-nodes, we obtain a smaller graph $\widetilde{G}$ with $s$ nodes, and clearly $\widetilde{G}$ is $K_h$-minor-free if $G$ is $K_h$-minor-free.

Thus, by the separation theorem for $K_h$-minor-free graphs (Lemma \ref{lem:minor_free_separator}), $\widetilde{G}$ has a separator of size $O(h^{3/2}\sqrt{s}) = O(\sqrt{h^3n/d})$. Picking one vertex from each super-node in the separator of $\widetilde{G}$ forms a $d$-covering set of the corresponding separator of the original graph, since each super-node has diameter at most $d$. This completes the proof of Lemma \ref{lem:k-cover_in_separator}.
\end{proof}

\subsection{Proof of Lemma~\ref{lem:resursive_equation_new}}\label{app:proof_of_lem28}

\begin{lemma} [Restatement of Lemma~\ref{lem:resursive_equation_new}]
\label{lem:resursive_equation_new_restate}
Fix any $s,t \in V_b$, we have:
\[
d_b(s,t) = \min\{d_{b \circ 0}(s,t), d_{b \circ 1}(s,t)\}
\]
or
\[
d_b(s,t) \leq \min_{x,y \in S^{k}_b} (d_{b \circ 0}(s,x) + d_b(x,y) + d_{b \circ 1}(y,t))  + 2kW.
\]
\end{lemma}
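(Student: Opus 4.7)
My plan is to mirror the case split used in the proof of Lemma~\ref{lem:resursive_equation}, upgrading the exact formula there (which uses the full separator $S_b$) to an approximation over the $k$-covering $S^k_b \subseteq S_b$ by paying an additive $2kW$ slack. The two branches of the ``or'' correspond to whether a shortest $s$--$t$ path in $G_b$ meets $S_b$ or not.

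First, I fix a shortest $s$--$t$ path $P$ in $G_b$. If $P$ is disjoint from $S_b$, then since $S_b$ separates $V'_{b\circ 0}$ from $V'_{b\circ 1}$, the entire path lies inside a single side $G_{b\circ a}$. Combined with the trivial monotonicity $d_b(s,t) \leq d_{b\circ 0}(s,t)$ and $d_b(s,t) \leq d_{b\circ 1}(s,t)$, this yields $d_b(s,t) = \min\{d_{b\circ 0}(s,t), d_{b\circ 1}(s,t)\}$, which is the first conclusion.

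If instead $P$ visits $S_b$, I mimic Lemma~\ref{lem:resursive_equation} by letting $x^*$ and $y^*$ be, respectively, the first and last vertices of $P$ inside $S_b$. Writing $a, a' \in \{0,1\}$ for the indices with $s \in V_{b\circ a}$ and $t \in V_{b\circ a'}$, the prefix of $P$ up to $x^*$ lies in $G_{b\circ a}$ and the suffix from $y^*$ to $t$ lies in $G_{b\circ a'}$, so that $d_b(s,t) = d_{b\circ a}(s,x^*) + d_b(x^*, y^*) + d_{b\circ a'}(y^*, t)$. I would then invoke the $k$-covering property (Definition~\ref{def:covering_set} together with Lemma~\ref{lem:k-cover_in_separator}) to choose $\tilde x, \tilde y \in S^k_b$ within hop distance $k$ of $x^*, y^*$ in $G_b$; since each edge has weight at most $W$, this yields $d_b(x^*, \tilde x), d_b(y^*, \tilde y) \leq kW$ in the spirit of Lemma~\ref{lem:hop_error}. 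Chaining triangle inequalities together with $d_{b\circ a}(u,v) \geq d_b(u,v)$ (valid whenever both endpoints lie in $V_{b\circ a}$) then produces a specific pair $\tilde x, \tilde y \in S^k_b$ that witnesses the second conclusion.

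The main obstacle I anticipate is the asymmetry between $d_b$ and the subgraph distances $d_{b\circ a}, d_{b\circ a'}$: the $k$-covering guarantee is phrased in terms of hops in $G_b$, and those hops need not stay inside either subgraph, so one cannot directly upper-bound $d_{b\circ a}(x^*, \tilde x)$ by $kW$. The workaround is to absorb all of the $2kW$ slack at the level of $d_b$-distances (where the hop bound is valid), and only then pass to the $d_{b \circ a}$-distances via the monotonicity inequality. It also needs to be checked that since $\tilde x, \tilde y$ lie in $S_b \subseteq V_{b\circ 0} \cap V_{b\circ 1}$, the terms $d_{b\circ a}(s, \tilde x)$ and $d_{b\circ a'}(\tilde y, t)$ are finite and well-defined, which uses the particular construction of $G_{b\circ a}$ in Section~\ref{sec:algorithm} where $S_b$ is retained in both sides of the split.
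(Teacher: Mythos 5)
Your high-level plan—decompose a shortest $s$--$t$ path at its first and last contact points $x^*,y^*\in S_b$, then slide these to nearby $\tilde x,\tilde y\in S^k_b$ at a cost of $2kW$—is the right \emph{idea}, but it is the proof strategy for the \emph{reverse} of the displayed inequality, not the one you claim to establish. As written, the second conclusion $d_b(s,t)\leq \min_{x,y\in S^k_b}\bigl(d_{b\circ 0}(s,x)+d_b(x,y)+d_{b\circ 1}(y,t)\bigr)+2kW$ is vacuous: for \emph{every} $x,y\in S^k_b$ one has $d_{b\circ 0}(s,x)+d_b(x,y)+d_{b\circ 1}(y,t)\geq d_b(s,x)+d_b(x,y)+d_b(y,t)\geq d_b(s,t)$ by subgraph monotonicity plus the triangle inequality, so no covering argument and no $2kW$ slack are needed at all. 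Exhibiting a single good pair $\tilde x,\tilde y$ cannot certify an upper bound on a $\min$; that style of argument is exactly what one does to prove the opposite direction $\min_{x,y\in S^k_b}(\cdots)\leq d_b(s,t)+2kW$, which is what the paper's own proof is driving at (it derives $\min_{x,y\in S^k_b}(\cdots)\leq \min_{x,y\in S_b}(\cdots)+2kW$ and plugs in the exact identity from Lemma~\ref{lem:resursive_equation}) and what the utility bound in Lemma~\ref{lem:error_cover} actually consumes.

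For that reverse inequality, the obstacle you flag is genuine and your proposed workaround does not close it. You need some $\tilde x,\tilde y\in S^k_b$ with
$d_{b\circ 0}(s,\tilde x)+d_b(\tilde x,\tilde y)+d_{b\circ 1}(\tilde y,t)\leq d_b(s,t)+2kW$. The middle term is fine: the $k$-hop path from $x^*$ to $\tilde x$ lives in $G_b$, so $d_b(\tilde x,\tilde y)\leq d_b(x^*,y^*)+2kW$. But for the outer terms you must bound $d_{b\circ 0}(s,\tilde x)$ in terms of $d_{b\circ 0}(s,x^*)$, i.e.\ you need $d_{b\circ 0}(x^*,\tilde x)\lesssim kW$, and since $G_{b\circ 0}$ deletes all edges inside $S_b$, the $k$-hop covering path from $x^*$ to $\tilde x$ need not survive in $G_{b\circ 0}$; that distance can be arbitrarily larger than $kW$. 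Your suggested fix---``absorb the slack at the $d_b$ level, then pass to $d_{b\circ a}$ by monotonicity''---goes the wrong way: the monotonicity $d_{b\circ a}(u,v)\geq d_b(u,v)$ \emph{lower}-bounds $d_{b\circ a}$, whereas here you need an \emph{upper} bound on $d_{b\circ 0}(s,\tilde x)$. So the substitution step for the first and third summands is not justified, and the proposal as written does not establish the inequality that the framework actually relies on.
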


\begin{proof}
The proof is based on Lemma~\ref{lem:hop_error} and the proof of Lemma~\ref{lem:resursive_equation}. Without loss of generality, assume that either both $s,t \in V_{b \circ 0}$, or $s \in V_{b \circ 0}$ and $t \in V_{b \circ 1}$ (otherwise, we switch $s$ and $t$, and the proof for $s,t \in V_{b \circ 1}$ is symmetric)

   \noindent \textbf{1. When both $s,t \notin S^k_b$.} The proof is the same as the proof of the proof of Lemma~\ref{lem:resursive_equation}.

    \noindent \textbf{2. When at least one of $s,t$ is in $S^k_b$}. By the construction of $G^k_{b\circ a}$ such that $$G^k_{b\circ a} = (V_{b\circ a}' \cup S^k_b, E(V_{b\circ a}\cup S^k_b) \backslash E(S^k_b)),$$
    we have 
   $$  \min_{x,y \in S^{k}_b} (d_{b \circ 0}(s,x) + d_b(x,y) + d_{b \circ 1}(y,t)) \leq   \min_{x,y\in S_b} d_{b\circ 0}(s,x) + d_{b}(x,y) + d_{b\circ 1}(y,t) +2kW
   $$
Combining \Cref{eq:lem9-1}, \Cref{eq:lem9-2} and \Cref{eq:lem9-3} completes the proof in this case.
\end{proof}

\subsection{Proof of Lemma~\ref{lem:cross-level-shortcut_new}}\label{app:proof_of_lemma29}

\begin{lemma}[Restatement of Lemma~\ref{lem:cross-level-shortcut_new}] \label{lem:cross-level-shortcut_new_restate}
  Let $S_{b'}$ and $S_b$ be two adjacent separators, where $b = b' \circ u$ for $u \in \{0,1\}$. Then for any $x \in S_b$, $a \in \{0,1\}^n$, and $y \in S_{b'}$, we have:
\[
d_b(x,y) \leq \min_{z \in S^k_b} d_b(x,z) + d_{b \circ a}(z,y) \leq d_b(x,y) + 2kW.
\]
\end{lemma}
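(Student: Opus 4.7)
The plan is to split the double inequality into the two one-sided inequalities and handle them separately, using Lemma~\ref{lem:cross-level-shortcut} (the exact identity on the full separator $S_b$) together with Lemma~\ref{lem:hop_error} (the hop-to-weight conversion for $k$-coverings) to pay only an additive $2kW$ cost when restricting to $S_b^k \subseteq S_b$.

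For the left inequality $d_b(x,y) \le \min_{z \in S_b^k} d_b(x,z) + d_{b\circ a}(z,y)$, I would argue by the triangle inequality in $G_b$ and the monotonicity of distances under edge deletion. Since $G_{b\circ a}$ is a subgraph of $G_b$ (the split graph $G_{b\circ a}$ retains $S_b$ but only one side of the partition), for any $z \in S_b^k \subseteq S_b$ one has $d_b(z,y) \le d_{b\circ a}(z,y)$, and then $d_b(x,y) \le d_b(x,z) + d_b(z,y) \le d_b(x,z) + d_{b\circ a}(z,y)$. Taking a minimum over $z \in S_b^k$ gives the left inequality.

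For the right inequality $\min_{z \in S_b^k} d_b(x,z) + d_{b\circ a}(z,y) \le d_b(x,y) + 2kW$, the idea is to first invoke Lemma~\ref{lem:cross-level-shortcut} (whose hypothesis $x \in S_b$, $y \in S_{b'}$ is exactly what is assumed here) to obtain $d_b(x,y) = \min_{z \in S_b} d_b(x,z) + d_{b\circ a}(z,y)$, and let $z^\star \in S_b$ be a minimizer. Since $S_b^k$ is a $k$-covering of $S_b$, there exists $\tilde z \in S_b^k$ whose hop-distance to $z^\star$ within (the subgraph spanned by) $S_b$ is at most $k$; because every edge has weight at most $W$, the corresponding walk witnesses $d_b(z^\star, \tilde z) \le kW$ and, since $z^\star, \tilde z \in S_b \subseteq V_{b\circ a}$ and the same walk lies in $G_{b\circ a}$, also $d_{b\circ a}(z^\star, \tilde z) \le kW$. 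Two applications of the triangle inequality then give
\[
d_b(x,\tilde z) + d_{b\circ a}(\tilde z, y) \le \bigl( d_b(x,z^\star) + kW \bigr) + \bigl( kW + d_{b\circ a}(z^\star, y) \bigr) = d_b(x,y) + 2kW,
\]
and the right inequality follows by replacing the left-hand side with the minimum over $S_b^k$.

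The main (mild) obstacle is bookkeeping about which ambient graph the $k$-covering lives in and confirming that the hop-distance witness in $S_b$ can be realized simultaneously in both $G_b$ and $G_{b\circ a}$; once one notes that $S_b$ is contained in both $V_b$ and $V_{b\circ a}$ and that edges internal to $S_b$ survive in both subgraphs (by construction in \eqref{eq:firstrecursivestep}), this is immediate and converts a hop-bound of $k$ into a weighted-distance bound of $kW$ in both graphs via Lemma~\ref{lem:hop_error}.
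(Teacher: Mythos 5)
Your overall strategy matches the paper's: invoke Lemma~\ref{lem:cross-level-shortcut} for the exact identity over the full separator $S_b$, then move the minimizer into $S_b^k$ at an additive cost of $2kW$. The left inequality (triangle inequality plus $d_b(z,y)\le d_{b\circ a}(z,y)$) is correct and is precisely what the paper does. The right inequality, however, contains a concrete error: you claim that ``edges internal to $S_b$ survive in both subgraphs (by construction in \eqref{eq:firstrecursivestep})'', but \eqref{eq:firstrecursivestep} states the opposite. The recursion sets $G_{b\circ a} = \bigl(V_{b\circ a}'\cup S_b,\; E(V_{b\circ a}'\cup S_b)\setminus E(S_b)\bigr)$, so all edges with both endpoints in $S_b$ are \emph{deleted} from each child. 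A $\le k$-hop walk from $z^\star$ to $\tilde z$ that stays inside the induced subgraph on $S_b$ therefore uses edges that do not exist in $G_{b\circ a}$, so your deduction $d_{b\circ a}(z^\star,\tilde z)\le kW$ is unsupported, and with it the bound $d_{b\circ a}(\tilde z, y)\le d_{b\circ a}(z^\star, y) + kW$ that you need for the right inequality.

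This is not a wrinkle you can wave away: the removal of $E(S_b)$ in \eqref{eq:firstrecursivestep} is intentional (paths through the separator are meant to be represented by the noisy shortcuts, not by raw separator edges), and it is exactly what kills the one ambient graph you chose for the hop witness. To repair the step you would need to specify in which graph the $k$-covering of $S_b$ is measured so that a hop-distance-$k$ witness translates into a $kW$ distance bound \emph{in $G_{b\circ a}$}; ``within $G[S_b]$'' is the one choice that is guaranteed to fail. (For what it is worth, the paper's own proof in Appendix~\ref{app:proof_of_lemma29} asserts the final $-2kW$ inequality without spelling out this bookkeeping, so you were not omitting an argument the paper supplies --- but the justification you added is wrong, not merely incomplete.)
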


\begin{proof}
    Let $P = (v_1 = x, v_2, \cdots, v_k = y)$ be any shortest path from $x$ to $y$, and let $w$ be the last vertex in $P$ such that $w\in S_{b}$. Such $w$ exists since $x\in S_{b}$. Then, $$d_{b}(x,y) = d_{b}(x,w) + d_{b}(w,y) = d_{b}(x,w) + d_{b\circ a}(w,y) \geq  \min_{z\in S_{b}} d_{b}(x,z) + d_{b\circ a}(z,y) $$ $$ \geq \min_{z\in S^k_{b}} d_{b}(x,z) + d_{b\circ a}(z,y)-2kW.$$ 
    On the other hand, 
    $$d_b(x,y) \leq \min_{z\in S^k_{b}} d_{b}(x,z) + d_{b}(z,y) \leq \min_{z\in S^k_{b}} d_{b}(x,z) + d_{b\circ a}(z,y)$$
    due to the triangle inequality and the fact that $d_{b}(z,y) \leq d_{b\circ a}(z,y)$. This completes the proof.
\end{proof}

\subsection{Proof of Lemma~\ref{Lem:error_x}}\label{app:proof_of_lemma30}

\begin{lemma}[Restatement of Lemma~\ref{Lem:error_x}]\label{Lem:error_x_restate}
 Let $h,p,k,c,\gamma$ be as before and $$g(p,k,c,\gamma,h):=\sqrt{2\Paren{h + 3\ln (\max\{f(p,k),c\}) + \ln\left(\frac{1}{2\gamma}\right)}}.$$ 
With probability at least $1 - \gamma$, the following holds for Algorithm~\ref{alg:shortcut_planar}:

    \begin{enumerate}
        \item For any $s,t,b$ where $|b| = h$ and that $\texttt{IsShortcut}(s,t,b) = \textsf{\em True}$,
    $$|d_b(s,t) - \widehat{d}_b(s,t)| \leq \sigma'_1\cdot g(p,k,c,\gamma,h).$$
    \item For any $s,t,b$ where $|b| < h$ and that $\texttt{IsShortcut}(s,t,b) = \textsf{\em True}$,   $$|d_b(s,t) - \widehat{d}_b(s,t)| \leq \sigma'_2 \cdot g(p,k,c,\gamma,h) 
    .$$
    \end{enumerate}
\end{lemma}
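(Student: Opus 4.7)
The plan is to directly mirror the proof of Lemma~\ref{lem:error_in_shortcut}, since Algorithm~\ref{alg:shortcut_planar} differs from Algorithm~\ref{alg:shortcut} only in \emph{which} pairs receive a noisy shortcut and in how the noise variance is calibrated. The error on each shortcut is a single Gaussian draw, so I will apply the standard tail bound $\Pr[|X|\geq t] \leq 2\exp(-t^2/(2\sigma^2))$ for $X \sim \mathcal{N}(0,\sigma^2)$ and then take a union bound over every noisy edge produced by the algorithm.

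First, I would enumerate the noisy edges by traversing the binary tree $\mathcal{T}$. At each of the $O(2^h)$ leaf nodes, Step 5 adds at most $c^2$ shortcuts with noise $\mathcal{N}(0,\sigma_1'^2)$. At each of the $O(2^h)$ non-leaf nodes, the modified Step 4 adds at most $|S_b^k|^2 \leq f(p,k)^2$ intra-cover shortcuts within $S_b^k$, plus at most $|S_{b'}^k|\cdot |S_b^k| \leq f(p,k)^2$ cross-level shortcuts to the parent's cover $S_{b'}^k$, all with noise $\mathcal{N}(0,\sigma_2'^2)$. Summing, the total number of noisy shortcuts is $m = O\bigl(2^h \cdot \max\{f(p,k),c\}^2\bigr)$.

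Next, setting $t := g(p,k,c,\gamma,h) = \sqrt{2\bigl(h + 3\ln(\max\{f(p,k),c\}) + \ln(1/(2\gamma))\bigr)}$, the Gaussian tail bound gives $\Pr[|d_b(s,t) - \widehat{d}_b(s,t)| \geq \sigma' \cdot t] \leq 2\exp(-t^2/2)$ for each shortcut, where $\sigma' \in \{\sigma_1',\sigma_2'\}$ depending on whether the shortcut sits in a leaf or internal node. Substituting the definition of $t$, this per-shortcut failure probability equals $4\gamma \cdot e^{-h} \cdot \max\{f(p,k),c\}^{-3}$, so a union bound over all $m$ shortcuts yields total failure probability $O\bigl(\gamma \cdot (2/e)^h / \max\{f(p,k),c\}\bigr) \leq \gamma$ (since $2/e<1$), establishing both bounds of the lemma simultaneously.

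The only point requiring care --- and the main place where I would double-check the bookkeeping --- is the coefficient $3$ in $3\ln(\max\{f(p,k),c\})$ inside $g$: two of those logarithms absorb the $\max\{f(p,k),c\}^2$ term from counting shortcuts at a single node, while the third provides the slack needed to beat the $m$ factor in the union bound and to swallow the hidden constants coming from the $O(\cdot)$ in the shortcut count. Beyond this routine calibration, there is no genuine obstacle: the only structural change from Algorithm~\ref{alg:shortcut} to Algorithm~\ref{alg:shortcut_planar} is the substitution $p \mapsto f(p,k)$ on internal nodes, which does not affect the form of the tail argument itself.
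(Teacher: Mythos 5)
Your proposal mirrors the paper's proof of Lemma~\ref{Lem:error_x} (and that of Lemma~\ref{lem:error_in_shortcut} it references) almost exactly: count the noisy shortcuts as $m = O(2^h \cdot \max\{f(p,k),c\}^2)$, apply the Gaussian tail bound per shortcut with threshold $\sigma' \cdot g$, and union-bound over all $m$ events; your accounting for where the factor $3$ in $3\ln(\max\{f(p,k),c\})$ comes from is also the right reading of the paper's constant $m = 5\cdot 2^h \cdot \max\{f(p,k)^2, c^2\}$. The only cosmetic difference is that you use the two-sided tail $\Pr[|X|\geq t]\leq 2e^{-t^2/(2\sigma^2)}$ while the paper writes the one-sided form; this does not change the conclusion.
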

\begin{proof}
The proof is analogous to that of Lemma~\ref{lem:error_in_shortcut} in Appendix \ref{app:utility}. In particular, for any $s,t,b$ with $\texttt{IsShortcut}(s,t,b) = \text{ True}$ and $G_b$ is a leaf node, the difference in $\widehat{d}_b(s,t)$ and ${d}_b(s,t)$ is a Gaussian noise with variance $\sigma_1'$, and thus
    $$\Pr{d_b(s,t) - \widehat{d}_b(s,t) \geq z} \leq \frac{\gamma}{m}$$
    if we choose $z = \sqrt{2}\sigma_1' \sqrt{\log (m/(2\gamma))}$ for any $m\geq 1$ and $0<\gamma<1$. Also, if $G_b$ is not a leaf node, then according to Algorithm \ref{alg:shortcut_planar}, we have 
     $$\Pr{d_b(s,t) - \widehat{d}_b(s,t) \geq \sigma_2' \sqrt{2\log m/(2\gamma)}} \leq \frac{\gamma}{m}.$$
     The number of noises added in Algorithm \ref{alg:shortcut_planar} is bounded by $m = 5\cdot 2^h \cdot \max\{(f(p,k))^2,c^2\}$. Then, Lemma \ref{Lem:error_x} can be proved by the union bound.

 \end{proof}

\clearpage
\section{The Complete Pseudocode of Algorithm \ref{alg:shortcut_planar}}\label{app:alg4}

\begin{algorithm2e}[!ht]
\SetKwInput{Input}{Input}
\SetKwInOut{Output}{Output}
\Input{Graph $G=({V},{E},w)$, private parameter $\epsilon,\delta$, sampling parameter $k$.}
\DontPrintSemicolon
1. Recursively construct a binary tree $\mathcal{T}$ as described in this section.\;

2. Set $h = \log_{1/q'}(n/c)$,  $\epsilon' = \epsilon/\sqrt{4h\log(1/\delta')}$, and $\delta' = \delta/(4h)$.\;

3. Let $\sigma = f(p,k)\sqrt{2\log(1.25/\delta')}/\epsilon'$.\;

4. \For{non-leaf node $(G_b, S_b) \in V(\mathcal{T})$}{
Find a $k$-covering set of $S_b$ and let it be $S^k_b$.\;
\For{$x,y\in S^k_b$ such that $x\neq y$}{
\texttt{IsShortcut}$(x,y,b) = \textsf{\em True}$.\;
 
Let $\widehat{d}_b(x,y) = d_b(x,y) + \mathcal{N}(0,\sigma^2)$.\;
}
\If{$b\neq \emptyset$}{
Let $b'$ be the binary string that removes the last bit in $b$.\;
Find a $k$-covering set of $S_{b'}$ and let it be $S^k_{b'}$.\;
\For{$(x,y) \in S^k_{b'}\times S^k_{b}$}{
\If{$x,y\notin S^k_{b'} \cap S^k_b$}{
\texttt{IsShortcut}$(x,y,b') = \textsf{\em True}$.\;
 
Let $\widehat{d}_b(x,y) = d_b(x,y) + \mathcal{N}(0,\sigma^2)$.\;
}
}
}
}
5. \For{leaf node $(G_b,\text{-}) \in V(\mathcal{T})$}{
\For{$x,y\in V_b$ such that $x\neq y$}{
\texttt{IsShortcut}$(x,y,b) = \textsf{\em True}$.\;
 
Let $\widehat{d}_b(x,y) = d_b(x,y) + \mathcal{N}\left(0,\frac{2c^2\log(1.25/\delta')}{(\epsilon')^2}\right)$.\;
}
}
\Output{The binary tree $\mathcal{T}$, and $\widehat{d}_b(u,v)$ for all $u,v\in V$ and $b\in \{0,1\}^*$ such that $\texttt{IsShortcut}(x,y,b) = \textsf{\em True}$.}
\caption{Constructing private shortcuts via sub-sampling in separator.}\label{alg:shortcut_planar}
\end{algorithm2e}
\clearpage

\end{document}